\pgfplotsset{compat=1.5}
\newtheorem{theorem}{Theorem}[section]
\newtheorem{corollary}[theorem]{Corollary}
\newtheorem{lemma}[theorem]{Lemma}
\newtheorem{definition}[theorem]{Definition}
\newenvironment{proofof}[1]{\begin{trivlist} \item {\bf Proof
#1:~~}}
  {\qed\end{trivlist}}
\renewenvironment{proofof}[1]{\par\medskip\noindent{\bf Proof of #1: \ }}{\hfill$\Box$\par\medskip}
\newcommand{\namedref}[2]{\hyperref[#2]{#1~\ref*{#2}}}
\newcommand{\thmlab}[1]{\label{thm:#1}}
\newcommand{\thmref}[1]{\namedref{Theorem}{thm:#1}}
\newcommand{\lemlab}[1]{\label{lem:#1}}
\newcommand{\lemref}[1]{\namedref{Lemma}{lem:#1}}
\newcommand{\corlab}[1]{\label{cor:#1}}
\newcommand{\corref}[1]{\namedref{Corollary}{cor:#1}}
\newcommand{\seclab}[1]{\label{sec:#1}}
\newcommand{\secref}[1]{\namedref{Section}{sec:#1}}
\newcommand{\applab}[1]{\label{app:#1}}
\newcommand{\appref}[1]{\namedref{Appendix}{app:#1}}
\newcommand{\figlab}[1]{\label{fig:#1}}
\newcommand{\figref}[1]{\namedref{Figure}{fig:#1}}
\newcommand{\alglab}[1]{\label{alg:#1}}
\renewcommand{\algref}[1]{\namedref{Algorithm}{alg:#1}}
\def \T    {\mdef{\mathcal{T}}}
\def \R    {\mdef{\mathcal{R}}}
\def \A    {\mdef{\mathcal{A}}}
\def \cL    {\mdef{\mathcal{L}}}
\def \bA    {\mdef{\mathbf{A}}}
\def \bM    {\mdef{\mathbf{M}}}
\def \a    {\mdef{\mathbf{a}}}
\def \b    {\mdef{\mathbf{b}}}
\def \r    {\mdef{\mathbf{r}}}
\def \x    {\mdef{\mathbf{x}}}
\def \y    {\mdef{\mathbf{y}}}
\def \z    {\mdef{\mathbf{z}}}
\def \S    {\mdef{\mathcal{S}}}
\def \ams    {\mdef{\textsc{FreqEst}}}
\def \estimate    {\mdef{\textsc{Estimate}}}
\def \countsketch    {\mdef{\textsc{CountSketch}}}
\def \counter    {\mdef{\textsc{Counter}}}
\newcommand\norm[1]{\left\lVert#1\right\rVert}
\newcommand{\PPr}[1]{\ensuremath{\mathbf{Pr}\left[#1\right]}}
\newcommand{\Ex}[1]{\ensuremath{\mathbb{E}\left[#1\right]}}
\renewcommand{\O}[1]{\ensuremath{\mathcal{O}\left(#1\right)}}
\newcommand{\tO}[1]{\ensuremath{\tilde{\mathcal{O}}\left(#1\right)}}
\newcommand{\eps}{\epsilon}
\def \b    {\mdef{\mathfrak{b}}}
\newcommand{\mdef}[1]{{\ensuremath{#1}}\xspace}  
\DeclareMathOperator*{\polylog}{polylog}
\DeclareMathOperator*{\poly}{poly}
\DeclareMathOperator*{\mc}{mc}
\DeclareMathOperator*{\mmc}{mmc}
\DeclareMathOperator*{\M}{M}
\DeclareMathOperator*{\Span}{Span}
\newcommand{\ignore}[1]{}
\newif\ifnotes\notestrue 
\newcommand{\samson}[1]{\textcolor{purple}{{\bf (Samson:} {#1}{\bf ) }} \marginpar{\tiny\bf
             \begin{minipage}[t]{0.5in}
               \raggedright S:
            \end{minipage}}}            							
\newcommand{\samson}[1]{}
\renewcommand*{\@fnsymbol}[1]{\textcolor{mahogany}{\ensuremath{\ifcase#1\or *\or \dagger\or \ddagger\or
 \mathsection\or \triangledown\or \mathparagraph\or \|\or **\or \dagger\dagger
   \or \ddagger\ddagger \else\@ctrerr\fi}}}
\providecommand{\email}[1]{\href{mailto:#1}{\nolinkurl{#1}\xspace}}
\definecolor{mahogany}{rgb}{0.75, 0.25, 0.0}
\definecolor{darkblue}{rgb}{0.0, 0.0, 0.55}
\definecolor{darkpastelgreen}{rgb}{0.01, 0.75, 0.24}
\definecolor{darkgreen}{rgb}{0.0, 0.2, 0.13}
\definecolor{darkgoldenrod}{rgb}{0.72, 0.53, 0.04}
\definecolor{darkred}{rgb}{0.55, 0.0, 0.0}
\title{Symmetric Norm Estimation and Regression \\on Sliding Windows}
\author{
Vladimir Braverman\thanks{Johns Hopkins University. 
E-mail: \email{vova@cs.jhu.edu}.}\\
\and
Viska Wei\thanks{Johns Hopkins University. 
E-mail: \email{swei20@jhu.edu}}
\and
Samson Zhou\thanks{Carnegie Mellon University. 
E-mail: \email{samsonzhou@gmail.com}}
}
\begin{document}
\maketitle

\begin{abstract}
The sliding window model generalizes the standard streaming model and often performs better in applications where recent data is more important or more accurate than data that arrived prior to a certain time. We study the problem of approximating symmetric norms (a norm on $\mathbb{R}^n$ that is invariant under sign-flips and coordinate-wise permutations) in the sliding window model, where only the $W$ most recent updates define the underlying frequency vector. Whereas standard norm estimation algorithms for sliding windows rely on the smooth histogram framework of Braverman and Ostrovsky (FOCS 2007), analyzing the \emph{smoothness} of general symmetric norms seems to be a challenging obstacle. Instead, we observe that the symmetric norm streaming algorithm of Braverman \emph{et al.} (STOC 2017) can be reduced to identifying and approximating the frequency of heavy-hitters in a number of substreams. We introduce a heavy-hitter algorithm that gives a $(1+\epsilon)$-approximation to each of the reported frequencies in the sliding window model, thus obtaining the first algorithm for general symmetric norm estimation in the sliding window model. Our algorithm is a universal sketch that simultaneously approximates all symmetric norms in a parametrizable class and also improves upon the smooth histogram framework for estimating $L_p$ norms, for a range of large $p$. Finally, we consider the problem of overconstrained linear regression problem in the case that loss function that is an Orlicz norm, a symmetric norm that can be interpreted as a scale-invariant version of $M$-estimators. We give the first sublinear space algorithms that produce $(1+\eps)$-approximate solutions to the linear regression problem for loss functions that are Orlicz norms in both the streaming and sliding window models. 
\end{abstract}

\section{Introduction}
The efficient estimation of norms is a fundamental problem in the \emph{streaming model}, which  implicitly defines an underlying frequency vector through a series of sequential updates to coordinates of the vector, but each update may only be observed once. 
For example, the $L_2$ and entropy norms are frequently used to detect network anomalies~\cite{KrishnamurthySZC03,ThorupZ04,ChakrabartiBM06}, while the $L_1$ norm is used to monitor network traffic~\cite{FeigenbaumKSV02} and perform low-rank approximation and linear regression~\cite{FeldmanMSW10}, and the top-$k$ and Ky Fan norms are commonly used in matrix optimization problems~\cite{WuDST14}. 
These norms all have the property that they are invariant to permutations and sign flips of the coordinates of the underlying vectors:
\begin{definition}[Symmetric norm]
A norm $\ell:\mathbb{R}^n\to\mathbb{R}$ is a \emph{symmetric norm} if for all $x\in\mathbb{R}^n$ and any $n\times n$ permutation matrix $P$, we have $\ell(x)=\ell(Px)$ and $\ell(x)=\ell(|x|)$, where $|x|$ is the coordinate-wise absolute value of $x$. 
\end{definition}
Symmetric norms include the $L_p$, entropy, top-$k$, $k$-support, and box norms, and many other examples that we detail in \secref{sec:apps}. 
Braverman \emph{et al.}~\cite{BlasiokBCKY17} show that a symmetric norm $\ell$ can be approximated using space roughly $\mmc(\ell)^2$, where $\mmc$ is the \emph{maximum modulus of concentration} of the norm $\ell$, whose formal definition we will defer to \secref{sec:apps}. 
Informally, $\mmc(\ell)$ is roughly the ratio of the maximum value $\ell$ achieves on a unit ball compared to the meidan value of $\ell$ on the unit ball. 

\paragraph{Sliding window model.} 
Unfortunately, the streaming model does not prioritize recent data that is considered more accurate and important than data that arrived prior to a certain time. 
Thus for a number of time-sensitive applications~\cite{BabcockBDMW02,MankuM12,PapapetrouGD15,WeiLLSDW16}, the streaming model has inferior performance compared to the \emph{sliding window model}, in which the underlying dataset consists of only the $W$ most recent updates in the stream. 
The fixed parameter $W>0$ represents the window size for the active data and the goal is to process information about the dataset using space sublinear in $W$. 
Note that the sliding window model is a generalization of the streaming model, e.g., when the stream length $m$ is at most $W$. 
The sliding window model is especially relevant in time-dependent settings such as network monitoring~\cite{CormodeM05a,CormodeG08,Cormode13}, event detection in social media~\cite{OsborneEtAl2014}, data summarization~\cite{ChenNZ16,EpastoLVZ17}, and has been also studied in a number of additional settings~\cite{LeeT06,LeeT06b,BravermanO07,DatarM07,BravermanOZ12,BravermanLLM15,BravermanLLM16,BravermanGLWZ18,BorassiELVZ19,BravermanDMMUWZ20,WoodruffZ20,BorassiELVZ20,JayaramWZ22}.

\paragraph{Problem statement.} 
Formally, the model is as follows. 
Given a symmetric norm $\ell:\mathbb{R}^n\to\mathbb{R}$, we receive updates $u_1,\ldots,u_m$ to the coordinates of an underlying frequency vector $f$. 
Each update with $i\in[m]$ satisfies $u_i\in[n]$ so that the $i$-th update effectively increments the $u_i$-th coordinate of $f$. 
However, in the sliding window model, only the last $W$ updates define $f$ so that for each $j\in[n]$, we have $f_j=|\{i\,:\,u_i=j, i\ge m-W+1\}|$. 
The goal is to approximate $\ell(f)$ at the end of the stream, but $m$ is not given in advance so we cannot simply maintain a sketch of the last $W$ elements because we do not know the value of $m-W+1$ a priori. 

The main challenge of the sliding window model is that updates to $f$ expire implicitly. 
Thus we cannot apply linear sketching techniques, which forms the backbone of many streaming algorithms. 
For example, we do not know that the update $u_{m-W}$ does not affect the value of $f$ until the very last update. 
Thus if we maintain a sketch of the updates that includes $u_{m-W}$, we must ``undo'' the inclusion of $u_{m-W}$ at time $m$; however at that time, it may be too late to remember the value of $u_{m-W}$. 

\subsection{Our Results}
In this paper, we give the first generic framework that can approximate any symmetric norm of an underlying frequency vector in the sliding window model. 
\begin{theorem}
\thmlab{thm:sym:norm}
Given an accuracy parameter $\eps>0$ and a symmetric norm $\ell$, there exists a sliding window algorithm that outputs a $(1+\eps)$-approximation to the $\ell$-norm of the underlying frequency vector with probability $\frac{2}{3}$ and uses space $\mmc(\ell)^2\cdot\poly\left(\frac{1}{\eps},\log n\right)$. 
\end{theorem}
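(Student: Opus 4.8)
The plan is to reduce the problem to a sliding-window version of the heavy-hitter primitive that underlies the streaming algorithm of Braverman \emph{et al.}~\cite{BlasiokBCKY17}, and to implement that primitive by layering \countsketch-style sketches on top of a smooth histogram for the second frequency moment. Recall that~\cite{BlasiokBCKY17} shows that a $(1+\eps)$-approximation to $\ell(f)$ can be computed from a $(1+\eps)$-approximation to the \emph{level-set profile} of $f$ --- the vector whose $j$-th entry counts the coordinates of $f$ with absolute value in $[(1+\eps)^j,(1+\eps)^{j+1})$ --- once the contributions of the unimportant levels are discarded, the number of relevant levels being $\poly(\eps^{-1},\log n)$ and the per-level precision needed being governed by $\mmc(\ell)$. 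Their key structural step is that this profile is in turn recoverable from the outputs of $O(\log n)$ independent instances of the following primitive: subsample the coordinate universe $[n]$ at a fixed geometric rate $2^{-k}$ (using a hash function fixed once at the start, independent of the stream, so that the induced substream respects the sliding window), and on the resulting substream report every $L_2$-heavy hitter together with a $(1+\eps)$-approximation of its frequency. The relevant heaviness threshold there is $\phi=\Theta\!\left(\mmc(\ell)^{-2}\cdot\poly(\eps/\log n)\right)$ --- this is exactly where the $\mmc(\ell)^2$ factor in the space bound comes from --- so it suffices to solve this \emph{approximate heavy-hitter} problem in the sliding-window model with space $\phi^{-1}\cdot\poly(\eps^{-1},\log n)$ per instance.

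\paragraph{The sliding-window approximate heavy-hitter primitive.}
For this I would run a smooth histogram in the sense of Braverman and Ostrovsky~\cite{BravermanO07} for $F_2=\norm{f}_2^2$, which is smooth and hence admits one in space $\poly(\eps^{-1},\log n)$, and store inside each of its buckets a \countsketch with shared hash functions, tuned so that on the substream restricted to that bucket it returns each coordinate of $L_2^2$-mass at least $\phi$ --- equivalently, frequency at least $\sqrt{\phi\,F_2}$ --- with additive error $\eps\sqrt{\phi\,F_2}$, hence with multiplicative error $1+\eps$. At query time, select the bucket whose starting time is the largest checkpoint not exceeding the window start $m-W+1$; this bucket contains the window as a suffix, its $F_2$ value $(1+\eps)$-approximates the window's $F_2$, every window-heavy coordinate is also heavy enough in the (only larger) bucket to be reported, and --- crucially --- the count of any window-heavy coordinate measured over this bucket already $(1+\eps)$-approximates its true window count. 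Writing $f^b_x$ and $f^w_x$ for the count of $x$ in the bucket and in the window, $e_x:=f^b_x-f^w_x\ge 0$, and $\beta$ for the smoothness parameter, the last claim follows from a short convexity argument: since every coordinate's count is weakly larger over the bucket,
\[
\beta\cdot F_2(\text{window})\ \geq\ F_2(\text{bucket})-F_2(\text{window})\ \geq\ (f^b_x)^2-(f^w_x)^2\ \geq\ e_x\cdot f^w_x ,
\]
so $e_x\le\beta\sqrt{F_2(\text{window})/\phi}$ and $e_x/f^w_x\le\beta/\phi$, which is at most $\eps$ once $\beta\le\eps\phi$ (all of this up to the smooth histogram's own $(1+\eps)$ error, absorbed into the $\poly(\eps)$ factors). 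Composing with the \countsketch error and discarding every candidate whose estimate falls below a slightly lowered threshold yields the desired primitive.

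\paragraph{Assembling the algorithm and accounting for space.}
Running this primitive on each of the $O(\log n)$ subsampling levels and feeding the recovered approximate heavy hitters into the reconstruction of~\cite{BlasiokBCKY17} produces a $(1+\eps)$-approximation to $\ell(f)$; correctness is inherited because that reconstruction uses the heavy-hitter outputs only up to $(1+\eps)$ factors and already tolerates the ``gray zone'' of coordinates near the heaviness threshold, provided each primitive is run at a mildly more generous threshold. The delicate point is the space: the naive bound multiplies the number of smooth-histogram buckets, $\Theta(\beta^{-1}\log(nW))=\Theta(\eps^{-1}\phi^{-1}\log(nW))=\mmc(\ell)^2\cdot\poly(\eps^{-1},\log n)$, by the per-bucket \countsketch size $\Theta(\phi^{-1}\poly(\eps^{-1},\log n))=\mmc(\ell)^2\cdot\poly(\eps^{-1},\log n)$, which costs $\mmc(\ell)^4$. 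To recover the claimed $\mmc(\ell)^2$ I would decouple the two roles of the smooth histogram: keep a \emph{coarse} smooth histogram (smoothness $\Theta(\eps)$, hence only $\poly(\eps^{-1},\log n)$ buckets) to track $F_2$ and locate the window start, and correct the residual discrepancy between a bucket's count and the window's count with a nested $O(\log W)$-level hierarchy of the \countsketch es (together with the standard device of retaining the substream exactly while it is short enough), so that the $\mmc(\ell)^2$ factor enters only once, through a single \countsketch width, and everything else is $\poly(\eps^{-1},\log n)$.

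\paragraph{Main obstacle.}
The conceptual heart --- and the step I expect to require the most care --- is the sliding-window approximate heavy-hitter primitive: establishing that a window-heavy coordinate's count over the enclosing smooth-histogram bucket is a $(1+\eps)$-approximation of its window count (the convexity inequality above and the calibration $\beta\le\eps\phi$), and then organizing the construction so that the total space stays $\mmc(\ell)^2\cdot\poly(\eps^{-1},\log n)$ rather than a higher power of $\mmc(\ell)$; the nested-histogram accounting and the short-window base case are the technical bulk. A secondary obstacle is checking that the reconstruction of~\cite{BlasiokBCKY17} is robust to the two-sided, near-threshold approximation this primitive provides, which is handled by the threshold slack above and a union bound over the $O(\log n)$ sampling levels and the $O(\log n)$ rows of each \countsketch, boosting each component's failure probability before taking the final $\tfrac23$.
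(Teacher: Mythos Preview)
Your reduction to a $(1+\eps)$-approximate heavy-hitter primitive via~\cite{BlasiokBCKY17} matches the paper, and your convexity inequality $e_x/f^w_x\le\beta/\phi$ is correct. But the step you flag as the main obstacle is a genuine gap: your proposed fix for the $\mmc(\ell)^4$ blowup does not work as stated. With a coarse smooth histogram of smoothness $\Theta(\eps)$ the enclosing bucket satisfies $F_2(\text{bucket})\le(1+\eps)F_2(\text{window})$, but this says nothing about individual coordinates: the $\eps\cdot F_2(\text{window})$ excess mass before the window start can sit entirely on one coordinate $x$, giving $e_x\approx\sqrt{\eps\,F_2}$, which dwarfs $\eps\cdot f^w_x\approx\eps\eta\sqrt{F_2}$ whenever $\eta\ll\sqrt{\eps}$ --- and here $\eta=\Theta(\mmc(\ell)^{-1}\cdot\poly(\eps/\log n))$. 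A nested $O(\log W)$ hierarchy of \countsketch es does not repair this: it lets you decompose the bucket into pieces, but you have no way to align a piece boundary with the unknown window start to within the required $\eps\eta\sqrt{F_2}$ tolerance without reintroducing $\Theta((\eps\phi)^{-1})=\Theta(\mmc(\ell)^2\cdot\poly)$ many checkpoints, which is exactly the factor you were trying to remove.

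The paper's route around this is conceptually different from your convexity argument and is the idea your proposal is missing. It keeps only a \emph{constant-factor} smooth histogram for $L_2$ (so $O(\log n)$ timestamps, independent of $\mmc(\ell)$ and of $\eps$), runs \countsketch at each timestamp with the threshold \emph{lowered by a factor of $\nu$} to $\Theta(\nu\eta)$, and --- this is the key device --- the moment a coordinate $x$ is first reported by some \countsketch, it launches a separate deterministic per-item counter (\lemref{lem:counter}) that tracks the frequency of $x$ to $(1+\nu/4)$ relative accuracy over every subsequent suffix. Because $x$ is reported as soon as $\Theta(\nu\eta)\|f\|_2$ instances have arrived, at most a $\Theta(\nu)$ fraction of any true $\eta$-heavy hitter's window mass is missed before tracking begins; the counter handles the rest. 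The only place $\eta^{-2}\sim\mmc(\ell)^2$ enters is the width of each \countsketch and the number of items it can report, yielding $O(\nu^{-3}\eta^{-2}\log^3 n)$ per instance and $\mmc(\ell)^2\cdot\poly(\eps^{-1},\log n)$ overall. In short: instead of making the smooth histogram fine enough to control per-coordinate drift (which forces $\beta\le\eps\phi$ and costs $\mmc^2$ in the bucket count), the paper detects early at threshold $\nu\eta$ and then tracks each candidate exactly.
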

Our framework has specific implications to the well-studied $L_p$ norms and the top-$k$ norm that is used in matrix optimization, as well as the $k$-support, box, and more generally, $Q'$-norms that are frequently used to regularize sparse recovery problems in machine learning. 
We summarize these applications in \figref{fig:results} and provide additional detail on these norms in \secref{sec:apps}. 
\begin{figure*}[!htb]
\begin{center}
{\tabulinesep=1.2mm
\begin{tabu}{|c|c|c|}\hline
Problem & Space Complexity & Reference \\\hline\hline
Symmetric norm $\ell$ & $\mmc(\ell)^2\cdot\poly\left(\frac{1}{\eps},\log n\right)$ & \thmref{thm:sym:norm} \\\hline
$L_p$ norm, $p\in[1,2]$ & $\poly\left(\frac{1}{\eps},\log n\right)$ & \corref{cor:qnorm} \\\hline
$L_p$ norm, $p>2$ & $\poly\left(\frac{1}{\eps},\log n\right)\cdot n^{1-2/p}$ & \corref{cor:lp} \\\hline
$k$-support norm & $\poly\left(\frac{1}{\eps},\log n\right)$ & \corref{cor:qnorm} \\\hline
$Q'$ norm & $\poly\left(\frac{1}{\eps},\log n\right)$ & \corref{cor:qnorm} \\\hline
Box norm & $\poly\left(\frac{1}{\eps},\log n\right)$ & \corref{cor:qnorm} \\\hline
Top-$k$ norm & $\frac{n}{k}\cdot\poly\left(\frac{1}{\eps},\log n\right)$ & \corref{cor:topk} \\\hline
\end{tabu}
}
\end{center}
\vspace{-0.1in}
\caption{Summary of our sliding window algorithms}
\figlab{fig:results}
\end{figure*}
In particular for sufficiently large $p>2$, our $L_p$ norm sliding window algorithm improves upon the $\tO{\frac{1}{\eps^{p+2}}n^{1-2/p}}$ space algorithm by \cite{BravermanO07}.  
Our framework not only uses near-optimal space complexity for these applications, but is also a \emph{universal sketch} that suffices to simultaneously approximate all symmetric norms in a wide parametrizable class.  
\begin{theorem}
\thmlab{thm:sym:universal}
Given an accuracy parameter $\eps>0$ and a space parameter $S$, there exists a sliding window algorithm that uses space $S\cdot\poly\left(\frac{1}{\eps},\log n\right)$ and outputs a $(1+\eps)$-approximation to \emph{any} symmetric norm $\ell$ with $\mmc(\ell)\le\sqrt{S}$, with probability $\frac{2}{3}$.  
\end{theorem}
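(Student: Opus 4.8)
The plan is to observe that the sliding window algorithm underlying \thmref{thm:sym:norm} is, after a single substitution of parameters, already a \emph{universal} sketch: the data structure it maintains depends on the norm $\ell$ only through the scalar $\mmc(\ell)$, which merely controls how many heavy hitters are tracked in each subsampled substream, and no other feature of $\ell$ enters until the final estimation step. Concretely, recall that the algorithm of \thmref{thm:sym:norm} partitions the coordinates of the frequency vector $f$ into geometric level sets and, for each of $\O{\log n}$ geometric subsampling rates, maintains a sliding window heavy-hitter structure that reports $(1+\eps)$-approximations to the frequencies of every coordinate that is heavy in the corresponding substream. From these reports one recovers $(1+\eps)$-approximations to the sizes of all level sets that are relevant for $\ell$, and the estimation procedure of Braverman \etal~then combines these sizes into a $(1+\eps)$-approximation of $\ell(f)$. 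The number of heavy hitters tracked per substream, hence the total space, is $\mmc(\ell)^2\cdot\poly\left(\frac{1}{\eps},\log n\right)$.

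First I would run exactly this construction, but with the heavy-hitter budget set from the given space parameter $S$ rather than from any particular norm, i.e., tracking $S\cdot\poly\left(\frac{1}{\eps},\log n\right)$ heavy hitters per substream. This uses the claimed space $S\cdot\poly\left(\frac{1}{\eps},\log n\right)$, and the resulting structure is oblivious to $\ell$. Next, at query time and given any symmetric norm $\ell$ with $\mmc(\ell)\le\sqrt S$, I would argue that a \emph{single} successful run of this structure already yields the estimates of level-set sizes required by the analysis of \thmref{thm:sym:norm} for this particular $\ell$: since $\mmc(\ell)^2\le S$, the set of coordinates one must recover in order to estimate $\ell(f)$ is contained in the set that the $S$-budget structure is guaranteed to report accurately, so the inputs to the estimation routine are available verbatim. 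Feeding them into that routine with the norm $\ell$ then produces the $(1+\eps)$-approximation, and the failure probability is inherited from the single underlying structure, namely at most $\frac13$.

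The step I expect to be the main obstacle is decoupling the randomness from the choice of norm, so that the $\frac13$ failure probability does not have to be paid per query (or, worse, union-bounded over the infinite class of norms). For this one must verify that the guarantee of the sliding window heavy-hitter structure is of the form ``with probability $\ge\frac23$, \emph{all} reported frequencies are simultaneously $(1+\eps)$-accurate'' --- a per-run statement --- and that, conditioned on this event, the recovered level-set sizes are accurate for \emph{every} $\ell$ with $\mmc(\ell)\le\sqrt S$ at once, since the downstream estimation is deterministic given these sizes. Both facts should follow by inspecting the proof of \thmref{thm:sym:norm}: the heavy-hitter accuracy there is already argued as a single high-probability event over the substreams, and the notion of ``relevant level set'' used in that analysis is monotone in the tracking budget, so the $S$-budget run dominates every instance with $\mmc(\ell)\le\sqrt S$. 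Once these two points are checked, the space bound and the reduction itself are immediate, as they amount only to the parameter substitution $\mmc(\ell)^2\mapsto S$.
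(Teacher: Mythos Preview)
Your proposal is correct and is essentially the paper's own argument: the paper simply remarks that ``the proof of \thmref{thm:sym:universal} is identical'' to that of \thmref{thm:sym:norm}, meaning exactly the parameter substitution $\mmc(\ell)^2\mapsto S$ (equivalently, setting the heavy-hitter threshold $\eta$ using $\sqrt{S}$ in place of $\mmc(\ell)$) that you describe. Your additional care about decoupling the randomness from the choice of norm is well-placed and resolves correctly: the subsampling into the sets $S_{i,r}$ and the success of the heavy-hitter subroutine are norm-independent events, and conditioned on them the $\estimate$ routine of \lemref{lem:hh:to:out} is deterministic, so one success suffices for every $\ell$ with $\mmc(\ell)\le\sqrt{S}$ via the monotonicity you identify (smaller $\eta$ and smaller importance parameter $\beta$ only enlarge the set of reported heavy hitters and the set of levels whose sizes are accurately estimated).
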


The general approach to sliding window algorithms is to use the smooth histogram framework by Braverman and Ostrovsky~\cite{BravermanO07}. 
The smooth histogram framework requires the desired objective to be smooth, where given adjacent substreams $A$, $B$, and $C$, a smooth function states that $(1-\eta)f(A\cup B)\le f(B)$ implies $(1-\eps)f(A\cup B\cup C)\le f(B\cup C)$ for some constants $0<\eta\le\eps<1$. 
Intuitively, once a suffix of a data stream becomes a $(1\pm\eta)$-approximation for a smooth function, then it is \emph{always} a $(1\pm\eps)$-approximation, regardless of the subsequent updates that arrive in the stream. 
Since the resulting space complexity depends on $\eta$, this approach requires analyzing the smoothness of each symmetric norm and it is not clear how these parameters relate to $\mmc(\ell)$ or whether there is a general parametrization for each norm.

Instead, we observe that~\cite{BlasiokBCKY17} effectively reduces the problem to computing a $(1+\nu)$-approximation to the frequency of all $\eta$-heavy hitters for a number of various substreams. 
\begin{definition}[$\nu$-approximate $\eta$-heavy hitters]
Given any accuracy parameter $\nu$, a threshold parameter $\eta$, and a frequency vector $f$, an algorithm $\mathcal{A}$ is said to solve the $\nu$-approximate $\eta$-heavy hitters problem if it outputs a set $H$ and a set of approximations $\widehat{f_i}$ for all $i\in H$ such that: 
\begin{enumerate}
\item
If $f_i\ge\eta\norm{f}_2$ for any $i\in[n]$, then $i\in H$. 
That is, $H$ contains all $\eta$-heavy hitters of $f$. 
\item
There exists an absolute constant $C>0$ so that if $f_i\le\frac{C\eta}{2}\norm{f}_2$ for any $i\in[n]$, then $i\notin H$. 
That is, $H$ does not contain any item that is not an $\frac{C\eta}{2}$-heavy hitter of $f$. 
\item
If $i\in H$, then $\mathcal{A}$ reports a value $\widehat{f_i}$ such that $(1-\nu)f_i\le\widehat{f_i}\le(1+\nu)f_i$. 
That is, $\mathcal{A}$ outputs a $(1\pm\nu)$-approximation to the frequency $f_i$, for all $i\in H$. 
\end{enumerate}
\end{definition}
Thus to approximate a symmetric norm on the active elements, it suffices to find $\nu$-approximate $\eta$-heavy hitters for a number of substreams. 
Whereas the sliding window heavy-hitter algorithms~\cite{BravermanGO14,BravermanGLWZ18} optimize for space complexity and only output constant factor approximations to the frequencies of the reported elements, we give a simple modification to their ideas to output $\nu$-approximate $\eta$-heavy hitters. 

\begin{theorem}
\thmlab{thm:hh}
Let $f$ be a frequency vector on $[n]$ induced by the active window of an insertion-only data stream. 
For any accuracy parameter $\nu\in\left(0,\frac{1}{4}\right)$ and threshold $\eta\in(0,1)$, there exists a one-pass streaming algorithm that outputs a list that includes all $\eta$-heavy hitters and no element that is not a $\frac{\eta}{8}$-heavy hitter. 
Moreover, the algorithm reports a $(1+\nu)$-approximation to the frequency $f_i$ of all reported items $i$. 
The algorithm uses $\O{\frac{1}{\nu^3\eta^2}\log^3 n}$ bits of space and succeeds with high probability. 
\end{theorem}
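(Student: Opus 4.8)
The starting point is the sliding-window heavy-hitter machinery of~\cite{BravermanGO14,BravermanGLWZ18}: maintain a smooth histogram with respect to $\norm{f}_2^2$, that is, a collection of $\poly(1/\nu,\log n)$ \emph{checkpoints} $t_1<t_2<\dots<t_\lambda\le m$, where with each $t_j$ we associate a streaming instance run on the suffix of updates from time $t_j$ onward; this instance both estimates $\norm{\cdot}_2$ of its suffix and runs a \countsketch-style $L_2$-heavy-hitter sub-sketch together with a heap holding the $\O{1/\eta^2}$ heaviest coordinates of that suffix (the $L_2$-heaviness being handled by the usual geometric subsampling inside the sub-sketch). Because the stream is insertion-only, $\norm{f}_2^2$ and every coordinate count are monotone non-decreasing, so the checkpoints can be pruned as in~\cite{BravermanO07} to keep $\lambda=\poly(1/\nu,\log n)$. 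The \emph{simple modification} is twofold: first, run each sub-sketch with a refined accuracy parameter $\Theta(\nu\eta)$ rather than $\Theta(\eta)$; second, attach to every coordinate currently present in some heap an exponential-histogram structure that records the timestamps of that coordinate's occurrences at geometric $(1+\nu)$ granularity, so that — again exploiting monotonicity — it reports ``the number of occurrences of $i$ with timestamp in $[m-W+1,m]$'' up to a $(1+\nu)$ factor, for a window size $W$ revealed only at query time, using only $\tO{1/\nu}$ bits per tracked coordinate.

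\paragraph{Correctness.}
At query time, locate the smallest checkpoint suffix $S$ containing the active window $[m-W+1,m]$. The smooth-histogram invariant forces $S$ to overshoot the window only by a prefix that carries a small fraction of the window's $\norm{\cdot}_2^2$ mass, so $S$'s norm estimate is within a constant factor of $\norm{f}_2$ over the window, and every $\eta$-heavy hitter of the window is an $\Omega(\eta)$-heavy hitter of $S$ and hence lies in $S$'s heap. Thresholding the sub-sketch estimates against this norm estimate, with the threshold tuned to absorb the constant overshoot factor, the constant slack of the \countsketch guarantee, and the $\Theta(\nu\eta)\norm{f}_2$ additive error, one reports every $\eta$-heavy hitter of the window and no coordinate that fails to be a $\frac{\eta}{8}$-heavy hitter of the window. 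For the reported frequency of a surviving coordinate $i$, we do \emph{not} read the sub-sketch of $S$ directly — that would fold in $f_i$ of the overshoot prefix, an error of order $\sqrt{\nu}\cdot\norm{f}_2$, far too large compared with $\nu f_i$ — but instead query $i$'s exponential-histogram structure restricted to $[m-W+1,m]$, which equals $f_i$ over the true window up to a $(1+\nu)$ factor.

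\paragraph{Main obstacle and wrap-up.}
The crux is exactly the passage from a constant-factor to a $(1+\nu)$-multiplicative guarantee on the reported frequencies in the sliding-window model: because the enclosing checkpoint suffix $S$ properly contains the window, no object that sees only $S$ suffices, so the count must be read off the per-coordinate timestamp structure, and the burden of the argument is to show that this structure has tracked every window-heavy coordinate continuously since before the window began, so that no in-window occurrence is missed. This is where insertion-only monotonicity and the density with which the sub-sketch admits coordinates into its heap are used, if necessary by keeping a constant number of temporally overlapping copies of the timestamp structures so that at least one of them started early enough. Once correctness is established, set every randomized component to fail with probability $1/\poly(n)$ and union-bound over the $\poly(1/\nu,\log n)$ checkpoints and the $\O{1/\eta^2}$ tracked coordinates; the space is the number of checkpoints times the per-instance cost, which is dominated by the $\Theta(\nu\eta)$-accuracy sub-sketch ($\tO{1/(\nu\eta)^2}$ bits) together with $\O{1/\eta^2}$ timestamp structures of $\tO{1/\nu}$ bits each, and collecting the logarithmic factors gives $\O{\frac{1}{\nu^3\eta^2}\log^3 n}$.
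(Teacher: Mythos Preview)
Your approach is essentially the paper's: a constant-factor smooth histogram on $\norm{\cdot}_2$, a \countsketch at each checkpoint with refined threshold $\Theta(\nu\eta)$, and a per-item exponential-histogram/counter to recover $(1+\nu)$-accurate frequencies. The architecture and the identification of the ``main obstacle'' match.

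There is one concrete gap, though, and it is exactly at the point you flag as the crux. You keep a heap of only $\O{1/\eta^2}$ coordinates per checkpoint and attach the timestamp structure only to those. With that choice, a window-$\eta$-heavy item $i$ need not enter the heap until it has accumulated $\Omega(\eta)\norm{f_S}_2$ mass in the enclosing suffix $S$; by then a \emph{constant} fraction of its in-window occurrences may already have passed, and your timestamp structure has missed them. Your proposed fix---``a constant number of temporally overlapping copies of the timestamp structures''---does not help: if $i$ is not in the heap, no copy is tracking it, regardless of how many copies exist. The paper resolves this by tracking \emph{every} item reported by the $\Theta(\nu\eta)$-threshold \countsketch, i.e., up to $\O{1/(\nu\eta)^2}$ items per checkpoint. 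Then a window-$\eta$-heavy $i$ (with $f_i\ge\eta\norm{f}_2$) is detected once only $\O{\nu\eta}\norm{f}_2\le\O{\nu}f_i$ of its occurrences have arrived, so the counter misses at most an $\O{\nu}$ fraction of $f_i$; combined with the counter's own $(1+\O{\nu})$ error this gives the $(1+\nu)$ guarantee. No overlapping copies are needed.

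Two smaller bookkeeping points follow from this. First, the number of checkpoints is $\O{\log n}$, not $\poly(1/\nu,\log n)$: the smooth histogram is run at a fixed constant ratio (the paper uses $17/16$), since only a $2$-approximation to $\norm{f}_2$ is needed for thresholding. Second, with the corrected heap size the space is $\O{\log n}$ checkpoints times $\bigl[\O{\frac{1}{\nu^2\eta^2}\log^2 n}$ for \countsketch plus $\O{\frac{1}{\nu^2\eta^2}}$ tracked items each with an $\O{\frac{1}{\nu}\log^2 n}$-bit counter$\bigr]$, which gives the stated $\O{\frac{1}{\nu^3\eta^2}\log^3 n}$. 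Your accounting with a $\O{1/\eta^2}$ heap does not produce the $1/\nu^3$ factor from the same source.
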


In summary, our main conceptual contribution is the existence of a $(1+\eps)$-approximation algorithm for general symmetric norms in the sliding window model. 
Our technical contributions include an overall framework that incorporates \emph{any} symmetric norm in a plug-and-play manner as well as a heavy-hitter subroutine that may be of independent interest. 
Finally, we perform a number of empirical evaluations comparing our algorithms to uniform sampling on large-scale real-world datasets. 

\paragraph{Independent and concurrent related work.} 
Independent of our work, \cite{KrauthgamerR19} has given a framework for subadditive functions that extends beyond the smooth histogram approach of \cite{BravermanO07}. 
In particular, their framework gives a $(2+\eps)$-approximation for symmetric norms in the sliding window model.  
By comparison, our algorithm achieves a $(1+\eps)$-approximation for symmetric norms on sliding windows. 
Their techniques are based on black-boxing the streaming algorithm of~\cite{BlasiokBCKY17} that approximates the symmetric norm and initializing various instances of the algorithm as the stream progresses. 
We open up the black box by instead introducing a new heavy-hitter algorithm in the sliding window model and using properties of heavy-hitters and level sets to enable a finer approximation to the symmetric norm, e.g., \cite{IndykW05,BravermanOR15,WoodruffZ18,WoodruffZ21}.  

\paragraph{Symmetric norm regression.} 
As a further application of our work, we consider the fundamental overconstrained linear regression problem in the case that loss function that is a symmetric norm, which includes many standard loss functions such as $L_p$ norms, top-$k$ norms, and $Q'$-norms. 
Specifically, given a data matrix $\bA\in\mathbb{R}^{n\times d}$ and a response vector $\b\in\mathbb{R}^n$ with $n\gg d$, we aim to minimize the optimization problem $\min_{\x\in\mathbb{R}^d}\cL(\bA\x-\b)$, where $\cL:\mathbb{R}^n\to\mathbb{R}$ is a loss function. 
When $\cL$ is a symmetric norm, then the loss function places emphasis on the magnitude of the incorrect coordinates rather than their specific indices. 
In particular, we consider the general case where $\cL$ is an Orlicz norm, which can be interpreted as a scale-invariant version of $M$-estimators. 
Embeddings for $(1+\eps)$-approximate solutions to the linear regression problem for loss functions that are Orlicz norms in the central model, where complete access to $\bA$ is given, was recently studied by \cite{AndoniLSZZ18,SongWYZZ19}.  
We give the first algorithms that produce $(1+\eps)$-approximate solutions to the linear regression problem for loss functions that are Orlicz norms in both the streaming and sliding window models. 
Our algorithms are parametrized by a constant $\Delta$, which represents the aspect ratio of the dataset under the norm. 
\begin{theorem}
Given an accuracy $\eps>0$ and a matrix $\bA\in\mathbb{R}^{W\times d}$ whose rows $\a_1,\ldots,\a_W$ arrive sequentially in a stream $\r_1,\ldots,\r_n$ with condition number at most $\kappa$, there exists both a streaming algorithm and a sliding window algorithm that outputs a $(1+\eps)$ embedding for an Orlicz norm with high probability. 
The algorithms sample $\frac{d^2\Delta}{\eps^2}\log\kappa\polylog n$ rows, with high probability. 
(See \thmref{thm:orlicz:se} and \thmref{thm:orlicz:se:sw}.)
\end{theorem}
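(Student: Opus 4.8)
The plan is to reduce both the streaming and the sliding window regression problems to \emph{online} row sampling with sampling probabilities proportional to Orlicz sensitivities, and then to pull the offline Orlicz subspace embedding guarantees of \cite{AndoniLSZZ18,SongWYZZ19} through these online reductions. Recall that for an Orlicz norm given by $\norm{x}_G=\inf\{t>0:\sum_i G(|x_i|/t)\le 1\}$, the standard device for regression is to append the response: it suffices to maintain a reweighted subset $S$ of the rows of $[\bA\mid-\b]\in\mathbb{R}^{W\times(d+1)}$ that $(1\pm\eps)$-preserves $\norm{[\bA\mid-\b]\y}_G$ simultaneously for all $\y$ in the $(d+1)$-dimensional column space, since then minimizing the sampled objective over $\y=(\x,1)$ yields a $(1+\eps)$-approximate regression solution. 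From \cite{AndoniLSZZ18,SongWYZZ19}, sampling $\tO{d^2\Delta\eps^{-2}}$ rows with probabilities proportional to a constant-factor overestimate of the Orlicz sensitivities of the rows produces such a two-sided embedding with high probability, where the aspect ratio $\Delta$ governs the spread of the norm over the row space.

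For the streaming model I would build the sample online. When row $\r_t$ arrives, let $\sigma_t$ be its online Orlicz sensitivity, i.e. its sensitivity with respect to the matrix of rows $\r_1,\dots,\r_t$ seen so far; compute an efficient overestimate $\widehat\sigma_t\ge\sigma_t$ using the currently retained rows as a surrogate together with a correction for the contribution of $\r_t$ itself, set $p_t=\min\!\left(1,\,C\widehat\sigma_t\eps^{-2}\polylog n\right)$, and retain a reweighted copy of $\r_t$ with probability $p_t$. Two ingredients are needed. First, a counting lemma that $\sum_t\sigma_t=\O{d\Delta\log\kappa}$, proved by a barrier/volume potential argument in which each row's sensitivity is charged against the increase of a log-determinant-type potential, with the condition-number bound $\kappa$ ensuring no single row can inflate the potential by more than $\O{\log\kappa}$ over the whole stream. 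Second, a concentration argument showing that sampling by online overestimates still yields a $(1\pm\eps)$-embedding; this follows the usual net-plus-Bernstein (or chaining) argument for $\ell_p$ and Orlicz embeddings, using the standard super-linear, sub-quadratic growth conditions on $G$ so the relevant moments are controlled, together with $\widehat\sigma_t\ge\sigma_t$ so that we only ever oversample. Combining the two, the expected number of retained rows is $\sum_t p_t=\tO{d^2\Delta\eps^{-2}\log\kappa}$, and this proves the streaming case.

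For the sliding window model I would run $\O{\log(W\kappa)}$ parallel copies of the streaming data structure, each initialized at a distinct timestamp, and maintain the invariant that among copies whose retained-sample sizes agree up to a constant factor we keep only the one with the earliest start time; the start times then form a geometric-like progression, so the overhead is a $\polylog$ factor in space, following the sliding window linear-algebra template (cf. \cite{BravermanGLWZ18}). To answer a query at time $n$, take the copy whose start time $t^\star$ is the largest one at most $n-W+1$; its retained sample covers all window rows $\r_{n-W+1},\dots,\r_n$ together with an expired prefix $\r_{t^\star},\dots,\r_{n-W}$, which we simply delete. The crucial point is that each surviving window row $\r_i$ was retained with probability governed by its online sensitivity with respect to $\r_{t^\star},\dots,\r_i$, and since deleting the rows $\r_{t^\star},\dots,\r_{n-W}$ can only \emph{increase} the Orlicz sensitivity of $\r_i$ within the window, the retained window rows form an \emph{oversample} relative to what the window requires; hence the offline sampling theorem still certifies a $(1\pm\eps)$-embedding of the window, and the reweighting coefficients remain small enough that the second moment stays controlled. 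The space is $\tO{d^2\Delta\eps^{-2}\log\kappa}$ per copy times $\polylog$ copies, matching the claimed bound.

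The main obstacle I anticipate is exactly this sliding window restriction step: reconciling the fact that rows were sampled and reweighted using sensitivities relative to an earlier start time $t^\star\le n-W+1$ rather than relative to the window itself, and verifying for Orlicz norms specifically that this discrepancy only ever helps (oversampling, never undersampling). This requires establishing monotonicity of Orlicz sensitivities under row deletion and a careful check that the surviving reweighting coefficients do not blow up the variance. A secondary technical point is proving the counting bound $\sum_t\sigma_t=\O{d\Delta\log\kappa}$ for general Orlicz norms rather than just for $\ell_2$, and it is there that the factors $\Delta$ and $\log\kappa$ in the sample complexity enter.
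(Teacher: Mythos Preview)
Your streaming sketch is close in spirit to the paper's but differs in two respects worth noting. First, the paper does not work with Orlicz sensitivities at all; it samples row $\a_i$ with probability proportional to $2\Delta$ times its \emph{online $L_1$ sensitivity} (computed against the current sampled matrix $\bM$), and then shows inside the proof that this quantity dominates $G(|\a_i^\top\x|)/\|\bA_i\x\|_G$ via the growth assumption on $G$. This sidesteps your ``secondary technical point'' entirely: the counting bound $\sum_i \ell_i = O(d\log n\log\kappa)$ for online $L_1$ sensitivities is quoted directly from \cite{BravermanDMMUWZ20}, and the factor $\Delta$ enters only as a multiplicative slack in the sampling probability, not through a new Orlicz potential argument. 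Second, because each $p_i$ is computed from the \emph{random} matrix $\bM_{i-1}$ of previously retained rows, the differences $X_i$ are not independent, so the paper uses Freedman's martingale inequality rather than a plain Bernstein bound; your ``net-plus-Bernstein'' line would need to be upgraded accordingly.

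Your sliding-window reduction, however, has a genuine gap: the monotonicity runs the wrong way. You correctly observe that deleting the expired prefix $\r_{t^\star},\dots,\r_{n-W}$ can only \emph{increase} the sensitivity of a surviving window row $\r_i$. But that means the probability $p_i$ you used---based on the \emph{smaller} sensitivity relative to $\r_{t^\star},\dots,\r_i$---is \emph{below} what the window alone would demand. You have undersampled, not oversampled, and the embedding guarantee does not transfer by the argument you give. The paper avoids this entirely: it observes that the inductive proof of correctness already shows the streaming algorithm is an \emph{online coreset} (a $(1+\eps)$-embedding for every prefix $\bA_j$, not just $\bA_n$), and then invokes the black-box online-coreset-to-sliding-window conversion of \cite{BravermanDMMUWZ20} (Theorem~2.5 here), paying only a $\polylog n$ overhead. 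So the sliding-window result is essentially a corollary of the prefix-uniform guarantee, not of any sensitivity-monotonicity argument.
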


\subsection{Preliminaries}
For any positive integer $n$, we use $[n]$ to denote the set $\{1,\ldots,n\}$. 
We say an event occurs with high probability, if the probability of the event occurring is $1-\frac{1}{\poly(n)}$, for any arbitrary polynomial that can be determined from altering constants. 
We use $\polylog(n)$ to suppress polylogarithmic factors. 
For a vector $f\in\mathbb{R}^n$, we use $f_i$ to denote the $i$-th coordinate of $f$. 
We use $\circ$ to denote the vertical concatenation of rows, so that for row vectors $\a_1,\a_2\in\mathbb{R}^d$, we have $\a_1\circ\a_2=\begin{bmatrix}\a_1\\\a_2\end{bmatrix}$. 
The condition number of a matrix $\bA\in\mathbb{R}^{n\times d}$ is the ratio of its largest singular value to its smallest nonzero singular value. 
The condition number of a stream $\a_1\circ\a_2\circ\ldots\circ\a_m$ is the largest condition number of any matrix $\a_i\circ\ldots\circ\a_j$ formed by a consecutive number of rows. 

In the sliding window model, we have a stream of length $m$, where we assume $m=\poly(n)$. 
For each $i\in[m]$, the update $u_i\in[n]$ (if active) corresponds to a single increment to coordinate $u_i$ of the underlying frequency vector of dimension $n$. 
For a window parameter $W>0$, only the most recent $W$ updates define the underlying frequency vector, as the previous updates are expired. 

\begin{definition}[$L_p$ norms]
For a vector $f\in\mathbb{R}^n$ and $p>0$, we have the $L_p$ norm $\norm{f}_p=\left(\sum_{i=1}^n f_i^p\right)^{1/p}$. 
For $p=0$, $L_0$ is not a norm, but nevertheless we define $\norm{f}_0=\left|\{i\in[n]\,:\,f_i\neq 0\}\right|$. 
\end{definition}
\noindent
We require the following streaming and sliding window algorithms. 
\begin{theorem}[CountSketch for heavy-hitters]
\cite{CharikarCF04}
\thmlab{thm:countsketch}
Let $f$ be an underlying frequency vector on $[n]$ implicitly defined on through a dynamic (or insertion-only) stream. 
There exists a one-pass streaming algorithm $\countsketch$ that takes a threshold parameter $\nu>0$ and outputs a list $H$ that contains all indices $i\in[n]$ with $f_i\ge\nu\cdot\norm{f}_2$ and no index $j\in[n]$ with $f_j\le\frac{\nu}{2}\cdot\norm{f}_2$. 
The algorithm uses $\O{\frac{1}{\nu^2}\log^2 n}$ bits of space and succeeds with high probability. 
\end{theorem}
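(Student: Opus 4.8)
The plan is to instantiate the classical \countsketch\ data structure of \cite{CharikarCF04} and verify its $\ell_2$ heavy-hitter guarantee. I would maintain a table $C$ of $d\times w$ counters with $w=\Theta\!\left(\frac1{\nu^2}\right)$ columns and $d=\Theta(\log n)$ rows, where each row $r\in[d]$ is equipped with an $O(1)$-wise independent hash function $h_r\colon[n]\to[w]$ and an $O(1)$-wise independent sign function $s_r\colon[n]\to\{-1,+1\}$. On an update of magnitude $\delta$ to coordinate $i$ (with $\delta=+1$ in the insertion-only case and $\delta\in\mathbb{Z}$ in the dynamic case), I add $\delta\cdot s_r(i)$ to $C[r,h_r(i)]$ for every $r$; since this is a linear sketch, both update models are handled identically. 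At query time the estimator for coordinate $i$ is $\widehat{f_i}=\operatorname{median}_{r\in[d]}s_r(i)\,C[r,h_r(i)]$.

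The core claim is that, with high probability, $\abs{\widehat{f_i}-f_i}\le\frac{\nu}{8}\norm{f}_2$ simultaneously for all $i\in[n]$. To see this, fix $i$ and a row $r$ and write $X_{r,i}=s_r(i)\,C[r,h_r(i)]=f_i+\sum_{j\ne i}s_r(i)s_r(j)f_j\,\mathbf{1}[h_r(j)=h_r(i)]$. Zero-mean pairwise independence of the signs gives $\Ex{X_{r,i}}=f_i$ since every cross term vanishes, while pairwise independence of the signs and of $h_r$ gives $\operatorname{Var}(X_{r,i})\le\frac1w\sum_{j\ne i}f_j^2\le\frac{\norm{f}_2^2}{w}$. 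Choosing $w\ge c/\nu^2$ for a suitable constant $c$, Chebyshev's inequality makes $\abs{X_{r,i}-f_i}\le\frac{\nu}{8}\norm{f}_2$ hold with probability at least $\frac23$; since the rows are independent, a Chernoff bound over $d=\Theta(\log n)$ of them pushes the failure probability of the median below $\frac1{\poly(n)}$, and a union bound over the $n$ coordinates gives the simultaneous guarantee.

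To produce the list $H$, I would run in parallel a standard sketch returning $\widehat{\norm{f}_2}\in\left(1\pm\frac1{10}\right)\norm{f}_2$ with high probability (this can be extracted from the same table via $4$-wise independent signs, or from an independent $\ell_2$ sketch). At the end of the stream I scan all $i\in[n]$ --- this costs extra time but no extra space, and $n=\poly(n)$ --- compute $\widehat{f_i}$, and put $i\in H$ exactly when $\widehat{f_i}\ge\frac{3\nu}{4}\widehat{\norm{f}_2}$. On the high-probability event above: if $f_i\ge\nu\norm{f}_2$ then $\widehat{f_i}\ge\frac{7\nu}{8}\norm{f}_2\ge\frac{3\nu}{4}\widehat{\norm{f}_2}$, so $i\in H$; if $f_j\le\frac{\nu}{2}\norm{f}_2$ then $\widehat{f_j}\le\frac{5\nu}{8}\norm{f}_2<\frac{3\nu}{4}\widehat{\norm{f}_2}$, so $j\notin H$. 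The space is dominated by the $dw=\O{\frac1{\nu^2}\log n}$ counters, each $\O{\log n}$ bits wide since all frequencies are bounded by $\poly(n)$, for $\O{\frac1{\nu^2}\log^2 n}$ bits total, plus $\O{\log^2 n}$ bits for the $\Theta(\log n)$ hash- and sign-function seeds and the $\ell_2$ estimator, matching the claimed bound.

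The only delicate point is calibrating constants. The factor-two gap between the ``always included'' threshold $\nu\norm{f}_2$ and the ``never included'' threshold $\frac{\nu}{2}\norm{f}_2$ forces the additive estimation error to be a small constant fraction of $\nu\norm{f}_2$, which is exactly what pins $w=\Theta(1/\nu^2)$ down to a large enough hidden constant and forces the $\Theta(\log n)$-fold median so that the per-coordinate failure probability survives the union bound over all $n$ coordinates; one also has to keep the $\ell_2$ estimate accurate enough that the threshold $\frac{3\nu}{4}\widehat{\norm{f}_2}$ still separates the two cases. The remainder is routine bookkeeping.
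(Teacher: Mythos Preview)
Your proposal is correct and is precisely the standard \countsketch\ analysis. Note, however, that the paper does not give its own proof of this statement: it is quoted as a known preliminary result with a citation to \cite{CharikarCF04}, so there is nothing to compare against beyond observing that your argument is the classical one underlying that citation.
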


\begin{lemma}[Frequency estimation on sliding windows]
\cite{AlonMS99,BravermanO07}
\lemlab{lem:ams}
Let $C=\frac{17}{16}$. 
There exists a one-pass streaming algorithm $\ams$ that simultaneously outputs $2$-approximations to the $L_2$ norm of the frequency vector induced by \emph{any} suffix of an insertion-only stream. 
That is, for a stream of length $m$, the algorithm maintains a data structure that outputs a value $F$ for any query $W\in[m]$, such that $F\le\norm{f}_2\le C\cdot F$, where $f$ is the frequency vector induced by the last $W$ updates of the stream. 
The algorithm uses $\O{\log^2 n}$ bits of space and succeeds with high probability. 
\end{lemma}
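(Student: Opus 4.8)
\textbf{Proof proposal for \lemref{lem:ams} (frequency estimation on sliding windows).}

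The plan is to reduce the statement to the classical AMS $F_2$ sketch of \cite{AlonMS99} by showing that a single linear sketch of the entire prefix stream, together with a logarithmic number of ``checkpoint'' sketches, suffices to answer any suffix query $W\in[m]$. First I recall the AMS estimator: draw a $4$-wise independent sign vector $\sigma\in\{-1,+1\}^n$ and maintain $Z=\sum_{j\in[n]}\sigma_j f_j$ over an insertion-only stream; then $\Ex{Z^2}=\norm{f}_2^2$ and $\mathbf{Var}(Z^2)\le 2\norm{f}_2^4$, so averaging $\O{1}$ parallel copies and taking the median of $\O{\log n}$ such averages yields a value that is within a $(1\pm\frac14)$ factor of $\norm{f}_2^2$ with high probability, using $\O{\log^2 n}$ bits. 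The subtlety on sliding windows is that we cannot subtract off the expired prefix because linear sketches require knowing which updates to remove.

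The key idea is that $F_2$ on a window is itself a smooth function in the sense of \cite{BravermanO07}: if $A$, $B$, $C$ are consecutive substreams of an insertion-only stream and $(1-\eta)F_2(A\cup B)\le F_2(B)$, then $(1-\eta)F_2(A\cup B\cup C)\le F_2(B\cup C)$, because appending $C$ only adds nonnegative mass to every coordinate and the ratio of squared $\ell_2$ norms is monotone under such appends. The smooth histogram framework then says: maintain a small set of ``buckets,'' each an AMS sketch started at a different timestamp $t_1<t_2<\cdots<t_k$ in the past, where we keep the timestamps coarse enough that consecutive buckets differ by at most a $(1\pm\eta)$ factor in their $F_2$ estimate (merging buckets that become too close) but fine enough that for any query $W$ there is a bucket whose start time $t_\ell$ satisfies $t_\ell\ge m-W+1$ and whose estimate is a $(1\pm\eta)$-approximation of the true window value. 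Since each merge roughly doubles the $F_2$ value of a bucket relative to the previous one, the number of buckets is $\O{\frac{1}{\eta}\log(\text{max }F_2))}=\O{\frac{1}{\eta}\log n}$ (using $m=\poly(n)$). Setting $\eta$ to an absolute constant such that $(1-\eta)^{-1}$ times the $(1\pm\frac14)$ AMS guarantee stays below $C=\frac{17}{16}$ — in fact $\eta$ can be taken as a small constant like $\frac{1}{64}$ — gives the claimed constant-factor bound, and with a constant number of buckets beyond the boundary one we recover the exact value $F$ satisfying $F\le\norm{f}_2\le C\cdot F$.

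The space accounting is then straightforward: each of the $\O{\log n}$ buckets stores one AMS sketch of $\O{\log^2 n}$ bits, which would give $\O{\log^3 n}$; to reach $\O{\log^2 n}$ I would instead share the hash/sign randomness across buckets (only the accumulators $Z$ differ, and each is $\O{\log n}$ bits), and observe that for a constant-factor approximation we need only $\O{1}$ parallel AMS estimators rather than $\O{\log n}$ — we take the \emph{single} bucket estimate nearest the query boundary and the correctness amplification over the $\O{\log n}$ fresh starts is absorbed because only one bucket is queried at the end. This collapses the bucket contents to $\O{\log n}$ bits each plus $\O{\log^2 n}$ shared randomness, for $\O{\log^2 n}$ total. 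The main obstacle I anticipate is the bookkeeping for the ``high probability over all $m=\poly(n)$ queries and all intermediate merges'' clause: each merge and each potential query point must see a correct sketch, so I would fix the randomness once, take a union bound over the $\poly(n)$ timesteps using the median-of-$\O{\log n}$ amplification, and verify that the smoothness argument — which is deterministic given the sketch values — is not broken by the (rare) event that some sketch value is off, i.e., condition on the high-probability event that \emph{all} maintained sketches are simultaneously accurate and argue within that event. Everything else (the monotonicity of $F_2$ under appends, the bucket-count bound, the reduction of the query to a single bucket) is routine once that conditioning is in place.
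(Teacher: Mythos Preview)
The paper does not actually prove \lemref{lem:ams}; it is stated as a preliminary result imported from \cite{AlonMS99,BravermanO07} and used as a black box. Your overall strategy --- instantiate the smooth histogram of \cite{BravermanO07} with the AMS $F_2$ sketch, using the smoothness of $\norm{\cdot}_2^2$ under insertion-only appends (which is exactly \lemref{lem:count} in the paper) --- is the standard and correct route, and matches what those references do.

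That said, your space accounting contains a genuine inconsistency. You first argue for $\O{\log^2 n}$ bits by dropping to $\O{1}$ parallel AMS estimators on the grounds that ``only one bucket is queried at the end.'' But in the very next paragraph you (correctly) note that the smooth-histogram maintenance itself requires every sketch to be accurate at every merge time, and that a union bound over the $\poly(n)$ timesteps forces median-of-$\O{\log n}$ amplification. These two claims cannot both hold: once you commit to $\O{\log n}$ parallel estimators per bucket (which you must, for high probability across all merges and all suffix queries simultaneously), each of the $\O{\log n}$ buckets carries $\O{\log n}$ counters of $\O{\log n}$ bits, and sharing the sign randomness does not reduce the counter storage. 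The honest bound from your argument is $\O{\log^3 n}$ bits, not $\O{\log^2 n}$. Getting down to $\O{\log^2 n}$ with high probability over all suffixes requires either a different primitive than vanilla AMS per bucket or a more delicate argument than the one you sketch; as written, the $\O{1}$-estimator shortcut is the step that would fail.
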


\begin{lemma}[Approximate count of an item]
\lemlab{lem:counter}
\cite{BravermanGLWZ18}
For a stream of length $m$, a time $t\in[m]$, an index $i\in[n]$, and an accuracy parameter $\eta>0$, there exists a deterministic one-pass streaming algorithm $\counter$ that simultaneously outputs a $\left(1+\frac{\eta}{4}\right)$-approximation to the frequency of $i$ between $t$ and all times $u\in[t,m]$. 
The algorithm uses $\O{\frac{1}{\eta}\log^2 n}$ bits of space and never fails. 
\end{lemma}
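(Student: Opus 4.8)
The plan is to maintain, for the single fixed index $i$, a sparse list of \emph{checkpoints} $(\tau_1,c_1),\dots,(\tau_s,c_s)$, where the pair $(\tau_j,c_j)$ records that the $c_j$-th occurrence of $i$ in the stream arrived at time $\tau_j$; we also keep an implicit checkpoint $(0,0)$, and $(\tau_s,c_s)$ always corresponds to the most recent occurrence, so $c_s$ is the exact count of $i$ seen so far. The thinning rule is to retain only enough checkpoints that every consecutive pair (including $(0,0)$ with $(\tau_1,c_1)$) satisfies the invariant $c_{j+1}-c_j\le\max\left(1,\left\lfloor\frac{\eta}{4}(c_s-c_{j+1})\right\rfloor\right)$. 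To answer a query for the frequency of $i$ in a window $[t,u]$, where $u$ is the time at which the query is issued, so $c_s$ is known exactly, I would locate the largest checkpoint index $a$ with $\tau_a\le t-1$ and report $\widehat{f_i}=c_s-c_a$.

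First I would verify accuracy. Because $\tau_a$ is the last checkpoint at or before $t-1$, we have $\tau_{a+1}\ge t$, so the true count of $i$ in $[1,t-1]$ lies in $[c_a,\,c_{a+1}-1]$; writing $A$ for the true answer (the count of $i$ in $[t,u]$), this gives $A\le\widehat{f_i}$ and $\widehat{f_i}-A\le c_{a+1}-1-c_a$. If the invariant held in its ``$1$'' branch for the pair $(a,a+1)$ then, since consecutive checkpoints have distinct counts, $c_{a+1}-c_a=1$ and the estimate is exact; otherwise $c_{a+1}-c_a\le\frac{\eta}{4}(c_s-c_{a+1})<\frac{\eta}{4}A$, because the true left count is strictly below $c_{a+1}$ so $A>c_s-c_{a+1}$. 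Either way $\widehat{f_i}$ is a $\left(1+\frac{\eta}{4}\right)$-approximation, this holds for every choice of $t$ simultaneously, and the entire procedure is deterministic and so never fails.

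Next I would bound the space. Setting $d_j:=c_s-c_j$, the invariant reads $d_j\le d_{j+1}+\max\!\left(1,\lfloor\tfrac{\eta}{4}d_{j+1}\rfloor\right)$, so $d$ increases by at most $1$ per checkpoint while it is below $4/\eta$ (contributing $\O{1/\eta}$ checkpoints) and grows by a factor of at least $1+\Omega(\eta)$ per checkpoint afterwards (contributing $\O{\frac{1}{\eta}\log n}$ checkpoints, since $c_s\le m=\poly(n)$). To maintain the invariant, on each occurrence of $i$ I would append the new checkpoint — which only relaxes every existing constraint, since $c_s$ strictly increases — and then greedily delete every checkpoint whose removal leaves the invariant satisfied, re-examining neighbours after each deletion; a standard exponential-histogram argument shows that when no further deletion is possible the counts of the surviving checkpoints again grow geometrically (plus $\O{1/\eta}$ unit-gap checkpoints near the current time), so there are $\O{\frac{1}{\eta}\log n}$ of them. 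Each checkpoint stores a timestamp and a count, both $\O{\log n}$ bits as $m=\poly(n)$, giving $\O{\frac{1}{\eta}\log^2 n}$ bits total.

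The delicate point, and the one I would spend the most care on, is the thinning rule: it must simultaneously keep only $\O{\frac{1}{\eta}\log n}$ checkpoints after greedy deletion and preserve the accuracy guarantee for \emph{all} left endpoints $t$ at once. The reason this is consistent is that the ``scale'' in the invariant is the \emph{current} count $c_s$ rather than a fixed quantity, so a deletion that was legal earlier in the stream stays legal as more updates arrive, and the structure never needs to recover a checkpoint it has discarded; given this, the query analysis and space accounting above are routine, and ignoring stream updates that are not equal to $i$ makes the per-update work trivial for those steps.
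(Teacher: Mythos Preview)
The paper does not give its own proof of this lemma; it is quoted as a preliminary from prior work, so there is nothing in-paper to compare against. Your construction --- the smooth-histogram/exponential-histogram data structure specialized to counting a single item --- is the expected one and is essentially correct. (Indeed, the lemma as literally phrased, with $t$ fixed and only $u$ varying, would be trivial, since an exact counter fits in $\O{\log n}$ bits; the way $\counter$ is actually used in the heavy-hitter algorithm makes clear that what is needed is the count in $[t',u]$ for \emph{any} left endpoint $t'\ge t$ at the current time $u$, and that is exactly what your checkpoint list delivers.)

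One small wobble in the space analysis: you present the claim that $d$ ``grows by a factor of at least $1+\Omega(\eta)$ per checkpoint afterwards'' as a consequence of the invariant $d_j\le d_{j+1}+\max\bigl(1,\lfloor\tfrac{\eta}{4}d_{j+1}\rfloor\bigr)$, but that inequality bounds each step \emph{from above}, not below, so by itself it cannot limit the number of checkpoints. The geometric lower bound you need comes instead from the non-deletability condition (if checkpoint $j$ survives the greedy pass then $d_{j-1}-d_{j+1}>\max\bigl(1,\lfloor\tfrac{\eta}{4}d_{j+1}\rfloor\bigr)$), which you do invoke in the very next sentence. So all the ingredients are present, just stated slightly out of order; the accuracy analysis is clean and the overall argument goes through.
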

The smoothness of the $L_2$ norm is instrumental in not only ensuring $\ams$ in \lemref{lem:ams} achieves a constant factor approximation, but also that the space of $\ams$ is polylogarithmic. 
\begin{lemma}[Smoothness of $L_2$ norm]
\lemlab{lem:count}
\cite{BravermanO07}
Let $C=\frac{17}{16}$. 
For an insertion-only stream of length $m$, let $a<b<c\le d\le m$ and $X_a$ be a $C$-approximation to the $L_2$ norm of the frequency vector induced by the updates from time $a$ to time $c$ in the stream (inclusive). 
Let $X_b$ be a $C$-approximation to the $L_2$ norm of the frequency vector induced by the updates from time $b$ to time $c$ in the stream (inclusive). 
Let $Y_a$ be the $L_2$ norm of the frequency vector induced by the updates from time $a$ to $d$ and $Y_b$ be similarly defined from time $b$ to $d$. 
If $X_a\le C\cdot X_b$, then $Y_a\le 2Y_b$. 
\end{lemma}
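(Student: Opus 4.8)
The plan is to reduce everything to two elementary facts about coordinatewise nonnegative vectors, which is exactly where the insertion-only assumption enters. First I would split the time interval $[a,c]$ into the disjoint union $[a,b-1]\cup[b,c]$ and let $u$ be the frequency vector of the updates in $[a,b-1]$ and $v$ the frequency vector of the updates in $[b,c]$; then the frequency vector induced by $[a,c]$ is $u+v$ and the one induced by $[b,c]$ is $v$, so by definition of $C$-approximation (in the sense of \lemref{lem:ams}) we have $\norm{u+v}_2\le C\cdot X_a$ and $X_b\le\norm{v}_2$. Likewise, letting $z$ be the frequency vector of the updates in $[b,d]$, the frequency vector induced by $[a,d]$ is $u+z$ and the one induced by $[b,d]$ is $z$, so $Y_a=\norm{u+z}_2$ and $Y_b=\norm{z}_2$. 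All of $u,v,z$ have nonnegative integer entries.

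Next I would chain the hypotheses: $\norm{u+v}_2\le C\cdot X_a\le C^2\cdot X_b\le C^2\norm{v}_2$. Since $u$ and $v$ are coordinatewise nonnegative, $\langle u,v\rangle\ge 0$, and hence $\norm{u}_2^2+\norm{v}_2^2\le\norm{u+v}_2^2\le C^4\norm{v}_2^2$, which isolates the expired mass as $\norm{u}_2^2\le(C^4-1)\norm{v}_2^2$. Because appending the updates in $(c,d]$ can only increase each coordinate, $z$ dominates $v$ coordinatewise, so $\norm{v}_2\le\norm{z}_2=Y_b$ and therefore $\norm{u}_2\le\sqrt{C^4-1}\,Y_b$. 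Finally, the triangle inequality gives $Y_a=\norm{u+z}_2\le\norm{u}_2+\norm{z}_2\le\left(1+\sqrt{C^4-1}\right)Y_b$, and it remains only to substitute $C=\frac{17}{16}$ and check that $C^4\le 2$, which yields $1+\sqrt{C^4-1}\le 2$ (with comfortable slack, any $C\le 2^{1/4}$ suffices).

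I do not expect a genuine obstacle here; the proof is short once the decomposition is set up. The one place to be careful is bookkeeping the \emph{direction} of the $C$-approximation, so that the exact norms $\norm{f_{[a,c]}}_2$ and $\norm{f_{[b,c]}}_2$ land on the correct side of $X_a$ and $X_b$ when forming the chain $\norm{u+v}_2\le C^2\norm{v}_2$, and confirming that the disjoint time decomposition is valid (which uses $a<b\le c\le d$). Everything substantive rests on the two nonnegativity facts $\langle u,v\rangle\ge 0$ and $v\le z$ coordinatewise, both of which fail for general turnstile streams and are the reason the statement is restricted to insertion-only streams.
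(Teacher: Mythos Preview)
The paper does not supply its own proof of this lemma; it is stated with a citation to \cite{BravermanO07} and used as a black box. Your argument is correct and is essentially the standard smoothness proof from that reference: split off the expired prefix $u$, use nonnegativity of the inner product to extract $\norm{u}_2^2\le(C^4-1)\norm{v}_2^2$ from the approximation hypothesis, then monotonicity $v\le z$ and the triangle inequality to conclude. The bookkeeping on the direction of the $C$-approximation matches the convention in \lemref{lem:ams} ($F\le\norm{f}_2\le C\cdot F$), and the numerical check $1+\sqrt{C^4-1}\le 2$ for $C=17/16$ is fine.
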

\noindent


\section{Linear Regression for Orlicz Norms}
In this section, we describe our algorithm for linear regression for Orlicz norms in the streaming and sliding window models. 
Given a data matrix $\bA\in\mathbb{R}^{n\times d}$ and a response vector $\b\in\mathbb{R}^n$ with $n\gg d$, recall that the goal of the overconstrained linear regression problem is to minimize the optimization problem $\min_{\x\in\mathbb{R}^d}\cL(\bA\x-\b)$, where $\cL:\mathbb{R}^n\to\mathbb{R}$ is a loss function. 
For a function $G$, we define the corresponding Orlicz norm $\|\x\|_G$ of a vector $\x\in\mathbb{R}^n$ to be zero if $\x=0^n$ and to be the unique value $\alpha$ such that $\sum_{i=1}^n G(|x_i|/\alpha)=1$ otherwise if $\x\neq0^n$. 
In order to obtain $(1+\eps)$-approximation to linear regression for Orlicz norms, \cite{SongWYZZ19} makes the assumption that (1) $G$ is a strictly increasing convex function on $[0,\infty)$, (2) $G(0)=0$ and $G(x)=G(-x)$ for all $x\in\mathbb{R}$, and (3) there exists an absolute constant $C_G$ such that for all $0<x<y$, $G(y)/G(x)\le C_G(y/x)^2$. 
We also assume without loss of generality that each coordinate of $\bA$ is an integer that is at most $M$ in magnitude for some large $M=\poly(n)$. 

\begin{definition}[Online $L_1$ sensitivity]
For a matrix $\bA=\a_1\circ\ldots\circ\a_n\in\mathbb{R}^{n\times d}$, let $\bA_i=\a_1\circ\ldots\circ\a_i$ for each $i\in[n]$.  
Then the online $L_1$ sensitivity of a row $\a_i$ is defined as
\[\max_{\x\in\mathbb{R}^{d}}\frac{|\langle\a_i,\x\rangle|}{\|\bA_i\x\|_1}.\]
\end{definition}
The online $L_1$ sensitivities can be efficiently approximated, e.g., see~\cite{CohenEMMP15,CohenMP16,CohenMM17,BravermanDMMUWZ20}. 
Namely, a constant factor approximation to any online $L_p$ sensitivity that is at least $\frac{1}{\poly(n)}$ can be computed in polynomial time using (offline) linear programming. 
Similarly, an additive $\frac{1}{\poly(n)}$ approximation to any online $L_p$ sensitivity that is less than $\frac{1}{\poly(n)}$ can be computed in polynomial time using (offline) linear programming. 

\begin{algorithm}[!htb]
\caption{Subspace embedding for Orlicz norms in the row-arrival streaming model}
\alglab{alg:orlicz}
\begin{algorithmic}[1]
\Require{A stream of rows $\a_1,\ldots,\a_n\in\mathbb{R}^d$, parameter $\Delta>0$, and an accuracy parameter $\eps>0$}
\Ensure{A $(1+\eps)$ subspace embedding for Orlicz norms.}
\State{$\bM\gets\emptyset$}
\State{$\alpha\gets\frac{Cd}{\eps^2}\log n$ with sufficiently large parameter $C>0$}
\For{each row $\a_i$, $i\in[n]$}
\If{$\a_i\in\Span(\bM)$}
\State{$\tau_i\gets2\Delta\cdot\max_{\x\in\mathbb{R}^d,\x\in\Span(\bM)}\frac{|\langle\a_i,\x\rangle|}{\|\bM\x\|_1+|\langle\a_i,\x\rangle|}$}
\Else
\State{$\tau_i\gets 1$}
\EndIf
\State{$p_i\gets\min(1, \alpha\tau_i)$}
\State{With probability $p_i$, $\bM\gets\bM\circ\frac{\a_i}{p_i}$}
\Comment{Online sensitivity sampling}
\EndFor
\State{\Return $\bM$}
\end{algorithmic}
\end{algorithm}

\begin{theorem}[Freedman's inequality]
\cite{Freedman75}
\thmlab{thm:scalar:freedman}
Suppose $Y_0,Y_1,\ldots,Y_n$ is a scalar martingale with difference sequence $X_1,\ldots,X_n$. 
Specifically, we initiate $Y_0=0$ and set $Y_i=Y_{i-1}+X_i$ for all $i\in[n]$. 
Let $R\ge|X_t|$ for all $t\in[n]$ with high probability. 
We define the predictable quadratic variation process of the martingale by $w_k:= \sum_{t=1}^k\underset{t-1}{\mathbb{E}}\left[X_t^2\right]$, for $k\in[n]$. 
Then for all $\eps\ge 0$ and $\sigma^2 > 0$, and every $k \in [n]$, 
\[\PPr{\max_{t \in [k]} |Y_t|>\eps\text{ and } w_k \le \sigma^2}\le 2\exp\left(-\frac{\eps^2/2}{\sigma^2 + R\eps/3} \right).\]
\end{theorem}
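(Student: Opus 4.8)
\textbf{Proof proposal for Freedman's inequality (\thmref{thm:scalar:freedman}).}

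The plan is to use the classical exponential supermartingale (Bennett/Freedman) argument. Fix $\theta>0$ to be optimized later. The idea is to construct, from the martingale difference sequence $X_1,\dots,X_n$, an auxiliary process that is a supermartingale as long as the quadratic variation stays bounded, and then apply the optional stopping theorem together with Markov's inequality. Concretely, I would define the stopping time $\kappa$ to be the first index $k$ at which $w_{k+1}>\sigma^2$ (or $\kappa=n$ if this never happens within $[n]$), so that on the event of interest the process is ``frozen'' before the predictable quadratic variation exceeds $\sigma^2$. Working with the stopped difference sequence $\widetilde X_t := X_t \cdot \mathbf{1}[t\le\kappa]$ keeps $\sum_{t=1}^k \mathbb{E}_{t-1}[\widetilde X_t^2]\le\sigma^2$ deterministically for every $k$.

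The key analytic step is the elementary inequality $e^{x}\le 1+x+\big(e^{R\theta}-1-R\theta\big)\frac{x^2}{(R\theta)^2}$ for all $x\le R\theta$, equivalently, setting $\phi(y):=(e^y-1-y)/y^2$ (which is increasing in $y$), one has $e^{\theta x}\le 1+\theta x+\phi(\theta R)\,\theta^2 x^2$ whenever $|x|\le R$. Applying this with $x=\widetilde X_t$, taking conditional expectation, and using $\mathbb{E}_{t-1}[\widetilde X_t]=0$ gives $\mathbb{E}_{t-1}\!\big[e^{\theta \widetilde X_t}\big]\le 1+\phi(\theta R)\,\theta^2\,\mathbb{E}_{t-1}[\widetilde X_t^2]\le \exp\!\big(\phi(\theta R)\,\theta^2\,\mathbb{E}_{t-1}[\widetilde X_t^2]\big)$. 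Hence the process
\[
Z_k \;:=\; \exp\!\Big(\theta\,\widetilde Y_k \;-\; \phi(\theta R)\,\theta^2 \sum_{t=1}^k \mathbb{E}_{t-1}[\widetilde X_t^2]\Big),
\qquad \widetilde Y_k:=\sum_{t=1}^k \widetilde X_t,
\]
is a supermartingale with $\mathbb{E}[Z_k]\le Z_0=1$. On the event $\{\max_{t\in[k]}|Y_t|>\eps \text{ and } w_k\le\sigma^2\}$, the stopping time has not triggered up to the time the threshold is crossed, so $\widetilde Y_t=Y_t$ for the relevant indices and $\max_t \widetilde Y_t > \eps$ while $\sum_{t}\mathbb{E}_{t-1}[\widetilde X_t^2]\le\sigma^2$; therefore $\max_k Z_k \ge \exp(\theta\eps-\phi(\theta R)\theta^2\sigma^2)$ on that event. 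Applying Doob's maximal inequality for the nonnegative supermartingale $Z_k$ yields $\PPr{\max_{t\in[k]}|Y_t|>\eps\text{ and }w_k\le\sigma^2}\le \exp\!\big(-\theta\eps+\phi(\theta R)\theta^2\sigma^2\big)$; the two-sided bound (the absolute value) follows by repeating the argument with $-X_t$ and union-bounding, which produces the factor $2$.

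It remains to optimize over $\theta$. Using the standard bound $\phi(y)\le \frac{1/2}{1-y/3}$ for $0\le y<3$, the exponent becomes $-\theta\eps + \frac{\theta^2\sigma^2/2}{1-\theta R/3}$; choosing $\theta = \frac{\eps}{\sigma^2 + R\eps/3}$ (which satisfies $\theta R/3<1$) makes this at most $-\frac{\eps^2/2}{\sigma^2+R\eps/3}$, giving exactly the claimed bound $2\exp\!\big(-\frac{\eps^2/2}{\sigma^2+R\eps/3}\big)$. The main obstacle, such as it is, is purely bookkeeping: handling the ``$R\ge|X_t|$ with high probability'' clause cleanly (one conditions on the high-probability boundedness event and absorbs its failure probability, or equivalently truncates $X_t$ at $\pm R$ and notes the truncation changes nothing on the good event), and making sure the stopping-time construction correctly enforces the quadratic-variation constraint simultaneously with the maximal inequality over $[k]$. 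Since \thmref{thm:scalar:freedman} is quoted from \cite{Freedman75}, in the paper itself this would simply be cited rather than reproven; the above is the proof one would supply if a self-contained argument were wanted.
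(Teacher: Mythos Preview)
Your proposal is correct and is precisely the classical exponential-supermartingale argument due to Freedman. As you yourself note at the end, the paper does not prove \thmref{thm:scalar:freedman} at all: it is stated with a citation to \cite{Freedman75} and used as a black box in the proof of \lemref{lem:subspace}. So there is no ``paper's own proof'' to compare against; your write-up is exactly the self-contained argument one would give if asked to supply one, and the bookkeeping caveats you flag (truncation to handle the ``with high probability'' boundedness, and the stopping-time freezing of the quadratic variation) are the right ones.
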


\begin{lemma}
\lemlab{lem:subspace}
Let $\mathcal{N}$ be a greedily constructed $\eps$-net such that $\|\bA\x\|_G=1$ for all $\x\in\mathcal{N}$. 
Suppose $\Delta_1$ and $\Delta_2$ are parameters such that $\Delta_1\le|\a_j^\top\x|\le\Delta_2$ for all $\x\in\mathcal{N}$ and $j\in[n]$ and $|\a_j^\top\x|\neq 0$. 
Let $\Delta\ge\frac{G(\Delta_1)\cdot\Delta_2}{G(\Delta_2)\cdot\Delta_1}$ in \algref{alg:orlicz}. 
Then for $\eps>\frac{1}{n}$, \algref{alg:orlicz} outputs a matrix $\bM$ such that for all $\x\in\mathbb{R}^d$,
\[|\|\bM\x\|_G-\|\bA\x\|_G|\le\eps\|\bA\x\|_G,\]
with high probability.
\end{lemma}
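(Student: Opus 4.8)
The plan is to run the standard online sensitivity sampling martingale analysis, but to control the martingale differences using the Orlicz-norm structure captured by the parameter $\Delta$. Fix a target $\x\in\mathbb{R}^d$; by homogeneity of both $\|\bM\cdot\|_G$ and $\|\bA\cdot\|_G$ it suffices to prove the bound for $\x$ in the net $\mathcal{N}$, i.e.\ for $\x$ with $\|\bA\x\|_G=1$, and then to conclude for all $\x$ by a union bound over $\mathcal{N}$ followed by the usual net-to-all-vectors argument (the sampling is the same for every direction, so the failure probability only has to beat $|\mathcal{N}|=\exp(O(d\log(\kappa/\eps)))$, which the ``high probability'' guarantee does via the large constant $C$ in $\alpha$). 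For such an $\x$, the normalization $\sum_{j=1}^n G(|\a_j^\top\x|)=1$ means we want to show that the sampled sum $\sum_{j=1}^n \frac{b_j}{p_j} G(|\a_j^\top\x|)$, where $b_j\in\{0,1\}$ is the indicator that row $j$ was kept, concentrates around $1$ up to $(1\pm\eps)$ — once we have $|\sum_j \frac{b_j}{p_j}G(|\a_j^\top\x|)-1|\le\eps'$ for a suitable $\eps'=\Theta(\eps)$, a short monotonicity/convexity argument using assumption (3) on $G$ (the $(y/x)^2$ growth bound) converts this multiplicative error on the sum of $G$-values into the claimed multiplicative error $\eps\|\bA\x\|_G$ on the Orlicz norm itself.

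First I would set up the martingale: process the rows in stream order, let $Y_i = \sum_{j\le i}\left(\frac{b_j}{p_j}-1\right)G(|\a_j^\top\x|)$ with difference $X_i = \left(\frac{b_i}{p_i}-1\right)G(|\a_i^\top\x|)$, which is mean zero conditioned on the past (note $p_i$ is determined by the past since it depends only on $\bM$ before row $i$). The per-step bound is $|X_i|\le \frac{1}{p_i}G(|\a_i^\top\x|)\le \frac{1}{\alpha\tau_i}G(|\a_i^\top\x|)$ when $p_i<1$ (and $X_i=0$ when $p_i=1$). Here is where the sensitivity scores and $\Delta$ enter: I claim $\frac{G(|\a_i^\top\x|)}{\tau_i}\le \frac{1}{2\Delta}\cdot\Delta\cdot\|\bA\x\|_G$-type quantity — more precisely, using $\tau_i = 2\Delta\cdot\max_{\x'}\frac{|\a_i^\top\x'|}{\|\bM_{<i}\x'\|_1+|\a_i^\top\x'|}\ge 2\Delta\cdot\frac{|\a_i^\top\x|}{\|\bA_i\x\|_1}$ (after the standard argument that the running sketch $\bM_{<i}$ has $\ell_1$ mass comparable to $\bA_i$ in every direction with high probability, which is itself proved inductively alongside the main claim), together with the bounds $\Delta_1\le|\a_j^\top\x|\le\Delta_2$ on the net and $\Delta\ge \frac{G(\Delta_1)\Delta_2}{G(\Delta_2)\Delta_1}$, one gets $G(|\a_i^\top\x|)\le \frac{\tau_i}{2\Delta}\cdot C''$ for the relevant scale, so that $|X_i|\le \frac{C''}{2\alpha} = \frac{\eps^2}{2C}\cdot\frac{C''}{d}$, i.e.\ $R=O(\eps^2/d)$ suffices. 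For the predictable quadratic variation, $\sum_i \mathbb{E}_{i-1}[X_i^2]\le \sum_i \frac{1}{p_i}G(|\a_i^\top\x|)^2 \le \max_i\frac{G(|\a_i^\top\x|)}{p_i}\cdot\sum_i G(|\a_i^\top\x|) \le R\cdot 1 = O(\eps^2/d)$, so $\sigma^2 = O(\eps^2/d)$. Plugging into Freedman (\thmref{thm:scalar:freedman}) with deviation threshold $\eps'=\Theta(\eps)$ gives failure probability $2\exp(-\Omega(\eps'^2/(\sigma^2+R\eps'))) = 2\exp(-\Omega(d))$ — more precisely $\exp(-\Omega(Cd))$ for the tunable constant $C$ — which dominates $|\mathcal{N}|$.

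The remaining pieces are routine: (i) the inductive maintenance that at every prefix $i$, $\|\bM_{<i}\x\|_1 = (1\pm O(\eps))\|\bA_i\x\|_1$ simultaneously for all $\x$ in a net for $\bA_i$'s rowspace — this is the standard ``online leverage/sensitivity sampling gives a running subspace embedding'' bootstrap, where one conditions on the embedding holding at step $i-1$ to justify that $\tau_i$ computed from $\bM_{<i}$ is within a constant of the true online sensitivity, hence the Freedman bound applies at step $i$ as well; (ii) the sample-complexity bound $\sum_i p_i \le \alpha\sum_i \tau_i = O\!\left(\frac{d^2\Delta}{\eps^2}\log\kappa\log n\right)$, which follows from the online-sensitivity sum bound $\sum_i \tau_i = O(d\Delta\log\kappa)$ (the $\log\kappa$ coming from the fact that each row's online $L_1$ sensitivity, summed, telescopes against $\log\|\bA_i\x\|$ which ranges over a $\kappa$-bounded interval, times the $\Delta$ rescaling) together with $\alpha = O(\frac{d}{\eps^2}\log n)$. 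I expect the main obstacle to be step (i) together with the clean conversion in (ii)/the $G$-monotonicity argument: making the circular dependence rigorous — $\tau_i$ is computed from $\bM_{<i}$, whose correctness we are trying to establish — requires carefully ordering the induction so that the Freedman tail bound at step $i$ is invoked only under the event that all earlier steps succeeded, and checking that the constant-factor slack between $\tau_i$ (computed) and the true online sensitivity does not blow up through the composition; and the translation from ``$\sum_j \frac{b_j}{p_j}G(|\a_j^\top\x|)\in 1\pm\eps'$'' to ``$\|\bM\x\|_G\in(1\pm\eps)\|\bA\x\|_G$'' needs assumption (3) on $G$ to rule out pathological flatness of $G$ near the crossing scale, so that an $\eps'$ additive change in $\sum G(\cdot/\alpha)$ forces only an $O(\eps')$ multiplicative change in $\alpha$.
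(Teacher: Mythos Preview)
Your proposal is correct and takes essentially the same route as the paper: define the martingale with differences $(\tfrac{b_j}{p_j}-1)G(|\a_j^\top\x|)$, use the inductive hypothesis on $\bM_{<j}$ together with the $\Delta$ relation to lower-bound $\tau_j$ and hence control both $|X_j|$ and the predictable variance, apply Freedman, then union-bound over the $\eps$-net and extend to all $\x$. The paper resolves the circular induction exactly as you anticipate (it truncates the martingale by setting $X_j=0$ once $|Y_{j-1}|>\eps\|\bA_{j-1}\x\|_G$), and it silently identifies $\|\cdot\|_G$ with the running sum of $G$-values where you---more carefully---flag the need for a separate conversion from ``$\sum_j \tfrac{b_j}{p_j}G(|\a_j^\top\x|)\in 1\pm\eps'$'' to ``$\|\bM\x\|_G\in(1\pm\eps)$''; one small slip is that your $R,\sigma^2=O(\eps^2/d)$ drops the $\log n$ present in $\alpha=\tfrac{Cd\log n}{\eps^2}$, which is what actually yields the $\exp(-\Omega(d\log n))$ tail needed to beat $|\mathcal{N}|=(O(1)/\eps)^d$ under $\eps>1/n$.
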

\begin{proof}
Let $\x\in\mathbb{R}^d$ be an arbitrary vector with $\|\bA\x\|_G=1$ and suppose $\eps\in(0,1/2)$ with $\eps>\frac{1}{n}$. 
We show via induction that $|\|\bM_j\x\|_G-\|\bA_j\x\|_G|\le\eps\|\bA_j\x\|_G$ for all $j\in[n]$ with high probability, where $\bM_j$ is the reweighted submatrix of the input matrix $\bA$ that has been sampled at time $j$. 
Either $\a_1$ is the zero vector or $p_1=1$ so that $\bM_1=\A_1$ for the base case. 

Suppose the statement holds for all $j\in[n-1]$; we prove it holds for $j=n$. 
We define a martingale $Y_0,Y_1,\ldots,Y_n$ implicitly through the difference sequence $X_1,\ldots,X_n$. 
For $j\ge 1$, we set $X_j=0$ if $Y_{j-1}>\eps\|\bA_{j-1}\x\|_p^p$. 
Otherwise if $Y_{j-1}\le\eps\|\bA_{j-1}\x\|_G$, we set
\begin{equation}
X_j=
\begin{cases}
\left(\frac{1}{p_j}-1\right)G(|\a_j^\top\x|) & \text{ if }\a_j\text{ is sampled in }\bM\\
- G(|\a_j^\top\x|)  & \text{otherwise}.
\end{cases}
\end{equation}
Observe that the sequence $Y_0,\ldots,Y_n$ induced by the differences is indeed a valid martingale because $\Ex{Y_j|Y_1,\ldots,Y_{j-1}}=Y_{j-1}$. 
By definition of the difference sequence, we also have 
\[Y_j=\norm{\bM_j\x}_G-\norm{\A_j\x}_G.\]

If $p_j=1$, then $\a_j$ is sampled in $\bM_j$, so that $X_j=0$. 
Otherwise if $p_j<1$, then 
\[\Ex{X_j^2|Y_1,\ldots,Y_{j-1}}=p_j\left(\frac{1}{p_j}-1\right)^2G(|\a_j^\top\x|)^2+(1-p_j)G(|\a_j^\top\x|)^{2}\le\frac{1}{p_j}G(|\a_j^\top\x|)^2.\]
Moreover, $p_j<1$ implies $p_j=\alpha\tau_j$ so that $\Ex{X_j^2|Y_1,\ldots,Y_{j-1}}\le\frac{1}{\alpha\tau_j}G(|\a_j^\top\x|)^2$. 
By the definition of $\tau_j$ and the inductive hypothesis that $|\|\bM_{j-1}\x\|_G-\|\bA_{j-1}\x\|_G\|\le\eps\|\bA_{j-1}\x\|_G<\frac{1}{2}\|\bA_{j-1}\x\|_G$,
\begin{align*}
\tau_j\ge\frac{2\Delta|\a_j^\top\x|}{\|\bM_{j-1}\x\|_1+|\a_j^\top\x|}=\frac{2|\a_j^\top\x|G(\Delta_1)/\Delta_1}{G(\Delta_2)/\Delta_2(\|\bM_{j-1}\x\|_1+|\a_j^\top\x|)}\ge\frac{2G(|\a_j^\top\x|)}{\|\bM_{j-1}\x\|_G+G(|\a_j^\top\x|)},
\end{align*}
since $G(\Delta_2)/\Delta_2\ge G(|\a_j^\top\x|)\ge G(\Delta_1)/\Delta_1$ for all $\a_j\in\mathbb{R}^d$ and $\x\in\mathbb{R}^d$, given the assumption that $M\ge|\a_j^\top\x|$. 
Thus,
\begin{align*}
\tau_j\ge\frac{G(|\a_j^\top\x|)}{\|\bA_{j-1}\x\|_G+G(|\a_j^\top\x|)}=\frac{G(|\a_j^\top\x|)}{\|\bA_j\x\|_G}\ge\frac{G(|\a_j^\top\x|)}{\|\bA\x\|_G}.
\end{align*}
Consequently, $\sum_{j=1}^n\Ex{X_j^2|Y_1,\ldots,Y_{j-1}}\le\sum_{j=1}^n\frac{\|\bA\x\|_G\cdot G(|\a_j^\top\x|)}{\alpha}\le\frac{\|\A\x\|_G^2}{\alpha}$. 

Moreover, $|X_j|\le\frac{1}{p_j}G(|\a_j^\top\x|)$. 
For $p_j=1$, $\frac{1}{p_j}G(|\a_j^\top\x|)\le\|\bA_j\x\|_G\le\|\bA\x\|_G$. 
On the other hand if $p_j<1$, then $p_j=\alpha\tau_j<1$. 
Again by the definition of $\tau_j$ and by the inductive hypothesis that $|\|\bM_{j-1}\x\|_G-\|\bA_{j-1}\x\|_G\|\le\eps\|\bA_{j-1}\x\|_G<\frac{1}{2}\|\bA_{j-1}\x\|_G$,  
\[\frac{G(|\a_j^\top\x|)}{2\|\A_j\x\|_G}\le\frac{G(|\a_j^\top\x|)}{|\bM_{j-1}\x|_G+G(|\a_j^\top\x|)}\le\tau_j.\] 
Hence for $\alpha=\frac{Cd}{\eps^2}\log n$, 
\[|X_j|\le\frac{1}{p_j}G(|\a_j^\top\x|)\le\frac{2}{\alpha}\|\A_j\x\|_G\le\frac{2\eps^2}{Cd\log n}\|\A_j\x\|_G\le\frac{2\eps^2}{Cd\log n}\|\A\x\|_G.\] 

We apply Freedman's inequality (\thmref{thm:scalar:freedman}) with $\sigma^2=\frac{\|\bA\x\|_G^2}{\alpha}$ for $\alpha=\frac{Cd}{\eps^2}\log n$ and $R\le\frac{2\eps^2}{d\log n}\|\bA\x\|_G$. 
Thus,
\[\PPr{|Y_n|>\eps\|\bA\x\|_G}\le2\exp\left(-\frac{\eps^2\|\bA\x\|_G^2/2}{\|\bA\x\|_G^2/\alpha + \frac{2\eps^2}{d\log n}\|\bA\x\|_G\cdot\eps\|\bA\x\|_G}\right)\le\frac{1}{2^{\Omega(d)}\,\poly(n)},\]
for sufficiently large $C$. 

We now union bound over an $\eps$-net by first defining the unit ball $B=\{\bA\y\in\mathbb{R}^n\,|\,\norm{\bA\y}_G=1\}$. 
We also define $\mathcal{N}$ to be a greedily constructed $\eps$-net of $B$. 
Since balls of radius $\frac{\eps}{2}$ around each point must not overlap while simultaneously all fitting into a ball of radius $r+\frac{\eps}{2}$ for some constant $r>0$, then $\mathcal{N}$ has at most $\left(\frac{3r}{\eps}\right)^d$ points. 
Thus by a union bound for $\frac{1}{\eps}<n$, $|\norm{\bM\y}_G-\norm{\bA\y}_G|\le\eps\norm{\bA\y}_G$ for all $\bA\y\in\mathcal{N}$, with probability at least $1-\frac{1}{\poly(n)}$.

We claim accuracy on this $\eps$-net suffices to prove accuracy everywhere. 
Let $\z\in\mathbb{R}^d$ be a nonzero normalized vector so that $\norm{\bA\z}_G=1$. 
We inductively define a sequence $\bA\y_1,\bA\y_2,\ldots$ with $\norm{\bA\z-\sum_{j=1}^i\bA\y_j}_G\le\eps^i$ and there exists some constant $\gamma_i\le\eps^{i-1}$ with $\frac{1}{\gamma_i}\bA\y_i\in\mathcal{N}$ for all $i$. 
For the base case, we define $\bA\y_1$ to be the closest point to $\bA\z$ in the $\eps$-net $\mathcal{N}$ so that $\norm{\bA\z-\bA\y_1}_G\le\eps$. 
For the inductive step, given a sequence $\bA\y_1,\ldots,\bA\y_{i-1}$ such that $\gamma_i:=\norm{\bA\z-\sum_{j=1}^{i-1}\bA\y_j}_G\le\eps^{i-1}$, note that $\frac{1}{\gamma_i}\norm{\bA\z-\sum_{j=1}^{i-1}\bA\y_j}_G=1$, the next point in the sequence is defined as $\bA\y_i\in\mathcal{N}$ so that $\bA\y_i$ is within distance $\eps$ of $\bA\z-\sum_{j=1}^{i-1}\bA\y_j$. 
Hence,
\[|\norm{\bM\z}_G-\norm{\bA\z}_G|\le\sum_{i=1}^\infty|\norm{\bM\y_i}_G-\norm{\bA\y_i}_G|\le\sum_{i=1}^\infty\eps^i\norm{\bA\y_i}_G=\O{\eps}\norm{\bA\z}_G,\]
which finally completes the inductive step for time $n$. 
\end{proof}

\begin{lemma}
\lemlab{lem:sum:sensitivities}
\cite{BravermanDMMUWZ20}
For a matrix $\bA=\a_1\circ\ldots\circ\a_n\in\mathbb{R}^{n\times d}$ that arrives as a stream with condition number $\kappa$, let $\ell_i$ be the online $L_1$ sensitivity of $\a_i$. 
Then $\sum_{i=1}^n\ell_i=\O{d\log n\log\kappa}$. 
\end{lemma}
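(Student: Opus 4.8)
\begin{proofhack}
The plan is to follow the standard recipe for bounding a sum of online sampling quantities: relate the online $L_1$ sensitivity of each new row to a quantity that can be charged against the increase of a potential, and then telescope. Throughout write $\bA_i=\a_1\circ\ldots\circ\a_i$ and $\ell_i=\max_{\x\in\mathbb{R}^d}\frac{|\langle\a_i,\x\rangle|}{\norm{\bA_i\x}_1}$ as in the definition of online $L_1$ sensitivity. First I would dispose of the ``rank-increasing'' rows: there are at most $d$ indices $i$ with $\a_i\notin\Span(\bA_{i-1})$, and for each of them one can pick $\x\perp\Span(\bA_{i-1})$ with $\langle\a_i,\x\rangle\neq 0$, giving $\bA_{i-1}\x=0$ and $\ell_i=1$; these rows contribute only $\O{d}$ in total. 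For every other $i$ we have $\a_i\in\Span(\bA_{i-1})$, so $\bA_{i-1}\x=0$ forces $\langle\a_i,\x\rangle=0$; since $\norm{\bA_i\x}_1=\norm{\bA_{i-1}\x}_1+|\langle\a_i,\x\rangle|$, this lets us write $\ell_i=\frac{\rho_i}{1+\rho_i}$ with $\rho_i=\max_\x\frac{|\langle\a_i,\x\rangle|}{\norm{\bA_{i-1}\x}_1}$, so in particular $\ell_i\le\min\{1,\rho_i\}\le 1$.

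The main step is an $L_1$-to-$L_2$ reduction performed prefix by prefix. For a fixed matrix $\bB$, the $L_1$ sensitivity of any row is bounded, up to an absolute constant, by its $L_1$ Lewis weight, and the Lewis weights of $\bB$ sum to at most $d$; see, e.g., \cite{CohenMP16,BravermanDMMUWZ20}. Applying this with $\bB=\bA_i$ gives $\ell_i=\O{w^{(i)}_i}$, where $w^{(i)}$ is the $L_1$-Lewis-weight vector of $\bA_i$ and $w^{(i)}_i$ is its last coordinate, so it suffices to bound the ``online Lewis weight'' sum $\sum_i w^{(i)}_i$. By the defining fixed-point property of Lewis weights, $w^{(i)}_i$ is exactly the $L_2$ leverage score of the last row of the reweighted prefix $\mathbf{W}_i^{-1/2}\bA_i$, where $\mathbf{W}_i=\mathrm{diag}(w^{(i)})$; equivalently, if one sets $\Phi_i=\det\!\big(\bA_i^\top\mathbf{W}_i^{-1}\bA_i\big)$ (the inverse squared volume of the Lewis ellipsoid of $\bA_i$), then appending $\a_i$ multiplies $\Phi$ by a factor $1+\Theta(\ell_i)$. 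Since the entries of $\bA$ are $\poly(n)$-bounded integers and the stream has condition number $\kappa$, the quantity $\Phi_i$ stays within $\big[(\kappa\poly(n))^{-\Theta(d)},\,(\kappa\poly(n))^{\Theta(d)}\big]$ throughout; telescoping $\log\Phi_i$ and using $\log(1+x)\ge x/2$ for $x\in[0,1]$ would then give $\sum_i\ell_i=\O{d\log(n\kappa)}=\O{d\log n\log\kappa}$.

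The delicate point, and the main obstacle, is that the reweighting $\mathbf{W}_i$ changes with $i$: unlike $L_2$ leverage scores, Lewis weights are only defined implicitly by a fixed-point equation, so appending a single row perturbs \emph{all} of the weights, and the clean ``one reweighted stream, monotone potential'' picture above does not hold verbatim. To make it rigorous I would control this drift, either by bounding the multiplicative change of the Lewis weights between consecutive prefixes (so that $\Phi_i$ is only ``almost'' monotone and the accumulated slack telescopes), or by running the argument in $\O{\log n}$ phases --- one per doubling scale of the $L_1$ cost along the stream --- within each of which a single fixed reweighting is an adequate proxy and the online $L_2$ leverage-score bound of \cite{CohenMP16} contributes $\O{d\log\kappa}$; summing over the $\O{\log n}$ phases yields the stated $\O{d\log n\log\kappa}$. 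The remaining ingredients --- the rank-increasing-row bound, the inequality between $L_1$ sensitivity and Lewis weight, and the $\Phi_i$-growth estimate --- are routine.
\end{proofhack}
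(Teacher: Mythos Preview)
The paper does not prove this lemma at all: it is quoted verbatim from \cite{BravermanDMMUWZ20} and used as a black box in the space analysis of \thmref{thm:orlicz:se}. So there is no ``paper's own proof'' to compare against.

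As for your sketch on its own merits: the reduction to the non-rank-increasing rows and the inequality $\ell_i=\O{w^{(i)}_i}$ between $L_1$ sensitivity and the last Lewis weight of $\bA_i$ are fine. The problem is exactly where you flag it. The potential $\Phi_i=\det(\bA_i^\top\mathbf{W}_i^{-1}\bA_i)$ is \emph{not} obtained from $\Phi_{i-1}$ by a rank-one update of a fixed quadratic form, because $\mathbf{W}_i$ differs from $\mathbf{W}_{i-1}$ in every coordinate, not just the last. Your sentence ``appending $\a_i$ multiplies $\Phi$ by a factor $1+\Theta(\ell_i)$'' is therefore not justified, and you say as much. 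Neither of the two proposed fixes is carried out: bounding the multiplicative drift of all Lewis weights under a single row insertion is itself a nontrivial stability statement, and the ``$\O{\log n}$ phases with a frozen reweighting'' idea needs an argument that the frozen weights still dominate the true $L_1$ sensitivities throughout the phase, which does not follow from anything written. As it stands the proposal is a plausible outline with the central technical step missing; it is not a proof. If you want to complete it along these lines, the cleanest route is to prove an online Lewis-weight bound directly (this is essentially what \cite{BravermanDMMUWZ20} does) rather than trying to pass through the determinant potential, which is tailored to the $L_2$ case where the reweighting is the identity and does not move.
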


\begin{theorem}[Subspace Embedding for Orlicz Norms in the Streaming Model]
\thmlab{thm:orlicz:se}
Given $\eps>0$ and a matrix $\bA\in\mathbb{R}^{n\times d}$ whose rows $\a_1,\ldots,\a_n$ arrive sequentially in a stream with condition number at most $\kappa$, there exists a streaming algorithm that outputs a $(1+\eps)$ subspace embedding for an Orlicz norm with high probability. 
The algorithm samples $\O{\frac{d^2\Delta}{\eps^2}\log^2 n\log\kappa}$ rows, with high probability, where $\Delta$ is defined as in \lemref{lem:subspace}. 
\end{theorem}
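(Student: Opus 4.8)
The plan is to run \algref{alg:orlicz} with the parameter $\Delta$ set as in \lemref{lem:subspace}, obtain correctness directly from that lemma, and then separately bound the number of rows the algorithm retains. Correctness is immediate: \lemref{lem:subspace} already states that, with this choice of $\Delta$, the output matrix $\bM$ satisfies $\abs{\norm{\bM\x}_G-\norm{\bA\x}_G}\le\eps\norm{\bA\x}_G$ for every $\x\in\mathbb{R}^d$ with high probability, i.e.\ $\bM$ is a $(1+\eps)$ subspace embedding for the Orlicz norm $\norm{\cdot}_G$. So the only work left is the sampling bound.

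To bound the number of retained rows, I would first control its expectation and then invoke concentration. Row $\a_i$ is retained with probability $p_i=\min(1,\alpha\tau_i)$ for $\alpha=\frac{Cd}{\eps^2}\log n$, so the expected number of retained rows is at most $\sum_{i=1}^n\alpha\tau_i$, and it suffices to show $\sum_i\tau_i=\O{\Delta\cdot d\log n\log\kappa}$. The rows $\a_i$ with $\a_i\notin\Span(\bM_{i-1})$ are always retained and there are at most $d$ of them, contributing $O(d)$ to $\sum_i\tau_i$. For the remaining rows, $\tau_i=2\Delta\cdot\max_{\x\in\Span(\bM_{i-1})}\frac{\abs{\langle\a_i,\x\rangle}}{\norm{\bM_{i-1}\x}_1+\abs{\langle\a_i,\x\rangle}}$ is $2\Delta$ times the $L_1$ sensitivity of $\a_i$ inside $\bM_{i-1}\circ\a_i$. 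Conditioning on the high-probability event established in the analysis of \lemref{lem:subspace} that $\bM_{i-1}$ is a constant-factor approximation of $\bA_{i-1}$ in the relevant norm, and using the two-sided bound $\Delta_1\le\abs{\a_j^\top\x}\le\Delta_2$ to pass between $\norm{\cdot}_G$ and $\norm{\cdot}_1$ at the cost of the factor $\Delta$, one gets $\norm{\bM_{i-1}\x}_1+\abs{\langle\a_i,\x\rangle}\ge c\cdot\norm{\bA_i\x}_1$ for an absolute constant $c>0$, hence $\tau_i\le\frac{2\Delta}{c}\ell_i$, where $\ell_i$ is the online $L_1$ sensitivity of $\a_i$. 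Summing and applying \lemref{lem:sum:sensitivities}, $\sum_i\tau_i\le O(d)+\frac{2\Delta}{c}\sum_i\ell_i=\O{\Delta\cdot d\log n\log\kappa}$, so $\sum_i p_i=\O{\frac{d^2\Delta}{\eps^2}\log^2 n\log\kappa}$. This transfer — from $\tau_i$, which is phrased through the sampled matrix $\bM_{i-1}$ and an $L_1$ norm, to $\ell_i$, which is phrased through the true prefix $\bA_{i-1}$ — is the technical heart of the argument and the step I expect to be the main obstacle, since it forces a careful translation between the Orlicz norm controlled by \lemref{lem:subspace} and the $L_1$ norm controlled by \lemref{lem:sum:sensitivities}, and it is exactly where the aspect ratio $\Delta$ enters the sample complexity.

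Finally I would upgrade the expectation bound to a high-probability bound on the \emph{actual} number of retained rows. Conditioning on the good event above, so that $\sum_i p_i$ is deterministically $\O{\frac{d^2\Delta}{\eps^2}\log^2 n\log\kappa}$, consider the martingale whose difference sequence is $X_i=\mathbf{1}[\a_i\text{ retained}]-p_i$ with respect to the natural filtration; this is a valid martingale difference sequence because $p_i$ is measurable before $\a_i$ is processed, $\abs{X_i}\le 1$, and the predictable quadratic variation is at most $\sum_i p_i$. A routine application of Freedman's inequality (\thmref{thm:scalar:freedman}) then shows that the number of retained rows exceeds $2\sum_i p_i$ with probability only $1/\poly(n)$, which yields the claimed sample complexity with high probability and completes the proof.
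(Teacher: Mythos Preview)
Your outline matches the paper's (very terse) proof: correctness from \lemref{lem:subspace}; bound each sampling probability by $\O{\alpha\Delta\ell_i}$ where $\ell_i$ is the online $L_1$ sensitivity of $\a_i$; sum via \lemref{lem:sum:sensitivities} to get $\sum_i p_i=\O{\frac{d^2\Delta}{\eps^2}\log^2 n\log\kappa}$; and finish with concentration. The paper simply asserts that ``each row $\a_i$ is sampled with probability at most $4\alpha\Delta\ell_i$'' by appealing to \lemref{lem:subspace} and a union bound over the $n$ prefixes, and then invokes ``a standard coupling and concentration argument'' for the high-probability statement; your use of Freedman's inequality on the indicator martingale is a perfectly good instantiation of that last step.

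You are right that the comparison $\tau_i\le\O{\Delta}\ell_i$ is the crux, and you correctly flag it as the main obstacle. One caution about your sketch: if you go through the Orlicz approximation $\norm{\bM_{i-1}\x}_G\approx\norm{\bA_{i-1}\x}_G$ supplied by the induction in \lemref{lem:subspace} and then translate to $L_1$ via the $\Delta_1,\Delta_2$ bounds, the translation itself costs a factor of order $\Delta$, so the constant $c$ in $\norm{\bM_{i-1}\x}_1+\abs{\langle\a_i,\x\rangle}\ge c\,\norm{\bA_i\x}_1$ is not obviously absolute---as written this route would yield $\tau_i\le\O{\Delta^2}\ell_i$ and an extra $\Delta$ in the final bound. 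The paper does not spell out this step either. A way to get the stated $\O{\Delta}$ without the detour is to note that (for $\Delta$ bounded below by a constant, which is the intended ``aspect ratio'' regime) the probabilities $p_i$ already dominate $\alpha$ times the $L_1$ sensitivity of $\a_i$ with respect to the current sketch, so a separate online $L_1$-sensitivity-sampling induction gives $\norm{\bM_{i-1}\x}_1\ge\tfrac12\norm{\bA_{i-1}\x}_1$ directly, from which $\tau_i\le 4\Delta\ell_i$ follows without passing through $\norm{\cdot}_G$.
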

\begin{proof}
\algref{alg:orlicz} is correct with high probability, by \lemref{lem:subspace}. 
It remains to analyze the space complexity of \algref{alg:orlicz}. 
By \lemref{lem:subspace} and a union bound over the $n$ rows in the stream, each row $\a_i$ is sampled with probability at most $4\alpha\Delta\ell_i$, where $\ell_i$ is the online leverage score of row $\a_i$.   
By \lemref{lem:sum:sensitivities}, $\sum_{i=1}^n\ell_i=\O{d\log n\log\kappa}$. 
Since $\alpha=\O{\frac{d}{\eps^2}\log n}$, then by a standard coupling and concentration argument, the total number of sampled rows is $\O{\frac{d^2\Delta}{\eps^2}\log^2 n\log\kappa}$. 
\end{proof}

\paragraph{Applications to the Sliding Window Model.}
Coresets are dimensionality reduction tools with extensive applications~\cite{FeldmanL11,BachemLK17,SohlerW18,MunteanuSSW18,AssadiBBMS19,BaykalLGFR19,BravermanLUZ19,HuangV20,Feldman20,MussayOBZF20,MahabadiRWZ20,BravermanHMSSZ21}. 
An online coreset for a matrix $\bA$ is a weighted subset of rows of $\bA$ that also provides a good approximation to a certain desired function (such as Orlicz norm) to all prefixes of $\bA$. 
\begin{definition}[Online Coreset]
An \emph{online coreset} for a function $f$, an approximation parameter $\eps>0$, and a matrix $\bA\in\mathbb{R}^{n\times d}=\a_1\circ\ldots\circ\a_n$ is a subset of weighted rows of $\bA$ such that for any $\bA_i=\a_1\circ\ldots\circ\a_i$ with $i\in[n]$, $f(\bM_i)$ is a $(1+\eps)$-approximation of $f(\bA_i)$, where $\bM_i$ is the matrix that consists of the weighted rows of $\bA$ in the coreset that appear at time $i$ or before. 
\end{definition}
Observe that the proof of \lemref{lem:subspace} is by induction and thus shows that \algref{alg:orlicz} admits an online coreset. 
\cite{BravermanDMMUWZ20} showed that a streaming algorithm that admits an online coreset with probability $1-\frac{1}{\poly(n)}$ can be adapted to a sliding window algorithm for $W=\poly(n)$.  
\begin{theorem}
\thmlab{thm:coreset:framework}
\cite{BravermanDMMUWZ20}
Let $\r_1,\ldots,\r_n\in\mathbb{R}$ be a stream of rows, $\eps>0$, and $\bA=\r_{n-W+1}\circ\ldots\circ\r_n$ be the matrix consisting of the $W$ most recent rows. 
If there exists a online coreset algorithm for a matrix function $f$ that stores $S(n,d,\eps)$ rows, then there exists a sliding window algorithm that stores $\O{S\left(n,d,\frac{\eps}{\log n}\right)\log n}$ rows and outputs a matrix $\bM$ such that $f(\bM)$ is a $(1+\eps)$-approximation of $f(\bA)$. 
\end{theorem}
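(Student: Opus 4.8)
The plan is to recall the reduction of \cite{BravermanDMMUWZ20}, which is a refinement of the smooth histogram framework of \cite{BravermanO07}: instead of attaching to each ``bucket'' of the histogram a cheap constant-factor sketch of the objective, we attach a full online coreset run at the sharpened accuracy $\eps' := \frac{\eps}{c\log n}$ for a sufficiently large constant $c$. Concretely, at the current time $t$ I would maintain timestamps $t_1 < t_2 < \cdots < t_k = t$; for each $t_j$ I run an independent copy of the given online coreset algorithm on the suffix $\r_{t_j}, \r_{t_j+1}, \ldots$ with parameter $\eps'$. By the definition of an online coreset, copy $j$ stores $S(n,d,\eps')$ rows and, at \emph{every} time $\tau \ge t_j$, the weighted rows it has retained by time $\tau$ form a $(1+\eps')$-approximation of $f$ on $\r_{t_j}\circ\cdots\circ\r_\tau$; in particular this holds at $\tau = n$, which is the only time at which we must answer a query. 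A new bucket is opened at every step, and older buckets are pruned as in \cite{BravermanO07}: a bucket $t_{j+1}$ is deleted once the coreset value reported by copy $j+2$ lies within a constant factor of the one reported by copy $j$, so that $t_{j+1}$ is ``sandwiched'' and redundant. Since every surviving non-adjacent pair then differs by a constant factor and the objective ranges over a $\poly(n)$ interval, only $k = \O{\log n}$ buckets survive, which gives the claimed $\O{S(n,d,\eps/\log n)\cdot\log n}$ row bound.

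To answer the query at time $n$ for the window $\bA = \r_{n-W+1}\circ\cdots\circ\r_n$, I would return the coreset $\bM$ held by the surviving copy whose start time $t_j$ is the largest one not exceeding $n-W+1$. This $\bM$ is a $(1+\eps')$-coreset for the possibly-too-long suffix $\r_{t_j}\circ\cdots\circ\r_n$, so the whole argument reduces to showing that passing from that suffix to the true window $\bA$ costs only another $(1+\eps')$ factor for each surviving bucket between $t_j$ and $n-W+1$, of which there are $\O{\log n}$; the $\O{\log n}$ losses then multiply to the final $(1+\eps)$. Two structural facts drive this. First, the objective is monotone under prepending rows at the front of the window (for a subspace embedding the distortion only grows as rows are added, and equivalently the Orlicz norm $\|\bA\x\|_G$ is nondecreasing in the set of rows), which produces a sandwich of $f(\bA)$ between $f$ of two consecutive surviving suffixes. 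Second — and this is the point of opening up the black box — the online coreset is realized by online sensitivity sampling, so the ``excess'' rows $\r_{t_j},\ldots,\r_{n-W}$ carry only a bounded amount of online $L_1$ sensitivity mass; this mass is controlled via \lemref{sum:sensitivities} together with the bucket-pruning invariant, and prepending a prefix of low total sensitivity degrades a sensitivity-sampling coreset's guarantee by at most a $(1+\eps')$ factor.

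The step I expect to be the main obstacle is exactly this second one, i.e., the mismatch between a maintained bucket start and the true window boundary. The classical smooth-histogram correctness proof needs a genuine \emph{smoothness} property of the objective to argue that a bucket whose start only slightly precedes the window boundary is still a faithful proxy, and here we are not assuming smoothness of $f$; the workaround is to never treat the per-bucket online coreset as a black box, but to use that it is a reweighted sample whose guarantee is robust to prepending low-sensitivity prefixes, and to chain this robustness over the $\O{\log n}$ buckets separating $t_j$ from $n-W+1$. Once this monotonicity-plus-bounded-excess argument is in place, overall correctness with high probability follows from a union bound over the $\O{\log n}$ copies (each correct with high probability by the hypothesized online coreset guarantee), and the space bound is immediate from having $\O{\log n}$ copies of size $S(n,d,\eps/\log n)$. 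Instantiating the framework with the online coreset furnished by \algref{orlicz} and \lemref{subspace} then yields the sliding-window subspace embedding for Orlicz norms promised alongside \thmref{orlicz:se}.
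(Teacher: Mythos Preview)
First, note that the paper does not prove \thmref{thm:coreset:framework}; it is imported verbatim from \cite{BravermanDMMUWZ20} and used as a black box to derive \thmref{thm:orlicz:se:sw}. So there is no in-paper proof to compare against, and your proposal should be read as an attempted reconstruction of the cited reduction.

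Your reconstruction has a genuine gap. You output $C_{t_j}$, the online coreset for the \emph{superset} $\r_{t_j}\circ\cdots\circ\r_n$ of the window, and then must argue $f(\r_{t_j}\circ\cdots\circ\r_n)\approx f(\bA)$. That is precisely a smoothness statement about $f$, which the theorem does not assume and which the online-coreset framework is designed to bypass. Your two proposed ingredients do not close this: (i) the pruning rule ``when the coreset value reported by copy $j{+}2$ lies within a constant factor of that of copy $j$'' is ill-defined, since a coreset for a matrix function does not report a scalar value (for a subspace embedding it approximates $\x\mapsto\|\bA\x\|$, not a number); and (ii) the claim that the excess rows $\r_{t_j},\ldots,\r_{n-W}$ carry small total online $L_1$ sensitivity is neither implied by any pruning invariant you stated nor by \lemref{lem:sum:sensitivities}, which only bounds the \emph{global} sum. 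Finally, the phrase ``chain over the $\O{\log n}$ surviving buckets between $t_j$ and $n{-}W{+}1$'' is internally inconsistent: by your own choice of $t_j$ as the largest timestamp not exceeding $n{-}W{+}1$, there are \emph{no} surviving buckets in that interval, so there is nothing to chain over and no mechanism by which the $\eps/\log n$ accuracy gets consumed. If instead you prune at granularity $(1+\eps')$ to make the sandwich tight, you would keep $\Theta((\log n)/\eps')$ instances, not $\O{\log n}$, contradicting the stated bound.

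The reduction in \cite{BravermanDMMUWZ20} does not return a coreset for a superset of the window at all. It exploits two structural facts you did not use: the online-coreset guarantee is for \emph{every prefix} (so from an instance started at $t_j$ one can extract, at no extra cost, a $(1+\eps')$-coreset for any block $[t_j,i]$), and coresets for disjoint row-blocks are \emph{mergeable} (the union of $(1+\eps')$-coresets for $\bA_1$ and $\bA_2$ is a $(1+\eps')$-coreset for $\bA_1\circ\bA_2$). Together these let one maintain $\O{\log n}$ instances whose prefix-restrictions tile the window \emph{exactly}, and the $\O{\log n}$ merge/compose steps are where the $(1+\eps/\log n)$ per-instance accuracy telescopes to $(1+\eps)$. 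Your sketch should be rewritten around this prefix-restriction-plus-merge mechanism rather than a smooth-histogram sandwich.
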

Since \algref{alg:orlicz} admits an online coreset with probability $1-\frac{1}{\poly(n)}$, then \thmref{thm:coreset:framework} implies a sliding window algorithm for Orlicz norms:
\begin{theorem}[Subspace Embedding for Orlicz Norms in the Sliding Window Model]
\thmlab{thm:orlicz:se:sw}
Given $\eps>0$ and a matrix $\bA\in\mathbb{R}^{W\times d}$ whose rows $\a_1,\ldots,\a_W$ arrive sequentially in a stream $\r_1,\ldots,\r_n$ with condition number at most $\kappa$, there exists a sliding window algorithm that outputs a $(1+\eps)$ subspace embedding for an Orlicz norm with high probability. 
The algorithm samples $\O{\frac{d^2\Delta}{\eps^2}\log^5 n\log\kappa}$ rows, with high probability, where $\Delta$ is defined as in \lemref{lem:subspace}. 
\end{theorem}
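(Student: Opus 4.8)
The plan is to derive \thmref{thm:orlicz:se:sw} as a black-box consequence of the streaming result \thmref{thm:orlicz:se} together with the sliding-window reduction \thmref{thm:coreset:framework}. The key observation, already noted above, is that the correctness argument for \algref{alg:orlicz} given in the proof of \lemref{lem:subspace} proceeds by induction on the prefixes $\bA_j=\a_1\circ\cdots\circ\a_j$: at every time $j$ the reweighted sample $\bM_j$ satisfies $|\|\bM_j\x\|_G-\|\bA_j\x\|_G|\le\eps\|\bA_j\x\|_G$ for all $\x$, with probability $1-\frac{1}{\poly(n)}$. Hence \algref{alg:orlicz} is not merely a one-shot subspace embedding but an \emph{online coreset} for the Orlicz norm $\|\cdot\|_G$ in the sense of the definition above, and its failure probability is polynomially small, as required by \thmref{thm:coreset:framework}.

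First I would record the storage cost of this online coreset as a function of the accuracy parameter: running \algref{alg:orlicz} with accuracy $\eps'$ on a stream whose every consecutive submatrix has condition number at most $\kappa$ stores $S(n,d,\eps')=\O{\frac{d^2\Delta}{(\eps')^2}\log^2 n\log\kappa}$ rows with high probability, exactly as shown in the proof of \thmref{thm:orlicz:se}; here the online sensitivity sum bound of \lemref{lem:sum:sensitivities} applies to every prefix because every consecutive submatrix has condition number bounded by $\kappa$. Next I would invoke \thmref{thm:coreset:framework} with the input stream $\r_1,\ldots,\r_n$ and window matrix $\bA=\r_{n-W+1}\circ\cdots\circ\r_n$, using $W=\poly(n)$ as assumed throughout. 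That theorem produces a sliding window algorithm that stores $\O{S(n,d,\eps/\log n)\log n}$ rows and outputs $\bM$ such that $\|\bM\x\|_G$ is a $(1+\eps)$-approximation of $\|\bA\x\|_G$ for all $\x$.

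Finally I would simplify: substituting $\eps'=\eps/\log n$ gives $S(n,d,\eps/\log n)=\O{\frac{d^2\Delta}{\eps^2}\log^4 n\log\kappa}$, and the extra $\log n$ factor from \thmref{thm:coreset:framework} yields a total of $\O{\frac{d^2\Delta}{\eps^2}\log^5 n\log\kappa}$ sampled rows, which is the claimed bound.

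There is essentially no hard step here; the work has been front-loaded into \lemref{lem:subspace} and \thmref{thm:coreset:framework}. The only points requiring care are (i) checking that the guarantee in \lemref{lem:subspace} is genuinely a statement about all prefixes and not just the full matrix --- which it is, since the martingale and the induction are set up prefix by prefix --- and (ii) ensuring the $1-\frac{1}{\poly(n)}$ success probability survives the union bound over the $\O{\log n}$ nested instances of \algref{alg:orlicz} that the framework of \thmref{thm:coreset:framework} maintains, which again holds by taking the polynomial in the high-probability guarantee large enough.
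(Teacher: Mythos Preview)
Your proposal is correct and follows exactly the paper's approach: observe that the inductive proof of \lemref{lem:subspace} shows \algref{alg:orlicz} produces an online coreset with probability $1-\frac{1}{\poly(n)}$, then apply \thmref{thm:coreset:framework} and substitute $\eps'=\eps/\log n$ into the row bound of \thmref{thm:orlicz:se} to obtain the stated $\O{\frac{d^2\Delta}{\eps^2}\log^5 n\log\kappa}$. Your write-up is in fact more detailed than the paper's, which simply states that \thmref{thm:coreset:framework} implies the result once the online-coreset property is noted.
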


\section{Approximate Heavy-Hitters in the Sliding Window Model}
In this section, we describe our $\nu$-approximate $\eta$-heavy hitters algorithm that appears in \algref{alg:hh}, slightly perturbing constants for the ease of discussion. 
Our starting point is the $L_2$ norm estimation algorithm $\ams$ in~\cite{BravermanO07}. 
$\ams$ maintains a number of timestamps $\{t_i\}$ throughout the data stream, along with a separate streaming algorithm for each $t_i$ that stores a sketch of the $L_2$ norm of the elements in the stream after $t_i$. 
\cite{BravermanO07} observes that it suffices for $\{t_i\}$ to maintain the invariant that the sketches of at most two timestamps produce values that are within $2$ of each other, since by \lemref{lem:count} (and with precise constants) they would always output values that are within $2$ afterwards. 
Hence, if the length of the stream $m$ is polynomially bounded in $n$, then the number of total timestamps is $\O{\log n}$. 
Moreover, two of these timestamps will sandwich the starting point of the sliding window and provide a $2$-approximation to the $L_2$ norm of the active elements and more generally, there exists an algorithm $\ams$ that outputs a $2$-approximation to \emph{any} suffix of the stream. 
See \figref{fig:histogram} for an illustration about the intuition of $\ams$. 
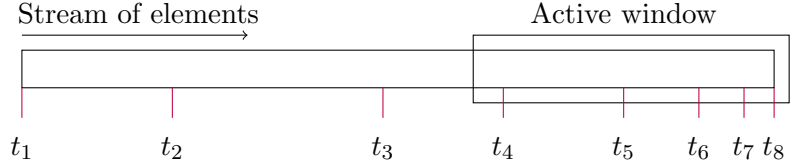
\begin{figure*}[!htb]
\centering
\begin{tikzpicture}
\draw (0,0) rectangle+(10,0.5);
\draw (6,-0.2) rectangle+(4.2,0.9);
\draw[purple] (10,0)--(10,-0.4);
\node at (10,-0.8){$t_8$};
\draw[purple] (9.6,0)--(9.6,-0.4);
\node at (9.6,-0.8){$t_7$};
\draw[purple] (9,0)--(9,-0.4);
\node at (9,-0.8){$t_6$};
\draw[purple] (8,0)--(8,-0.4);
\node at (8,-0.8){$t_5$};
\draw[purple] (6.4,0)--(6.4,-0.4);
\node at (6.4,-0.8){$t_4$};
\draw[purple] (4.8,0)--(4.8,-0.4);
\node at (4.8,-0.8){$t_3$};
\draw[purple] (2,0)--(2,-0.4);
\node at (2,-0.8){$t_2$};
\draw[purple] (0,0)--(0,-0.4);
\node at (0,-0.8){$t_1$};

\draw[->](0,0.7) -- (3,0.7);
\node at (1.55,1){Stream of elements};
\node at (8,1){Active window};
\end{tikzpicture}
\caption{$\ams$ maintains a series of timestamps $t_i$ along with a sketch of the $L_2$ norm of the updates from time $t_i$ to the end of the stream. 
The timestamps have the invariant that at most two sketches output values that are within $2$ of each other. 
In particular, $t_3$ and $t_4$ sandwich the $L_2$ norm of the active window within a factor of $2$.}
\figlab{fig:histogram}
\end{figure*}

To transition from $L_2$ norm estimation to $\eta$-heavy hitters, \cite{BravermanGO14,BravermanGLWZ18} simultaneously run instances of the $\countsketch$ heavy-hitter algorithm starting at each of the timestamps $t_i$. 
Any $\eta$-heavy hitter of the active elements must be a $\frac{\eta}{2}$-heavy hitter of the stream starting at some timestamp, since one of these timestamps $t_i$ contains the active elements but has $L_2$ norm at most $2$ times the $L_2$ norm of the active elements. 
Hence, all $\eta$-heavy hitters will be reported by the corresponding $\countsketch$ starting at $t_i$. 
However, it can also report elements that do not appear in the window at all, e.g., the elements after $t_i$ but before $m-W+1$. 
Thus, \cite{BravermanGO14,BravermanGLWZ18} also maintains a constant factor approximation to the frequency of each item reported by $\countsketch$ as a final check, through comparison with the estimated $L_2$ norm from $\ams$. 
These parameters are insufficient to obtain $\nu$-approximate $\eta$-heavy hitters, since 1) a constant factor approximation to each frequency cannot give a $(1+\nu)$-approximation and 2) if $\countsketch$ only reports elements once they are $\eta$-heavy, then it is possible that a constant fraction of the frequency is missed, e.g., if the frequency is $2\eta\cdot\norm{f}_2$. 
To address these issues, we apply two simple fixes in \figref{fig:crude}. 

\begin{figure}[!htb]
\begin{mdframed}
\begin{enumerate}
\item
Find a superset of the possible heavy-hitters of the active window by taking heavy-hitters of a superset of the active window, but with a lower threshold, i.e. $\O{\nu\eta}$ rather than $\eta$. 
\item
For each possible heavy-hitter, maintain a $(1+\O{\nu})$-approximation to its frequency. 
\item
Report the items with sufficiently high estimated frequency.
\end{enumerate}
\end{mdframed}
\vspace{-0.1in}
\caption{Crude outline of $\nu$-approximate $\eta$-heavy hitter sliding window algorithm.}
\figlab{fig:crude}
\end{figure}

First, we maintain a $(1+\O{\nu})$-approximation to the frequency of each item reported by $\countsketch$. 
However, we note that we only track the frequency of an item once it is reported by $\countsketch$ and thus the second issue still prevents our algorithm from reporting a $(1+\nu)$-approximation for sufficiently small $\nu$ because a constant fraction of the frequency can still be missed before being reported by $\countsketch$. 
Thus the second idea is to report items once they are $\frac{\nu\eta}{32}$-heavy hitters, so that only a $\O{\nu}$ fraction of the frequency can be missed before each heavy-hitter is tracked. 
We give a crude outline of our approach in \figref{fig:crude} and the algorithm in full in \algref{alg:hh}. 

\begin{algorithm}[!htb]
\caption{Algorithm for $\eta$-heavy hitters in sliding window model, with $(1+\nu)$-approximation to frequency of reported items.}
\alglab{alg:hh}
\begin{algorithmic}[1]
\Require{A stream of elements $u_1,\ldots,u_m\in[n]$, a window parameter $W>0$, threshold $\eta\in(0,1)$, accuracy parameter $\nu\in\left(0,\frac{1}{4}\right)$}
\Ensure{A list that contains all $\eta$-heavy hitters and no element that is not a $\frac{\eta}{2}$-heavy hitters, along with a $(1+\nu)$ to the frequency of all items.}
\State{Run an instance of $\ams$ on the stream.}
\State{$\T\gets\emptyset$}
\For{each update $u_t\in[n]$ with $t\in[m]$}
\State{$\T\gets\T\cup\{t\}$}
\State{Initialize $\countsketch_t$ with threshold $\frac{\nu\eta}{32}$.}
\Comment{Identify a superset of the heavy-hitters}
\State{$X_a\gets$ estimated $L_2$ norm of the frequency vector from time $a\in\T$ to $t$ by $\ams$.}
\While{exist $b<c\in\T$ with $c<t-W+1$ or $a<b<c\in\T$ with $X_a\le\frac{17}{16}X_c$}
\State{Delete $b$ from $\T$ and $\countsketch_b$.}
\EndWhile
\State{$H_a\gets$ heavy-hitters reported by $\countsketch_a$ from time $a\in\T$ to $t$.}
\State{$F\gets$ estimated $L_2$ norm of the frequency vector from time $\min(1,t-W+1)$ to $t$ by $\ams$.}
\For{all $a\in\T$ and $i\in H_a$}
\State{Use $\counter$ for $i$, starting at time $a$.}
\Comment{$\left(1+\frac{\nu}{4}\right)$-accuracy}
\State{$\widehat{f_i}\gets$ any underestimate to the frequency of $i$ in the last $W$ updates by $\counter$.} 
\If{$\widehat{f_i}\ge\frac{\eta}{2}\cdot F$}
\label{line:freq:check}
\State{Report $i$, with estimated frequency $\widehat{f_i}$}
\EndIf
\EndFor
\EndFor
\end{algorithmic}
\end{algorithm}

We first show that \algref{alg:hh} does not output any items with sufficiently low frequency. 
\begin{lemma}[Low frequency items are not reported]
\lemlab{lem:hh:low}
Let $f$ be the frequency vector induced by the active window. 
For each $i\in[n]$, if $f_i\le\frac{\eta}{8}\norm{f}_2$, then \algref{alg:hh} does not report $i$. 
\end{lemma}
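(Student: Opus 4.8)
The plan is to argue that the frequency check $\widehat{f_i}\ge\frac{\eta}{2}F$ (the test performed on Line~\ref{line:freq:check}) fails for \emph{every} timestamp $a\in\T$ with $i\in H_a$; since $i$ can only be reported when this check succeeds for some such $a$, this suffices. I would do this by upper bounding $\widehat{f_i}$ and lower bounding $F$. For the upper bound, observe that by construction $\widehat{f_i}$ is an \emph{underestimate} of the frequency of $i$ among the last $W$ updates, hence $\widehat{f_i}\le f_i$. This holds irrespective of whether the $\counter$ instance for the pair $(a,i)$ was initialized before or inside the active window: \lemref{lem:counter} guarantees a one-sided $\left(1+\frac{\nu}{4}\right)$-multiplicative estimate of the count of $i$ from time $a$ onward, so if $a\ge t-W+1$ the count on $[a,t]$ is already at most $f_i$, and if $a<t-W+1$ one subtracts an over-estimate of the count on $[a,t-W]$ from an under-estimate of the count on $[a,t]$; either way the resulting quantity is at most $f_i$. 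Combined with the hypothesis, $\widehat{f_i}\le f_i\le\frac{\eta}{8}\norm{f}_2$.

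For the lower bound, note that $F$ is the $\ams$ estimate of $\norm{f}_2$ for the active window, so by \lemref{lem:ams} it is a $2$-approximation and in particular $F\ge\frac12\norm{f}_2$. Therefore
\[\widehat{f_i}\le\frac{\eta}{8}\norm{f}_2<\frac{\eta}{4}\norm{f}_2\le\frac{\eta}{2}F,\]
so the check fails for this $a$. As this argument applies to every $a\in\T$ with $i\in H_a$, the item $i$ is never reported.

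The only nontrivial point is the claim $\widehat{f_i}\le f_i$: one must be sure that the under-estimate pulled out of $\counter$ does not accidentally exceed the true in-window frequency when the relevant timestamp $a$ predates $t-W+1$, which is where the one-sidedness in \lemref{lem:counter} and the difference-of-counts argument are used. Everything else is the one-line chain of inequalities above. (The harder complementary statement --- that $\widehat{f_i}$ is not \emph{too} small, so that genuine $\eta$-heavy hitters pass the check --- is not needed here and would be established separately.)
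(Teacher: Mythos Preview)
Your proposal is correct and follows essentially the same argument as the paper: case on whether $i\in H_a$ for some $a\in\T$, and if so use that $\widehat{f_i}$ is by construction an underestimate of $f_i$ together with the $\ams$ guarantee on $F$ to conclude the check on Line~\ref{line:freq:check} fails. The paper invokes the tighter constant $C=\tfrac{17}{16}$ from \lemref{lem:ams} rather than the looser $F\ge\tfrac12\norm{f}_2$ you use, and it simply takes ``$\widehat{f_i}$ is an underestimate'' directly from the algorithm's specification rather than unpacking the $\counter$ mechanics as you do, but these are cosmetic differences.
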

\begin{proof}
Observe that either (1) $i\in\cup_{a\in\T}H_a$, so that $i$ is a $\frac{\nu\eta}{32}$-heavy hitter of some suffix of the stream, or (2) $i\in\cup_{a\in\T}H_a$. 
In the latter case, $i\notin H_a$ for any $a\in\T$, so then $i$ will not be reported. 
In the former case, an instance of $\counter$ is maintained for $i$, so that $\widehat{f_i}$ is an underestimate of $f_i$. 
But $F$ is a $C$-approximation to $\norm{f}_2$ with $C=\frac{17}{16}$, so then $f_i\le\frac{\eta}{8}\norm{f}_2$ implies $\widehat{f_i}<\frac{\eta}{2}\cdot F$ for $\nu\in\left(0,\frac{1}{4}\right)$. 
Thus, $i$ will not be reported due to Line~\ref{line:freq:check} of \algref{alg:hh}. 
\end{proof}

Next we show that not only are the heavy-hitters reported, but the estimated frequency for each reported item is also a $(1+\nu)$ approximation to the true frequency. 
\begin{lemma}[Heavy-hitters are reported accurately]
\lemlab{lem:hh:high}
Let $f$ be the frequency vector induced by the active window. 
For each $i\in[n]$, if $f_i\ge\eta\cdot\norm{f}_2$, then \algref{alg:hh} reports $i$. 
Moreover, $\widehat{f_i}\le f_i\le(1+\nu)\widehat{f_i}$ for any item $i$ reported by \algref{alg:hh}. 
\end{lemma}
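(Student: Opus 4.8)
The plan is to establish two things for each $i$ with $f_i \ge \eta\|f\|_2$: first, that $i$ is reported; second, that the reported estimate $\widehat{f_i}$ satisfies $\widehat{f_i}\le f_i\le(1+\nu)\widehat{f_i}$. The argument splits naturally according to where the heavy-hitter ``enters'' the active window. First I would locate the two timestamps in $\T$ that sandwich the start of the active window: by the invariant maintained in the \texttt{while} loop (the same invariant underlying $\ams$ and \lemref{lem:count}), there is some $a\in\T$ with $a\le m-W+1$ whose suffix $L_2$ norm is at most $2\|f\|_2$ (in fact at most $C$ times the norm at the neighboring later timestamp, which is itself at most $\|f\|_2$ by monotonicity of insertion-only streams, so within a factor $2$ overall). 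Then $f_i\ge\eta\|f\|_2$ means $f_i \ge \frac{\eta}{2}\cdot(\text{suffix }L_2\text{ norm from }a)$, and since the frequency of $i$ on the suffix from $a$ is at least $f_i$ (that suffix contains the active window), $i$ is a $\frac{\eta}{2}$-heavy hitter of the suffix starting at $a$; a fortiori it exceeds the $\countsketch_a$ threshold $\frac{\nu\eta}{32}$ (using $\nu<\frac14$, $\eta<1$). Hence $i\in H_a$, so a \texttt{Counter} instance for $i$ starting at time $a$ is running.

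Next I would control how much of $i$'s active-window frequency could have been missed before the \texttt{Counter} began. Because we report items as soon as they are $\frac{\nu\eta}{32}$-heavy on \emph{some} suffix, and because suffix norms only grow more stringent as timestamps advance while being tied to within a factor $2$ of $\|f\|_2$, the \texttt{Counter} for $i$ was actually initialized at a timestamp $a'\in\T$ at which $i$ had already accumulated frequency at least roughly $\frac{\nu\eta}{32}\cdot\frac{1}{2}\|f\|_2$, i.e. at least a $\Theta(\nu)$ fraction of $f_i$ — wait, that is the wrong direction; rather I want the complementary bound: the frequency of $i$ accrued in the active window \emph{before} $a'$ is at most the total suffix count minus what the counter sees, and since $a'$ is chosen as late as the invariant permits, this pre-counter mass is at most a $\Theta(\nu)$ fraction of $f_i$. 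The cleanest route is: let $c$ be the frequency of $i$ in the window that the \texttt{Counter} from $a'$ captures (so the true count from $a'$ is between $c$ and $(1+\nu/4)$ times the value \texttt{Counter} would report, but we take an underestimate $\widehat{f_i}$, so $\widehat{f_i}\le$ true count from $a' \le f_i$, giving the lower bound $\widehat{f_i}\le f_i$ for free). For the upper bound $f_i\le(1+\nu)\widehat{f_i}$, write $f_i = (\text{mass before }a') + (\text{mass from }a')$; bound the first term by $\frac{\nu\eta}{32}\cdot 2\|f\|_2 \le \frac{\nu}{16}\cdot 2 f_i = \frac{\nu}{8}f_i$ (using $f_i\ge\eta\|f\|_2$ so $\eta\|f\|_2\le f_i$, and the factor-$2$ sandwich), and bound the second term's gap to $\widehat{f_i}$ using the $(1+\nu/4)$ accuracy of \texttt{Counter} plus the underestimate slack; combining, $f_i \le (1+\nu/8)(1+\nu/4)\widehat{f_i} \le (1+\nu)\widehat{f_i}$ for $\nu\in(0,\frac14)$.

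Finally I would verify the reporting check on Line~\ref{line:freq:check}: we have $\widehat{f_i}\ge f_i - \frac{\nu}{8}f_i - (\text{counter slack}) \ge \frac{1}{2}f_i$ say, and $F\le\|f\|_2\le C F$ with $C=\frac{17}{16}$, so $\widehat{f_i}\ge\frac12 f_i\ge\frac{\eta}{2}\|f\|_2\ge\frac{\eta}{2}F$ — actually I need the inequality in the form $\widehat{f_i}\ge\frac{\eta}{2}F$, and since $F\le\|f\|_2$ and $\widehat{f_i}\ge(1-O(\nu))f_i\ge(1-O(\nu))\eta\|f\|_2\ge\frac{\eta}{2}\|f\|_2\ge\frac{\eta}{2}F$, this holds comfortably for $\nu<\frac14$. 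Hence $i$ is reported. The main obstacle I anticipate is the bookkeeping in the second paragraph: pinning down precisely which timestamp's \texttt{Counter} governs $i$ and bounding the pre-counter mass by a genuine $O(\nu)$ fraction of $f_i$ requires carefully chaining the $\countsketch$ threshold ($\frac{\nu\eta}{32}$), the factor-$2$ suffix-norm sandwich from the $\ams$/smooth-histogram invariant, and the hypothesis $f_i\ge\eta\|f\|_2$, all while tracking which quantities are over- versus under-estimates. Everything else is routine constant-pushing that goes through for $\nu\in(0,\frac14)$ with the constants chosen in \algref{alg:hh}.
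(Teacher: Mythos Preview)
Your approach mirrors the paper's: locate a timestamp $a\le m-W+1$ whose suffix $L_2$ norm is within a constant of $\norm{f}_2$, argue that $i$ is picked up by $\countsketch_a$ once $\Theta(\nu\eta)\norm{f}_2$ instances have arrived, bound the mass missed before the $\counter$ starts by $\frac{\nu\eta}{16}\norm{f}_2$, and combine with the $(1+\nu/4)$ $\counter$ guarantee. The reporting check and the first part of the lemma go through essentially as you outline.

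There is, however, a real gap in your treatment of the second claim. The lemma asserts $\widehat{f_i}\le f_i\le(1+\nu)\widehat{f_i}$ for \emph{every} item $i$ that is reported, not only for those with $f_i\ge\eta\norm{f}_2$; by \lemref{lem:hh:low} the algorithm may also report items with $f_i$ as small as roughly $\frac{\eta}{8}\norm{f}_2$. Your opening sentence explicitly restricts both parts to $i$ with $f_i\ge\eta\norm{f}_2$, and your bound on the pre-counter mass, $\frac{\nu\eta}{32}\cdot 2\norm{f}_2\le\frac{\nu}{8}f_i$, invokes precisely that hypothesis, which is unavailable for an arbitrary reported item. The paper closes this gap by using the reporting threshold itself: if $i$ is reported then $\widehat{f_i}\ge\frac{\eta}{2}F$, and since $\widehat{f_i}$ underestimates $f_i$ while $F$ is within a constant factor of $\norm{f}_2$, this yields $f_i\ge\Omega(\eta)\norm{f}_2$ for \emph{any} reported $i$. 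With that lower bound in hand, the missed mass $\frac{\nu\eta}{16}\norm{f}_2$ is again an $O(\nu)$ fraction of $f_i$, and the constant-pushing you sketch goes through unchanged. So the fix is a one-line reroute: in the accuracy argument, replace the appeal to $f_i\ge\eta\norm{f}_2$ with the consequence of Line~\ref{line:freq:check}.
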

\begin{proof}
For $\nu\in\left(0,\frac{1}{4}\right)$, the condition $f_i\ge\eta\cdot\norm{f}_2$ implies that $i$ is a $\nu\eta$-heavy hitter of some suffix of the stream. 
Namely for a stream of length $m$, let $a\in\T$ with $a\le m-W+1$, so that the $L_2$ norm of the underlying frequency induced by the updates starting from time $a$ is a $C$-approximation of $\norm{f}_2$, with $C=\frac{17}{16}$. 
Then once $\frac{\nu\eta}{16}\cdot\norm{f}_2$ instances of $i$ are inserted at some time $t$ after $m-W+1$, $i$ will \emph{always} be reported as a $\frac{\nu\eta}{32}$-heavy hitter by $\countsketch_a$. 
Hence, $i\in H_a$ and an instance of $\counter$ is maintained for $i$, starting at time $t$. 
Since at most $\frac{\nu\eta}{16}\cdot\norm{f}_2$ instances of $i$ arrive before $t$, certainly at least $\left(1-\frac{\nu}{2}\right)\eta\cdot\norm{f}_2$ instances of $i$ remain after $t$. 
Hence for $\nu\in\left(0,\frac{1}{4}\right)$, $\counter$ reports at least 
\[\left(\frac{1}{1+\frac{\nu}{4}}\right)\left(1-\frac{\nu}{2}\right)\eta\cdot\norm{f}_2\ge\left(\frac{1}{1+\nu}\right)\eta\cdot\norm{f}_2\ge\frac{\eta}{2}\cdot F\]
instances of $i$ in the active window, since $F\le\norm{f}_2$. 
Thus, $i$ passes the check of Line~\ref{line:freq:check} and is reported by \algref{alg:hh}, which completes the first part of the claim.

Similarly, at most $\frac{\nu\eta}{16}\cdot\norm{f}_2$ instances of \emph{any} item $i$ reported by \algref{alg:hh} can be missed before $i$ is reported as a $\frac{\nu\eta}{32}$-heavy hitter. 
If $i$ passes the check of Line~\ref{line:freq:check} and is reported by \algref{alg:hh}, then 
\[\widehat{f_i}\ge\frac{\eta}{2}\cdot F\ge\frac{\eta}{4}\cdot\norm{f}_2,\]
since $F\ge2\norm{f}_2$. 
Thus, $f_i\ge\frac{\eta}{4\left(1+\frac{\nu}{4}\right)}\cdot\norm{f}_2$. 
Hence the additive error in the estimation of $f_i$ due to the missing $\frac{\nu\eta}{16}\cdot\norm{f}_2$ instances of $i$ is only a relative $\frac{\nu\left(1+\frac{\nu}{4}\right)}{4}$ error. 
Combined with the relative $\left(1+\frac{\nu}{4}\right)$ error of $\counter$, the total relative error is at most $(1+\nu)$, for $\nu\in\left(0,\frac{1}{4}\right)$. 
\end{proof}
Finally, we justify the correctness of our $\nu$-approximate $\eta$-heavy hitters sliding window algorithm and analyze the space complexity. 
\begin{proofof}{\thmref{thm:hh}}
Consider \algref{alg:hh}. 
Observe that the correctness guarantees of the algorithm follow immediately from \lemref{lem:hh:low} and \lemref{lem:hh:high}. 
The space complexity follows from noting that for $\log m=\O{\log n}$, the $L_2$ norm of the underlying vector of the entire stream is polynomially bounded in $n$. 
Thus, there are at most $\O{\log n}$ times in $\T$ by \lemref{lem:count}. 
For each time $a\in\T$, \algref{alg:hh} uses an instance of $\countsketch$ with threshold $\nu\eta$, an instance of $\ams$, and an instance of $\counter$ for each heavy-hitter reported by $\countsketch_a$. 
By \thmref{thm:countsketch}, each instance of $\countsketch$ uses $\O{\frac{1}{\nu^2\eta^2}\log^2 n}$ bits of space. 
By \lemref{lem:ams}, each instance of $\ams$ uses $\O{\log^2 n}$ bits of space. 
Each $\countsketch_a$ with threshold $\nu\eta$ can report up to $\O{\frac{1}{\nu^2\eta^2}}$ items, and each instance of $\counter$ uses $\O{\frac{1}{\nu}\log^2 n}$ bits of space by \lemref{lem:counter}. 
Thus, the total space used by \algref{alg:hh} is $\O{\frac{1}{\nu^3\eta^2}\log^3 n}$ bits. 
\end{proofof}

\section{Symmetric Norms}
\seclab{sec:sym:norm}
In this section, we formalize our symmetric norm sliding window algorithm and give a number of applications.   
We first require the following preliminary definitions that quantify specific properties of symmetric norms. 
\begin{definition}[Modulus of concentration]
Let $X\in\mathbb{R}^n$ be a random variable uniformly distributed on the $L_2$-unit sphere $S^{n-1}$. 
The median of a symmetric norm $\ell$ is the unique value $\M_{\ell}$ such that $\PPr{\ell(X)\ge\M_{\ell}}\ge\frac{1}{2}$ and $\PPr{\ell(X)\le\M_{\ell}}\ge\frac{1}{2}$. 
Then if $\b_{\ell}$ denotes the maximum value of $\ell(x)$ over $x\in S^{n-1}$, then the ratio $\mc(\ell):=\frac{\b_{\ell}}{\M_{\ell}}$ is called the \emph{modulus of concentration} of the norm $\ell$. 
\end{definition}
The modulus of concentration characterizes the average behavior of the norm $\ell$ on $\mathbb{R}^n$. 
However, even if $\ell$ is well-behaved on average, more difficult norms can be embedded and hidden in a lower-dimensional subspace. 
For example, \cite{BlasiokBCKY17} observes that $\mc(\ell)=\O{1}$ for the $L_1$ norm $\ell$, but when $x$ has fewer than $\sqrt{n}$ nonzero coordinates, the norm $\ell(x)=\max(L_{\infty}(x),L_1(x)/\sqrt{n})$ on the unit ball becomes identically $L_{\infty}(x)$, which requires $\Omega(\sqrt{n})$ space~\cite{AlonMS99}. 
Thus, we instead consider the modulus of concentration over all lower dimensions.
\begin{definition}[Maximum modulus of concentration]
For every $k\le n$, the norm $\ell:\mathbb{R}^n\to\mathbb{R}$ induces a norm on $\mathbb{R}^k$ by setting $\ell^{(k)}((x_1,\ldots,x_k))=\ell((x_1,\ldots,x_k,0,\ldots,0))$.
The \emph{maximum modulus of concentration} of the norm $\ell$ is defined as $\mmc(\ell):=\underset{k\le n}{\max}\mc(\ell^{(k)})=\underset{k\le n}{\max}\frac{\b_{\ell^{(k)}}}{\M_{\ell^{(k)}}}$.
\end{definition}
We now reduce the problem of approximating a symmetric norm $\ell$ to the $\nu$-approximate $\eta$-heavy hitters problem.
\begin{lemma}[Symmetric norm approximation through heavy-hitters]
\lemlab{lem:hh:to:out}
\cite{BlasiokBCKY17}
Let $\ell$ be any symmetric norm, $\eps>0$ and $\nu:=\O{\frac{\eps^2}{\log n}}$ be fixed accuracy parameters, and $\eta:=\O{\frac{\eps^{5/2}}{\mmc(\ell)\log^{5/2} n}}$ be a fixed threshold. 
Let $R=\Theta\left(\frac{\log^{10}n}{\eps^5}\right)$ and for each $i\in[\log n]$ and $r\in[R]$, let $j\in[n]$ be sampled into $S_{i,r}$ with probability $\frac{1}{2^i}$. 
Let $f$ be a frequency vector (possibly implicitly) defined on $[n]$ and for each $i\in[\log n]$, let $g_{i,r}$ be the frequency vector induced by setting all coordinates $j\in[n]$ of $f$ with $j\notin S_i$. 

Suppose there exists an algorithm that outputs $\nu$-approximate $\eta$-heavy hitters $H_{i,r}$ for each $g_{i,r}$. 
There exists a recovery function $\estimate$ that recovers a $(1+\eps)$-approximation to $\ell(f)$ using $\{H_{i,r}\}$. 
The running time of $\estimate$ is polynomial in $\frac{1}{\eps}$ and $n$ and the working space of $\estimate$ is the space used to store $\{H_i\}$. 
\end{lemma}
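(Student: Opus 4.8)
The plan is to recover, from the heavy-hitter lists, an approximate histogram of the magnitudes of $f$ and then evaluate $\ell$ on a vector with that histogram, exactly along the level-set framework of \cite{BlasiokBCKY17}. Since $\ell$ is symmetric, $\ell(f)=\ell(|f|)$ depends only on the multiset $\{|f_j|\}_{j\in[n]}$; moreover symmetric norms are monotone (if $0\le x\le y$ coordinatewise then $\ell(x)\le\ell(y)$, by averaging sign flips), so quantizing every magnitude down to the nearest power of $(1+\eps)$ changes $\ell(f)$ by at most a $(1+\eps)$ factor. Concretely, define level sets $B_k=\{j:|f_j|\in((1+\eps)^{k-1},(1+\eps)^k]\}$ with counts $b_k=|B_k|$, and let $v$ be the vector with $b_k$ coordinates equal to $(1+\eps)^k$ for each $k$; then $\ell(f)\le\ell(v)\le(1+\eps)\ell(f)$. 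Thus $\estimate$ only needs to learn, for the levels that matter, a $(1+O(\eps))$-approximation $\hat b_k$ of $b_k$ together with the corresponding level values, and then output $\ell$ of the reconstructed vector; its working space is whatever is needed to store $\{H_{i,r}\}$ and its running time is $\poly(n,\frac{1}{\eps})$.

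The parameter $\mmc(\ell)$ enters through the fact that the norm is essentially insensitive to ``light'' level sets: for a symmetric norm, a block of coordinates whose $\ell_2$-signature $(1+\eps)^k\sqrt{b_k}$ is a sufficiently small $\poly\!\left(\frac{\eps}{\mmc(\ell)\log n}\right)$ fraction of $\max_{k'}(1+\eps)^{k'}\sqrt{b_{k'}}$ contributes only an $O(\eps)$ fraction of $\ell(f)$ and may be dropped from the reconstruction; call the remaining levels \emph{important}, of which there are only $\poly(\frac{1}{\eps},\log n)$. For an important level $k$ we use the sampling rate $2^{-i}$ with $2^{-i}=\Theta\!\left(\frac{\log n}{\eps\, b_k}\right)$ (scanning the $O(\log n)$ available rates realizes this without knowing $b_k$). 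At this rate $\norm{g_{i,r}}_2^2$ concentrates around $2^{-i}\sum_{k'}(1+\eps)^{2k'}b_{k'}$, and a direct computation using the stated $\eta=\Theta\!\left(\frac{\eps^{5/2}}{\mmc(\ell)\log^{5/2}n}\right)$ shows $(1+\eps)^k\ge\eta\norm{g_{i,r}}_2$, so every surviving coordinate of $B_k$ is an $\eta$-heavy hitter of $g_{i,r}$ and hence lies in $H_{i,r}$ with a reported frequency within $(1+\nu)$ of $(1+\eps)^k$; since $\nu=\Theta(\eps^2/\log n)\ll\eps$, this $(1+\nu)$-approximate frequency unambiguously identifies the level, and the non-inclusion guarantee of the heavy-hitter oracle keeps coordinates from much lighter levels out of the list.

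Given the lists, $\estimate$ computes, for each candidate level $k$ and each rate $i$, the number $c_{k,i,r}$ of coordinates of $H_{i,r}$ whose reported frequency rounds to level $k$, selects for each important $k$ the matched rate $i=i(k)$ (the finest rate at which $B_k$-coordinates are reported, which is detectable from the lists), and sets $\hat b_k=\frac{2^{i}}{R}\sum_{r\in[R]}c_{k,i,r}$. Since each coordinate of $B_k$ is included in $S_{i(k),r}$ independently with probability $2^{-i}$, this estimator is unbiased, and with $R=\Theta(\log^{10}n/\eps^5)$ independent repetitions it has relative deviation $O(\nu)$; a Chernoff bound and a union bound over the $\poly(\frac{1}{\eps},\log n)$ pairs $(k,i)$ make $\hat b_k=(1\pm O(\nu))b_k$ for every important level with probability $1-1/\poly(n)$. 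Finally $\estimate$ outputs $\ell(\tilde f)$, where $\tilde f$ has $\hat b_k$ coordinates equal to the reported level-$k$ value for each important $k$.

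It remains to assemble the error. Dropping unimportant levels only decreases $\ell$ (monotonicity) and does so by at most a $(1+O(\eps))$ factor (the $\mmc$ bound); quantization of the surviving levels is another $(1+\eps)$; the per-coordinate $(1+\nu)$ frequency errors and the $(1+O(\nu))$ count errors perturb $\ell(\tilde f)$ by $1+O(\nu\cdot\poly(\log n))$, which is why $\nu$ is set to $\Theta(\eps^2/\log n)$. Multiplying these out gives $\ell(\tilde f)=(1\pm\eps)\ell(f)$ with high probability over the choice of the $S_{i,r}$. The main obstacle is the sampling analysis of the second paragraph: one must show simultaneously that at the matched rate the $\ell_2$ mass of $g_{i,r}$ is well concentrated and is not blown up by an unlucky heavy survivor from a lighter-but-larger level set, and that with the resulting $\norm{g_{i,r}}_2$ the value $(1+\eps)^k$ still clears the bar $\eta\norm{g_{i,r}}_2$ — this chain of inequalities is exactly what forces the precise exponents in $\eta$, $\nu$, and $R$. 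Since the entire argument is carried out in \cite{BlasiokBCKY17}, the only new observation needed here is that their recovery procedure touches the subsampled vectors $g_{i,r}$ solely through a $\nu$-approximate $\eta$-heavy-hitter primitive, so it can be driven verbatim by the sliding-window heavy-hitter subroutine of \thmref{thm:hh}.
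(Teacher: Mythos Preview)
Your proposal is correct and follows essentially the same level-set/Indyk--Woodruff framework that the paper sketches in \appref{app:estimate} (and attributes to \cite{BlasiokBCKY17}): quantize into geometric levels, identify the important ones via an $\mmc(\ell)$-dependent threshold, estimate their sizes by subsampling at matched rates so that surviving level-$k$ coordinates become $\eta$-heavy in $g_{i,r}$, and evaluate $\ell$ on the reconstructed level vector. The only cosmetic difference is that the paper's notion of a $\beta$-important level uses two separate conditions (size relative to higher levels and weighted contribution relative to lower levels) rather than your single $\ell_2$-signature criterion, but since both sketches ultimately defer the full analysis to \cite{BlasiokBCKY17}, this does not constitute a different approach.
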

Informally, \lemref{lem:hh:to:out} states that to obtain a $(1+\eps)$-approximation to any symmetric norm $\ell$ of an underlying frequency vector, it suffices to use a $\nu$-approximate $\eta$ heavy-hitter algorithm. 
Here, $\eta$ and $\nu$ are parameters dependent on the norm $\ell$. 
We give additional intuition into \lemref{lem:hh:to:out} and its proof by \cite{BlasiokBCKY17} in \appref{app:estimate}. 

We now formalize the argument of \thmref{thm:sym:norm} by derandomizing the space complexity and analyzing the space complexity. 
The proof of \thmref{thm:sym:universal} is identical. 
\begin{proofof}{\thmref{thm:sym:norm}}
Let $\ell$ be any symmetric norm, $\eps>0$ and $\nu:=\O{\frac{\eps^2}{\log n}}$ be fixed accuracy parameters, and $\eta:=\O{\frac{\eps^{5/2}}{\mmc(\ell)\log^{5/2} n}}$ be a fixed threshold. 
Let $R=\Theta\left(\frac{\log^{10}n}{\eps^5}\right)$ and for each $i\in[\log n]$ and $r\in[R]$, let $j\in[n]$ be sampled into $S_{i,r}$ with probability $\frac{1}{2^i}$. 
Let $f$ be a frequency vector (possibly implicitly) defined on $[n]$ and for each $i\in[\log n]$, let $g_{i,r}$ be the frequency vector induced by setting all coordinates $j\in[n]$ of $f$ with $j\notin S_i$. 
Suppose we run an instance $A_{i,r}$ of our $\nu$-approximate $\eta$-heavy hitters algorithm \algref{alg:hh} for each $g_{i,r}$. 
By \thmref{thm:hh}, we obtain the $\nu$-approximate $\eta$-heavy hitters $H_{i,r}$ for each $g_{i,r}$. 
Thus by \lemref{lem:hh:to:out}, we can use $H_{i,r}$ and the procedure $\estimate$ to recover a $(1+\eps)$-approximation to $\ell(f)$. 

To analyze the space complexity, observe that we have a single instance of the $\nu$-approximate $\eta$-heavy hitters algorithm \algref{alg:hh} for each $g_{i,r}$, with $\nu:=\O{\frac{\eps^2}{\log n}}$, $\eta:=\O{\frac{\eps^{5/2}}{\mmc(\ell)\log^{5/2} n}}$. 
By \thmref{thm:hh}, the total space required for each instance of \algref{alg:hh} is $\O{\frac{\log^8 n}{\eps^{11}}\mmc(\ell)^2}$. 
Since $i\in[\log n]$, $r\in[R]$, and $R=\Theta\left(\frac{\log^{10}n}{\eps^5}\right)$, then the total space for the algorithm is $\O{\frac{\log^{19} n}{\eps^{16}}\mmc(\ell)^2}$, given unlimited access to random bits. 
Finally, if we use Nisan's PRG to derandomize our algorithm, then the total space for the algorithm is $\O{\frac{\log^{20} n}{\eps^{16}}\mmc(\ell)^2}$. 
We give full details on the derandomization in \appref{app:rng}.  
\end{proofof}

\subsection{Applications}
\seclab{sec:apps}
In this section, we demonstrate the application of \thmref{thm:sym:norm} and \thmref{thm:sym:universal} to a number of symmetric norms. 
We summarize our results in \figref{fig:results}.

\paragraph{$Q'$-norms.}
We first that a $(1+\eps)$-approximation of any $Q'$-norm, i.e., quadratic norm, in the sliding window model only requires polylogarithmic space, using the maximum modulus of concentration characterization of $Q$-norms by \cite{BlasiokBCKY17}. 
\begin{definition}[$Q$-norm and $Q'$-norm]
A norm $\ell:\mathbb{R}^n\to\mathbb{R}$ is a $Q$-norm if there exists a symmetric norm $L:\mathbb{R}^n\to\mathbb{R}$ such that for all $x\in\mathbb{R}^n$, we have $\ell(x)=L(x^2)^{1/2}$, where $x^2$ denotes the coordinate-wise square power of $x$. 
Then a norm $\ell':\mathbb{R}^n\to\mathbb{R}$ is a $Q'$-norm if its dual norm is a $Q$-norm. 
\end{definition}
$Q'$-norms includes the $L_p$ norms for $1\le p\le 2$. 
\cite{BlasiokBCKY17} also notes that multiple $Q'$-norms have been proposed to regularize sparse recovery problems in machine learning. 
For example, \cite{ArgyriouFS12} shows that the $k$-support norm, whose unit ball is the convex hull of the set $\{x\in\mathbb{R}^n:\norm{x}_0\le k\text{ and }\ell_2(x)\le 1\}$, is a $Q'$-norm that has a tighter relaxation than elastic nets and can thus be more effective for sparse prediction. 
The box norm~\cite{McDonaldPS14}, defined for $\Theta=\{\theta\in[a,b]^n\,:\,\ell_1(x)\le c\}$, given parameters $0<a<b\le c$, as $\ell_{\Theta}(x)=\min_{\theta\in\Theta}\left(\sum_{i=1}^n x_i^2/\theta_i\right)^{1/2}$, is a $Q'$-norm that is also a generalization of the $k$-support norm. 
The box norm has been used to further optimize algorithms for the sparse prediction problem specifically in the context of multitask clustering~\cite{McDonaldPS14}. 
\begin{lemma}
\lemlab{lem:mmc:qnorm}
\cite{BlasiokBCKY17}
$\mmc(\ell)=\O{\log n}$ for every $Q'$-norm $\ell$. 
\end{lemma}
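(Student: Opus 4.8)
The plan is to exploit duality together with the structural characterization of $Q$-norms. Recall that a $Q'$-norm $\ell$ is by definition the dual of some $Q$-norm $\ell^*$, and a $Q$-norm has the form $\ell^*(x) = L(x^2)^{1/2}$ for a symmetric norm $L$. The high-level strategy is: first bound $\mmc$ for $Q$-norms directly, then transfer the bound to $Q'$-norms via a duality argument relating $\mc(\ell)$ and $\mc(\ell^*)$.

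First I would handle $Q$-norms. Fix $k\le n$ and work with $\ell^{*(k)}$. For a uniformly random $X$ on $S^{k-1}$, we have $\ell^*(X)^2 = L(X^2)$ where $X^2$ is the vector of squared coordinates — which lies on the $L_1$-unit simplex. The key point is that a symmetric norm $L$ evaluated on the simplex concentrates tightly around its median: any symmetric norm on the simplex satisfies $L(y) \le L(\mathbf{1}/k) \cdot (\text{small factor})$ by symmetry and convexity, so the ratio of max to median of $L(X^2)$ over the distribution of $X^2$ is $O(\text{polylog})$ — intuitively because $X^2$ is itself concentrated (each coordinate is $\Theta(1/k)$ up to logarithmic factors with high probability, by standard concentration on the sphere). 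Taking square roots only improves the ratio, giving $\mc(\ell^{*(k)}) = O(\sqrt{\log n})$ or so, uniformly in $k$; hence $\mmc(\ell^*) = O(\log n)$ for $Q$-norms.

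Next I would transfer this to the dual. Here I would invoke a comparison between $\mc$ of a norm and $\mc$ of its dual: since for any norm, $\langle x, y\rangle \le \ell(x)\ell^*(y)$, one can relate $\b_\ell \cdot \b_{\ell^*}$ and $\M_\ell \cdot \M_{\ell^*}$ through John's theorem / the fact that the unit ball and its polar are in a bounded relationship after the right normalization — more precisely, for symmetric norms the quantity $\b_\ell \cdot \M_{\ell^*}$ and $\M_\ell \cdot \b_{\ell^*}$ are both $\Theta(\sqrt{n})$ up to polylogarithmic factors (this type of relation appears in the analysis of \cite{BlasiokBCKY17}). Combining $\mc(\ell^{(k)}) \cdot \mc(\ell^{*(k)}) = O(\polylog n)$ with the $Q$-norm bound $\mc(\ell^{*(k)}) = O(\sqrt{\log n})$ — wait, that goes the wrong way; instead the correct relation is that $\mc(\ell^{(k)})$ is controlled by $1/\mc(\ell^{*(k)})$ times a polylog factor is not useful, so one actually needs the sharper duality statement from \cite{BlasiokBCKY17} that directly bounds $\mmc$ of a norm in terms of $\mmc$ of its dual up to $O(\polylog n)$, applied with the $Q$-norm bound on the dual side.

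The main obstacle I expect is precisely this last transfer step: getting a clean quantitative relationship between $\mc(\ell^{(k)})$ and $\mc(\ell^{*(k)})$ that loses only polylogarithmic factors. Bounding $\mmc$ for $Q$-norms themselves is relatively routine concentration-on-the-sphere work, but duality for the modulus of concentration is delicate because $\mc$ is a ratio of two quantities (max and median) and dualizing each behaves differently — one typically needs to go through the observation that for a symmetric norm, the median on the sphere is comparable to $\frac{1}{\sqrt{n}}(\b_\ell + n \cdot \text{(value at } \mathbf{e}_1 \text{-type vectors)})$-style bounds, and then match these up across the dual pairing. Given that this is cited to \cite{BlasiokBCKY17}, I would expect the actual proof to simply quote their Lemma establishing the duality bound for $\mmc$ and plug in the $Q$-norm computation, rather than reproving the duality from scratch.
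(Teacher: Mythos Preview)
The paper does not prove this lemma at all: it is stated with a citation to \cite{BlasiokBCKY17} and used as a black box, with no accompanying argument. So there is no ``paper's own proof'' to compare your proposal against. Your closing remark is exactly right---in this paper the result is simply quoted.

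As for the content of your sketch: the first half (bounding $\mmc$ for $Q$-norms via concentration of $X^2$ on the simplex) is headed in a reasonable direction, but the duality transfer step is, as you yourself flag, not under control. The relation you reach for, $\mc(\ell^{(k)})\cdot\mc(\ell^{*(k)}) = O(\polylog n)$, would let you conclude $\mc(\ell^{(k)})$ is \emph{large} when $\mc(\ell^{*(k)})$ is small, which is the wrong direction; and the alternative ``$\mc(\ell)$ is bounded by $\mc(\ell^*)$ times polylog'' is what you would need but is not something you have derived. The actual argument in \cite{BlasiokBCKY17} does go through a duality-flavored step, but it relies on a structural fact specific to $Q$/$Q'$-norms (essentially that for a $Q'$-norm the values $\ell(\xi^{(k)})$ on the flat vectors $\xi^{(k)}=(1/\sqrt{k},\ldots,1/\sqrt{k},0,\ldots,0)$ are monotone in $k$), together with a general lemma that $\mmc(\ell)$ is controlled, up to $O(\log n)$, by the ratio $\max_k \ell(\xi^{(k)}) / \min_k \ell(\xi^{(k)})$. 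That monotonicity-plus-ratio route is quite different from a generic $\mc$-duality inequality, and is where your sketch has a genuine gap.
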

\noindent
From \thmref{thm:sym:norm} and \lemref{lem:mmc:qnorm}, we obtain a sliding window algorithm for $Q'$-norm estimation. 
\begin{corollary}
\corlab{cor:qnorm}
Given $\eps>0$, there exists a sliding window algorithm that uses $\poly\left(\frac{1}{\eps},\log n\right)$ bits of space and outputs a $(1+\eps)$-approximation to the $Q'$-norm. 
\end{corollary}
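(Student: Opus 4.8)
The plan is to derive \corref{cor:qnorm} directly by instantiating the general symmetric norm machinery of \thmref{thm:sym:norm} with a $Q'$-norm and then substituting the $\mmc$ bound of \lemref{lem:mmc:qnorm}. First I would verify the one structural fact that licenses this: every $Q'$-norm is in fact a symmetric norm. Indeed, if $\ell$ is a $Q$-norm arising from a symmetric norm $L$ via $\ell(x)=L(x^2)^{1/2}$, then $\ell$ is invariant under coordinate permutations and sign flips since $x^2$ is computed coordinate-wise and squaring erases signs; and since the dual of a symmetric norm is again symmetric (permutation matrices are orthogonal, so they act the same way on the dual ball, and $|x|$-invariance dualizes likewise), any norm whose dual is a $Q$-norm is symmetric. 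Hence $Q'$-norms fall within the scope of \thmref{thm:sym:norm}.

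Next I would apply \thmref{thm:sym:norm} to the given $Q'$-norm $\ell$ with accuracy parameter $\eps$, obtaining a sliding window algorithm that outputs a $(1+\eps)$-approximation to $\ell$ of the active frequency vector with probability $\frac23$ and uses space $\mmc(\ell)^2\cdot\poly\!\left(\frac{1}{\eps},\log n\right)$. Then I would invoke \lemref{lem:mmc:qnorm}, which gives $\mmc(\ell)=\O{\log n}$ for every $Q'$-norm, so that $\mmc(\ell)^2=\O{\log^2 n}$. Substituting this into the space bound yields $\O{\log^2 n}\cdot\poly\!\left(\frac{1}{\eps},\log n\right)=\poly\!\left(\frac{1}{\eps},\log n\right)$, which is the claimed bound, and the success probability is already $\frac23$ (and can be boosted by the usual median-of-independent-copies argument at a $\log$ cost absorbed into the $\poly\log n$ term). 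The $k$-support norm, box norm, and $L_p$ for $1\le p\le 2$ are then covered as special cases since each is a $Q'$-norm.

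There is essentially no technical obstacle here: all the work is done inside \thmref{thm:sym:norm} (the heavy-hitter reduction and the $\estimate$ recovery procedure of \lemref{lem:hh:to:out}) and inside \lemref{lem:mmc:qnorm} (the concentration-of-measure argument bounding $\mmc$ for $Q$-norms). The only point requiring a line of justification is the symmetry of $Q'$-norms, i.e. that duality preserves the symmetric-norm property, which I would state explicitly so that the hypothesis of \thmref{thm:sym:norm} is met. Everything else is a direct substitution.
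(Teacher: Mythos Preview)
Your proposal is correct and matches the paper's approach exactly: the paper simply states that the corollary follows from \thmref{thm:sym:norm} and \lemref{lem:mmc:qnorm}, which is precisely the substitution you describe. Your added verification that $Q'$-norms are symmetric (so that \thmref{thm:sym:norm} applies) is a welcome detail the paper leaves implicit.
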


\paragraph{$L_p$ norms.}
Since $Q'$-norms include $L_p$ norms for $p\in[1,2]$, we now consider the approximation of $L_p$ norms for $p>2$. 
\begin{lemma}
\lemlab{lem:mmc:lp}
\cite{BlasiokBCKY17}
$\mmc(\ell)=\O{n^{1/2-1/p}}$ for every $L_p$ norm with $p>2$. 
\end{lemma}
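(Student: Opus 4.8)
The plan is to pin down, up to constants depending only on $p$, the two quantities that define $\mc(\ell^{(k)})$ for the $L_p$ norm restricted to $\mathbb{R}^k$, namely the maximum $\b_{\ell^{(k)}}$ of $\norm{x}_p$ over the $L_2$-unit sphere $S^{k-1}$ and its median $\M_{\ell^{(k)}}$, and then to observe that the ratio grows with $k$ and is therefore maximized at $k=n$. For the numerator this is immediate: since $p>2$, monotonicity of $\ell_p$ norms in $p$ gives $\norm{x}_p\le\norm{x}_2=1$ for every $x\in S^{k-1}$, with equality attained at a standard basis vector, so $\b_{\ell^{(k)}}=1$ for all $k$.

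The substance is a lower bound on the median, and here I claim $\M_{\ell^{(k)}}=\Omega\!\left(k^{1/p-1/2}\right)$ with the implicit constant depending only on $p$. I would prove this via the Gaussian representation $X=g/\norm{g}_2$ for $g\sim N(0,I_k)$. On one hand $\Ex{\norm{g}_2^2}=k$, so Markov's inequality gives $\norm{g}_2\le 2\sqrt{k}$ with probability at least $\frac34$. On the other hand $\Ex{\norm{g}_p^p}=k\cdot\Ex{|g_1|^p}=:kc_p$ with $c_p$ a finite positive constant, and since $|g_1|^p$ has finite variance for fixed $p$, Chebyshev's inequality applied to the i.i.d.\ sum $\norm{g}_p^p=\sum_i|g_i|^p$ yields $\norm{g}_p^p\ge\frac12 kc_p$ with probability at least $\frac34$, once $k$ exceeds a constant $k_0(p)$. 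Intersecting the two events (probability at least $\frac12$) gives $\norm{X}_p=\norm{g}_p/\norm{g}_2\ge\frac{(c_p/2)^{1/p}}{2}\,k^{1/p-1/2}$ with probability at least $\frac12$, hence $\M_{\ell^{(k)}}\ge\frac{(c_p/2)^{1/p}}{2}\,k^{1/p-1/2}$. The finitely many cases $k\le k_0(p)$ only contribute a constant (depending on $p$) to $\mmc(\ell)$ and are absorbed into the $\O{\cdot}$.

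Combining the two estimates, $\mc(\ell^{(k)})=\b_{\ell^{(k)}}/\M_{\ell^{(k)}}=\O{k^{1/2-1/p}}$ for every $k\le n$, and since the exponent $1/2-1/p$ is positive for $p>2$ this bound is increasing in $k$; taking $k=n$ gives $\mmc(\ell)=\max_{k\le n}\mc(\ell^{(k)})=\O{n^{1/2-1/p}}$. The only mildly delicate point is the concentration of $\norm{g}_p^p$, but because we only need a constant-probability lower bound to control the \emph{median} (not a high-probability tail bound), plain Chebyshev suffices and no sharper concentration machinery is required; one could alternatively invoke known moment asymptotics for $\ell_p$ norms of random points on the sphere, but that is overkill for an $\O{\cdot}$ statement.
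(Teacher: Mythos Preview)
Your argument is correct. Note, however, that the paper does not supply its own proof of this lemma: it is simply quoted from \cite{BlasiokBCKY17} and used as a black box, so there is nothing in the present paper to compare your proposal against. For what it is worth, your approach---identifying $\b_{\ell^{(k)}}=1$ via the monotonicity $\norm{x}_p\le\norm{x}_2$ for $p>2$, and lower-bounding the median $\M_{\ell^{(k)}}$ through the Gaussian representation $X=g/\norm{g}_2$ together with Markov on $\norm{g}_2^2$ and Chebyshev on $\norm{g}_p^p$---is exactly the kind of direct computation one expects here, and the observation that $k\mapsto k^{1/2-1/p}$ is increasing lets you pass cleanly from $\mc(\ell^{(k)})$ to $\mmc(\ell)$. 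The only point worth a remark is that the ``probability at least $\tfrac12$'' conclusion translates to a lower bound on the \emph{median} because $\norm{X}_p$ has a continuous distribution on $S^{k-1}$ (so $t\mapsto\PPr{\norm{X}_p\ge t}$ is strictly decreasing), but this is routine and does not affect the $\O{\cdot}$ conclusion.
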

Thus \thmref{thm:sym:norm} and \lemref{lem:mmc:lp} implies the following sliding window algorithm for $L_p$-norm estimation. 
\begin{corollary}
\corlab{cor:lp}
Given $\eps>0$ and $p>2$, there exists a sliding window algorithm that uses $\poly\left(\frac{1}{\eps},\log n\right)\cdot n^{1-2/p}$ bits of space and outputs a $(1+\eps)$-approximation to the $L_p$-norm. 
\end{corollary}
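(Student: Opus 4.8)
The plan is to instantiate the general symmetric-norm framework of \thmref{thm:sym:norm} with $\ell = L_p$. First I would verify that $L_p$ for $p>2$ is indeed a symmetric norm in the sense of the definition above: for any $n\times n$ permutation matrix $P$ we have $\norm{Px}_p=\norm{x}_p$, since permuting the coordinates does not change the multiset $\{|x_i|\}$, and $\norm{x}_p=\norm{|x|}_p$, since the defining expression $\left(\sum_i |x_i|^p\right)^{1/p}$ depends only on $|x|$. Hence \thmref{thm:sym:norm} applies directly and yields a sliding window algorithm that outputs a $(1+\eps)$-approximation to $\norm{f}_p$ with probability $\frac{2}{3}$ using space $\mmc(L_p)^2\cdot\poly\left(\frac{1}{\eps},\log n\right)$.

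Second, I would substitute the maximum-modulus-of-concentration bound of \lemref{lem:mmc:lp}, namely $\mmc(L_p)=\O{n^{1/2-1/p}}$ for $p>2$, so that $\mmc(L_p)^2=\O{n^{1-2/p}}$. Plugging this into the space bound above gives total space $\poly\left(\frac{1}{\eps},\log n\right)\cdot n^{1-2/p}$, as claimed; the success probability can be amplified from $\frac{2}{3}$ to high probability by running $\O{\log n}$ independent copies and taking the median, which only affects the hidden polylogarithmic factors.

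There is no real technical obstacle here: the corollary is a direct specialization of \thmref{thm:sym:norm} via \lemref{lem:mmc:lp}. The only points that require a moment of care are (i) confirming that $L_p$ satisfies the symmetry hypotheses so that the framework is applicable, and (ii) noting that the explicit exponents of $\frac{1}{\eps}$ and $\log n$ appearing in \thmref{thm:sym:norm} are subsumed into the $\poly(\cdot)$ notation. It is also worth recording, as remarked after \figref{fig:results}, that for sufficiently large $p>2$ this bound improves on the $\tO{\frac{1}{\eps^{p+2}}n^{1-2/p}}$ space obtainable from the smooth histogram framework of \cite{BravermanO07}, since our dependence on $\eps$ is $\poly(1/\eps)$ with a fixed exponent independent of $p$.
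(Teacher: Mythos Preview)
Your proposal is correct and follows exactly the paper's approach: the paper simply states that \thmref{thm:sym:norm} together with \lemref{lem:mmc:lp} immediately yields \corref{cor:lp}, and your argument unpacks precisely this combination, with the extra (but harmless) checks that $L_p$ is symmetric and that success probability can be boosted.
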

In particular, since the exponents of $\eps$ and $\log n$ are fixed, then for sufficiently large $p$, \corref{cor:lp} improves on the results of \cite{BravermanO07}, who give an algorithm using space $\frac{1}{\eps^{p+2}}\,\polylog n\cdot n^{1-2/p}$.

\paragraph{Top-$k$ norms.}
We now show that a $(1+\eps)$-approximation of any top-$k$ norm in the sliding window model only requires sublinear space, for sufficiently large $k$. 
\begin{definition}[Top-$k$ norm]
The top-$k$ norm for a vector $x\in\mathbb{R}^n$ is the sum of the largest $k$ coordinates of $|x|$. 
\end{definition}
The top-$k$ norm is a special case of the Ky Fan $k$-norm~\cite{WuDST14} when the vector $x$ represents the entries in a diagonal matrix. 
Thus the top-$k$ norm is often used to understand the Ky Fan $k$-norm, which is used to regularize optimization problems in numerical linear algebra. 
\begin{lemma}
\lemlab{lem:mmc:topk}
\cite{BlasiokBCKY17}
$\mmc(\ell)=\tO{\sqrt{\frac{n}{k}}}$ for the top-$k$ norm $\ell$. 
\end{lemma}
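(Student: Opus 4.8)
The plan is to evaluate the modulus of concentration $\mc(\ell^{(d)})$ of the top-$k$ norm $\ell$ restricted to each coordinate subspace $\mathbb{R}^d$ with $d\le n$, and then to maximize over $d$. A useful first observation is that $\ell^{(d)}$ is itself a norm of exactly the same type, namely the sum of the largest $\min(k,d)$ entries in absolute value, so --- unlike the $L_1$ example discussed above where $L_\infty$ lurks in a low-dimensional subspace --- no pathological norm appears in any subspace, and it suffices to track how the extremal value $\b_{\ell^{(d)}}$ and the median $\M_{\ell^{(d)}}$ scale with $d$.

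First I would pin down $\b_{\ell^{(d)}}$, the maximum of $\ell^{(d)}$ over the $L_2$-unit sphere $S^{d-1}$. Cauchy--Schwarz applied to the $\min(k,d)$ coordinates of $|x|$ that get summed gives $\ell^{(d)}(x)\le\sqrt{\min(k,d)}\cdot\norm{x}_2=\sqrt{\min(k,d)}$, with equality for the vector placing equal mass over $\min(k,d)$ coordinates, so $\b_{\ell^{(d)}}=\sqrt{\min(k,d)}$ exactly.

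Next I would lower bound $\M_{\ell^{(d)}}$. The crucial deterministic inequality is that the sum of the top $\min(k,d)$ entries of $|x|$ is at least $\frac{\min(k,d)}{d}\norm{x}_1$, since those entries are each at least the coordinate average. For $X$ uniform on $S^{d-1}$, standard concentration of measure shows that $\norm{X}_1\ge c_0\sqrt d$ with probability strictly above $\frac12$ for a universal constant $c_0>0$ (for the constantly-many small $d$ this needs a separate trivial check, using $\ell^{(d)}(x)\ge\norm{x}_\infty\ge 1/\sqrt d$, after which $\mc(\ell^{(d)})=\O{1}$ there). Combining, $\ell^{(d)}(X)\ge c_0\frac{\min(k,d)}{\sqrt d}$ with probability above $\frac12$, which forces $\M_{\ell^{(d)}}\ge c_0\frac{\min(k,d)}{\sqrt d}$ by the definition of the median. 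I expect this median lower bound --- in particular making the sphere-concentration statement genuinely two-sided about $\frac12$ as the definition of the median requires --- to be the only point requiring care; everything else is arithmetic.

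Putting the two estimates together,
\[\mc(\ell^{(d)})=\frac{\b_{\ell^{(d)}}}{\M_{\ell^{(d)}}}=\O{\frac{\sqrt d}{\sqrt{\min(k,d)}}},\]
which is $\O{1}$ for $d\le k$ and $\O{\sqrt{d/k}}\le\O{\sqrt{n/k}}$ for $k<d\le n$. Taking the maximum over all $d\le n$ yields $\mmc(\ell)=\O{\sqrt{n/k}}$, which is in particular $\tO{\sqrt{n/k}}$ as claimed (and in fact slightly sharper, with no logarithmic loss).
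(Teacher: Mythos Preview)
The paper does not prove this lemma; it merely cites it from \cite{BlasiokBCKY17} and immediately uses it to derive \corref{cor:topk}. So there is no ``paper's own proof'' to compare against.

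That said, your argument is sound and self-contained. The identification $\b_{\ell^{(d)}}=\sqrt{\min(k,d)}$ is exact, the pointwise inequality $\ell^{(d)}(x)\ge\frac{\min(k,d)}{d}\norm{x}_1$ is correct (the average of the top $j$ coordinates dominates the overall average), and the concentration of $\norm{X}_1$ about $\Theta(\sqrt d)$ on $S^{d-1}$ is a standard consequence of L\'evy's inequality applied to the $\sqrt d$-Lipschitz function $x\mapsto\norm{x}_1$, together with the expectation computation $\Ex{\norm{X}_1}=\Theta(\sqrt d)$. Your caveat about finitely many small $d$ is the right way to handle the boundary. The only place to be a touch more careful is in passing from ``$\PPr{\ell^{(d)}(X)\ge c}\ge\tfrac12$'' to ``$\M_{\ell^{(d)}}\ge c$'': this does follow because $\ell^{(d)}(X)$ has a continuous distribution (no atoms) for $d\ge 2$, so the median is the unique point where the survival function hits $\tfrac12$, and monotonicity of the survival function finishes it.

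As you note, your bound is actually $\O{\sqrt{n/k}}$ with no logarithmic loss, which is at least as strong as the $\tO{\sqrt{n/k}}$ statement quoted here.
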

\noindent
From \thmref{thm:sym:norm} and \lemref{lem:mmc:topk}, we obtain a sliding window algorithm for top-$k$ norm estimation. 
\begin{corollary}
\corlab{cor:topk}
Given $\eps>0$, there exists a sliding window algorithm that uses $\frac{n}{k}\cdot\poly\left(\frac{1}{\eps},\log n\right)$ bits of space and outputs a $(1+\eps)$-approximation to the top-$k$ norm.  
\end{corollary}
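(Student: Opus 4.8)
The plan is to invoke \thmref{thm:sym:norm} as a black box, so the only real work is to verify that the top-$k$ norm is a symmetric norm and to substitute the bound on its maximum modulus of concentration. First I would observe that the top-$k$ norm $\ell(x)$, being the sum of the $k$ largest entries of $|x|$, is unchanged when we permute the coordinates of $x$ or flip the signs of any coordinates, since both operations leave the multiset $\{|x_1|,\ldots,|x_n|\}$ intact; hence $\ell$ satisfies $\ell(x)=\ell(Px)=\ell(|x|)$ for every permutation matrix $P$ and is a symmetric norm in the sense of \defref{def:sym:norm} (convexity and homogeneity being standard). This makes $\ell$ eligible as an input to the generic sliding window framework.

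Next I would apply \thmref{thm:sym:norm} to this $\ell$: it yields a sliding window algorithm that returns a $(1+\eps)$-approximation to $\ell(f)$ with probability $\frac23$ using space $\mmc(\ell)^2\cdot\poly\!\left(\frac{1}{\eps},\log n\right)$. Then I would plug in \lemref{lem:mmc:topk}, which states $\mmc(\ell)=\tO{\sqrt{n/k}}$ for the top-$k$ norm, so that $\mmc(\ell)^2=\tO{n/k}$. Since the $\tilde{O}(\cdot)$ hides only $\polylog(n)$ factors, these can be absorbed into the $\poly\!\left(\frac{1}{\eps},\log n\right)$ term, and the total space becomes $\frac{n}{k}\cdot\poly\!\left(\frac{1}{\eps},\log n\right)$, as claimed. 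If a high-probability guarantee is desired rather than constant probability, one can run $\O{\log n}$ independent copies and take the median, which only changes the hidden polylogarithmic factors.

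There is essentially no genuine obstacle here; the content of the corollary lies entirely in \thmref{thm:sym:norm} and \lemref{lem:mmc:topk}, both of which are available. The only point requiring even a sentence of care is confirming the symmetry (and hence applicability of the framework) of the top-$k$ norm, which is immediate from its definition in terms of the sorted absolute values.
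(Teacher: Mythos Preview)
Your proposal is correct and matches the paper's approach exactly: the paper simply states that the corollary follows from \thmref{thm:sym:norm} and \lemref{lem:mmc:topk}, and your write-up spells out precisely that substitution (with the harmless extra observation that the top-$k$ norm is indeed symmetric). One tiny LaTeX note: the paper does not define a label \texttt{def:sym:norm} for the symmetric-norm definition, so you should reference it differently if you keep that sentence.
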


\section*{Acknowledgments}
Samson Zhou was supported by a Simons Investigator Award of David P. Woodruff.

\def\shortbib{0}
\bibliographystyle{alpha}
\bibliography{references}

\appendix
\section{Derandomization of \algref{alg:hh}}
\applab{app:rng}
We first require the following pseudorandom generator to derandomize our algorithms.
\begin{theorem}[Nisan's PRG]
\cite{Nisan92}
\thmlab{thm:nisan}
Let $\mathcal{A}$ be an algorithm that uses $S=\Omega(\log n)$ space and $R$ random bits. 
Then there exists a pseudorandom generator for $\mathcal{A}$ that succeeds with high probability and runs in $\O{S\log R}$ bits. 
\end{theorem}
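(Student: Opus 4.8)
The statement is a classical theorem of Nisan, so the plan is to reconstruct his recursive construction together with the analysis that shows it fools bounded-space computation; one could alternatively just invoke \cite{Nisan92} as a black box, but here is the argument I would give. First I would model how $\mathcal{A}$ consumes its $R$ random bits: since $\mathcal{A}$ uses only $S$ bits of workspace, its behavior as a function of the random string is computed by a read-once branching program (equivalently a layered, time-varying automaton) of width $w:=2^S$ with $R$ layers, and ``succeeding with high probability'' means outputting a string whose induced acceptance probability differs from that of a uniform string by at most $\frac{1}{\poly(n)}$.

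Next I would present the generator. Partition the $R$ output bits into $R/\ell$ blocks of length $\ell=\Theta(S)$, fix a $2$-universal hash family $\mathcal{H}$ on $\{0,1\}^\ell$ (each member describable in $\O{\ell}$ bits), set $G_0(x)=x$, and for $i=1,\ldots,\log(R/\ell)$ define
\[G_i(x,h_1,\ldots,h_i)=G_{i-1}(x,h_1,\ldots,h_{i-1})\circ G_{i-1}(h_i(x),h_1,\ldots,h_{i-1}),\]
with each $h_j$ drawn from $\mathcal{H}$. The final generator $G:=G_{\log(R/\ell)}$ has a seed consisting of one block plus $\log(R/\ell)$ hash descriptions, i.e.\ $\O{\ell\log(R/\ell)}=\O{S\log R}$ bits, and outputs $R$ bits, matching the claimed space.

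The heart of the argument is the \emph{randomness-recycling lemma}: for any two transition maps $a,b:\{0,1\}^\ell\to[w]$ arising from consecutive sub-branching-programs, and $h$ drawn from $\mathcal{H}$, the distribution of the pair $(a(x),b(h(x)))$ is within total variation $w^{O(1)}2^{-\Omega(\ell)}$ of $(a(x),b(y))$ for independent uniform $x,y$, in expectation over $h$. I would prove this with a second-moment/collision computation: $2$-universality gives $\mathbf{Pr}_h[h(x)=h(x')]\le 2^{-\ell}$ for $x\ne x'$, so $h(x)$ acts like fresh randomness outside a tiny collision event, and a Cauchy--Schwarz passage from $\ell_2$ to $\ell_1$ over the $w^2$ state-pairs converts this into the stated $\ell_1$ bound. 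Feeding the lemma into an induction over the $\log(R/\ell)$ levels --- crucially observing that at every level the interface between the two recursive halves is a single intermediate state out of $w$, so one always fools width-$w$ programs rather than something growing with the level --- gives total error $\le (R/\ell)\cdot 2^{O(S)}\cdot 2^{-\Omega(\ell)}$. Choosing the hidden constant in $\ell=\Theta(S)$ large enough (equivalently, since we are promised $S=\Omega(\log n)$ and $R=m=\poly(n)$, running the generator at workspace parameter $\Theta(S+\log n)$) makes this $\le\frac{1}{\poly(n)}$, which is exactly the high-probability guarantee.

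The main obstacle I anticipate is the error bookkeeping in the induction rather than any single estimate: one must verify that the recycling errors across the $\log R$ levels accumulate only additively (up to the $\poly(R)$ count of sub-programs) and never multiplicatively, that the objects being fooled at each level genuinely remain width-$w$ automata, and that the block length $\ell=\Theta(S)$ is calibrated against the $2^{O(S)}$ factors so that the final error is polynomially small while the seed stays $\O{S\log R}$.
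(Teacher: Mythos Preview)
The paper does not prove this theorem at all: it is stated as a cited result from \cite{Nisan92} and used entirely as a black box in the derandomization appendix. Your proposal correctly anticipates this option (``one could alternatively just invoke \cite{Nisan92} as a black box'') and then goes further, giving a faithful sketch of Nisan's original recursive construction and mixing-lemma analysis; so your write-up is strictly more than what the paper provides, and the sketch itself is sound.
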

We now claim the correctness of the derandomization of our algorithm using Nisan's PRG. 
Recall that Nisan's PRG can be viewed as generating a stream of pseudorandom bits in a read-once tape that can be used to generate random variables to fool a small space tester. 
However, an input tape that can only be read once cannot be immediately given to algorithm to generate the randomness required for the hash functions that govern whether an index $j\in[n]$ is sampled into the sets $S_{i,r}$ in \algref{alg:hh} because the indices sampled by each $S_{i,r}$ must be consistent whenever each coordinate of the frequency vector $i$ is updated. 
Instead, we use the standard reordering trick to derandomize using Nisan's PRG and argue indistinguishability. 

For any fixed randomness $\R$ for the sampling of the set $S_{i,r}$, let $\T_{\R}$ be the tester that tests whether our heavy-hitter algorithm would output an index $j\in[n]$ if $\R$ is hard-coded into the tester and the random bits for the sampling procedures arrive in the stream. 
Formally, we define $\T_{\R}(j,\S,\A_1)=1$ if the algorithm with access to independent random bits outputs $i$ on stream $\S$ and $\T_{\R}(j,\S,\A_1)=0$ otherwise. 
Similarly, we define $\T_{\R}(j,\S,\A_2)=1$ if using Nisan's PRG on our algorithm outputs $i$ on stream $\S$ and $\T_{\R}(j,\S,\A_2)=0$ otherwise. 

For any fixed input stream $\S_1$, let $\S_2$ be an input stream in which all updates to a single coordinate of the underlying frequency vector arrive consecutively in the active window. 
Nisan's PRG on the algorithm suffices to fool the tester $\T_{\R}$ on $\S_2$ from an algorithm with unlimited access to random bits, i.e., $
\Bigg|\PPr{\T_{\R}(j,\S_2,\A_1)=1}-\PPr{\T_{\R}(j,\S_2,\A_2)=1}\Bigg|=\frac{1}{\poly(n)}$,
for all $j\in[n]$. 
On the other hand, the order of the inputs does not change the identity of the heavy-hitters within the active window, so that $\PPr{\T_{\R}(j,\S_1,\A_1)=1}=\PPr{\T_{\R}(j,\S_2,\A_1)=1}$. 
Similarly, the order of the inputs does not change the identity of the heavy-hitters within the active window following Nisan's PRG, so that $\PPr{\T_{\R}(j,\S_1,\A_2)=1}=\PPr{\T_{\R}(j,\S_2,\A_2)=1}$. 
Thus, $\Bigg|\PPr{\T_{\R}(j,\S_1,\A_1)=1}-\PPr{\T_{\R}(j,\S_1,\A_2)=1}\Bigg|=\frac{1}{\poly(n)}$,
so that with high probability, a tester cannot distinguish between an algorithm with derandomization using Nisan's PRG and unlimited access to random bits. 
The argument is completed by union bounding over all indices $j\in[n]$ and all instances of the algorithm $A_{i,r}$ with $i\in[\log n]$, $r\in[R]$, and $R=\Theta\left(\frac{\log^{10}n}{\eps^5}\right)$, assuming $\eps=\frac{1}{\poly(n)}$.

\section{Intuition on \lemref{lem:hh:to:out}}
\applab{app:estimate}
The main intuition of \lemref{lem:hh:to:out} is to decompose a symmetric norm $\ell(x)$ on input vector $x$ into the contribution by each of its coordinates. 
The coordinates can then be partitioned into level sets, based on how much they contribute to the norm $\ell(x)$. 
The celebrated Indyk-Woodruff norm estimation sketch~\cite{IndykW05,BravermanOR15,WoodruffZ18,WoodruffZ21} can then be applied to approximate each of the level sets, by subsampling the universe and estimating the sizes of each universe through the heavy-hitters of each subsample. 

\begin{definition}[Important Levels]
For $x\in\mathbb{R}^n$ and $\alpha>0$, we define the level $i$ as the set $B_i=\{j\in[n]\,:\,\alpha^{i-1}\le|x_j|\le\alpha^i\}$. 
We use $b_i:=|B_i|$ to denote the size of level $i$. 
Then level $i$ is \emph{$\beta$-important} if
\[b_i>\beta\sum_{j>i}b_j,\qquad b_i\alpha^{2i}\ge\beta\sum_{j\le i}b_j\alpha^{2j}.\]
\end{definition}
Intuitively, a level is important if its size is significant compared to all the higher levels and its contribution is significant compared to all the lower levels. 
We shall show that identifying the important levels and their sizes for a certain base $\alpha$ and parameter $\beta$ suffices to approximate a symmetric norm $\ell(x)$. 
\begin{definition}[Level Vectors and Buckets]
Given a vector $x\in\mathbb{R}^n$ and the notation for the levels of $x$, the \emph{level vector} for $x$ is
\begin{align*}
V(x):=(\underbrace{\alpha^1,\ldots,\alpha^1}_{b_1\text{ times}},\underbrace{\alpha^2,\ldots,\alpha^2}_{b_2\text{ times}},\ldots,\underbrace{\alpha^k,\ldots,\alpha^k}_{b_k\text{ times}},0,\ldots,0)\in\mathbb{R}^n.
\end{align*}
The $i$-th bucket of $V(x)$ is
\begin{align*}
V_i(x):=(\underbrace{0,\ldots,0,}_{b_1+\ldots+b_{i-1}\text{ times}}\underbrace{\alpha^i,\ldots,\alpha^i}_{b_i\text{ times}},\ldots,\underbrace{0,\ldots,0}_{b_{i+1}+\ldots+b_k\text{ times}},0,\ldots,0)\in\mathbb{R}^n.
\end{align*}
The values $\widehat{V(x)}$ and $\widehat{V_i(x)}$ given approximations $\widehat{b_1},\ldots,\widehat{b_k}$ for $b_1,\ldots,b_k$ are defined similarly. 
We use $V(x)\setminus V_i(x)$ to denote the vector that replaces the $i$-th bucket in $V(x)$ with all zeros. 
Similarly, $V(x)\setminus V_i(x)\cup\widehat{V_i(x)}$ replaces the $i$-th bucket in $V(x)$ with $\widehat{b_i}$ instances of $\alpha^i$. 
We omit the input $x$ when the dependency is clear from context. 
\end{definition}
We first relate approximating a symmetric norm to the concept of contributing levels and we will ultimately show the relationship between contributing levels and important levels. 
\begin{definition}[Contributing Levels]
Level $i$ of $x\in\mathbb{R}^n$ is \emph{$\beta$-contributing} if $\ell(V_i(x))\ge\beta\ell(V(x))$.  
\end{definition}
The following lemma states that a good approximation to $\ell(V)$ can be obtained even if all levels that are not $\beta$-contributing are removed. 
\begin{lemma}\cite{BlasiokBCKY17}
Let $V'$ be the vector obtained by removing all levels that are not $\beta$-contributing from $V$. 
Then $(1-\O{\log_{\alpha}n}\cdot\beta)\ell(V)\le\ell(V')\le\ell(V)$. 
\end{lemma}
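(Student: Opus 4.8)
The plan is to prove the two inequalities separately. The upper bound $\ell(V')\le\ell(V)$ is the easy direction, and the lower bound follows from the triangle inequality together with a union bound over all levels.

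For the upper bound, I would invoke the fact that every symmetric norm is \emph{monotone}: if $0\le y_j\le z_j$ for all coordinates $j$, then $\ell(y)\le\ell(z)$. This follows one coordinate at a time, since for a fixed coordinate $j$ with $y_j=t\,z_j$ and $t\in[0,1]$, the vector having $y_j$ in position $j$ (and agreeing with $z$ elsewhere) is the convex combination with weights $\frac{1+t}{2}$ and $\frac{1-t}{2}$ of the vectors having $z_j$, respectively $-z_j$, in position $j$; by sign-flip invariance these two vectors have the same $\ell$-value, so the triangle inequality gives the claim, and iterating over all coordinates proves monotonicity. Since $V'$ is obtained from $V$ by replacing the buckets $V_i$ of the non-$\beta$-contributing levels by all-zero blocks, we have $0\le V'_j\le V_j$ for every $j$, and hence $\ell(V')\le\ell(V)$.

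For the lower bound, write $V=V'+(V-V')$, where $V-V'$ is nonnegative coordinate-wise and equals $\sum_{i\text{ not }\beta\text{-contributing}}V_i$, the sum of the removed buckets. The triangle inequality then gives $\ell(V)\le\ell(V')+\ell(V-V')\le\ell(V')+\sum_{i\text{ not }\beta\text{-contributing}}\ell(V_i)$. By definition, each removed level satisfies $\ell(V_i)<\beta\,\ell(V)$, and the total number of levels is $\O{\log_\alpha n}$, since every coordinate of the polynomially bounded frequency vector lies in some interval $[\alpha^{i-1},\alpha^i]$ with $i$ ranging over only $\O{\log_\alpha n}$ values. Therefore $\ell(V-V')\le\O{\log_\alpha n}\cdot\beta\,\ell(V)$, which rearranges to $(1-\O{\log_\alpha n}\cdot\beta)\,\ell(V)\le\ell(V')$.

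The argument is short, and the only point requiring a little care is the monotonicity fact used for the upper bound; everything else is the triangle inequality and a count of the number of levels. There is no real obstacle in this accounting lemma itself — the genuine difficulty in the surrounding framework of \cite{BlasiokBCKY17} lies in \emph{identifying} the contributing (equivalently, important) levels and estimating their sizes via heavy-hitters, not in this bookkeeping step.
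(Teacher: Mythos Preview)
Your argument is correct. Note, however, that the paper does not actually prove this lemma: it is quoted from \cite{BlasiokBCKY17} and stated without proof in the appendix that gives intuition for \lemref{lem:hh:to:out}. Your proof is exactly the standard argument one expects for this bookkeeping step --- monotonicity of symmetric norms for the upper bound, and the triangle inequality together with the $\O{\log_\alpha n}$ count on the number of levels for the lower bound --- and matches what the cited reference does.
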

Thus for sufficiently small $\beta$, approximating the symmetric norm $\ell(V)$ reduces to identifying the $\beta$-contributing levels:
\begin{lemma}
\lemlab{lem:reconstruction}
\cite{BlasiokBCKY17}
For precision $\eps>0$, let base $\alpha=(1+\O{\eps})$, importance parameter $\beta=\O{\frac{\eps^5}{\mmc(\ell)^2\cdot\log^5(n)}}$, and $\eps'=\O{\frac{\eps^2}{\log n}}$. 
Let $\widehat{b_i}\le b_i$ for all $i$ and $\widehat{b_i}\ge(1-\eps')b_i$ for all $\beta$-important levels. 
Let $\widehat{V}$ be the level vector constructed using $\alpha,\widehat{b_1},\ldots$ and $V'$ be the vector constructed by removing all the buckets that are not $\beta$-contributing in $\widehat{V}$.  
Then $(1-\eps)\ell(x)\le\ell(V')\le\ell(x)$. 
\end{lemma}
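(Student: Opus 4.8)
The plan is to bound $\ell(V')$ above and below separately, tracking the multiplicative error along the chain $x\to V\to\widetilde V\to\widehat W\to V'$, where $\widetilde V$ is $V$ with every bucket that is not $(2\beta)$-contributing deleted, and $\widehat W$ is $\widehat V$ with the same buckets deleted. The upper bound is routine: each coordinate of $V$ is a power of $\alpha$ lying between the corresponding coordinate of $|x|$ and $\alpha$ times it, symmetric norms are monotone under coordinatewise domination, and passing from $V$ to $\widehat V$ (legitimate since $\widehat b_i\le b_i$) and then to $V'$ only zeroes out coordinates; hence $\ell(V')\le\ell(\widehat V)\le\ell(V)\le\ell(x)$, absorbing the rounding factor $\alpha=1+\O{\eps}$ into the statement's normalization. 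For the lower bound, the same rounding comparison gives $\ell(V)\ge\ell(x)/\alpha=(1-\O{\eps})\ell(x)$, so it suffices to prove $\ell(V')\ge(1-\O{\eps})\ell(V)$; and applying the preceding lemma (on deleting non-$\beta$-contributing buckets) with threshold $2\beta$ yields $\ell(\widetilde V)\ge(1-\O{\beta\log_\alpha n})\ell(V)$, which is $(1-\O{\eps})\ell(V)$ since $\log_\alpha n=\O{\eps^{-1}\log n}$ and $\beta=\O{\eps^5/(\mmc(\ell)^2\log^5 n)}$.

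The heart of the argument is to show that every $(2\beta)$-contributing bucket $i$ of $V$ is $\beta$-\emph{important} in the combinatorial sense of the \emph{Important Levels} definition, hence has $\widehat b_i\ge(1-\eps')b_i$ and is still retained when $V'$ is formed. The key is a two-sided dictionary, furnished by the modulus of concentration, between the geometric contribution $\ell(V_i)=\alpha^i\,\ell(\mathbf 1_{b_i})$ (here $\mathbf 1_{b_i}$ is a $\{0,1\}$-vector with $b_i$ ones), together with the aggregate $\ell(V)$, and the combinatorial quantities $b_i$ (an $\ell_\infty$-type count) and $b_i\alpha^{2i}$ (an $\ell_2^2$-type mass): specializing the defining inequalities of $\mmc(\ell)$ to the flat unit vector $\mathbf 1_{b_i}/\sqrt{b_i}$ and comparing across levels of different sizes pins these quantities to one another up to factors polynomial in $\mmc(\ell)$ and $\log_\alpha n$. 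The dependence of $\beta$ on $\eps$, $\mmc(\ell)$, and $\log n$ is calibrated precisely so that $\ell(V_i)\ge 2\beta\,\ell(V)$ forces both $b_i>\beta\sum_{j>i}b_j$ and $b_i\alpha^{2i}\ge\beta\sum_{j\le i}b_j\alpha^{2j}$, i.e. $i$ is $\beta$-important. Retention in $V'$ then follows because $\widehat b_i\le b_i$ gives $\ell(\widehat V)\le\ell(V)$, while averaging a full block of equal entries over its $\widehat b_i$-element subsets gives $\ell(\widehat V_i)\ge(1-\eps')\ell(V_i)\ge(1-\eps')\,2\beta\,\ell(V)\ge\beta\,\ell(\widehat V)$, so $i$ is $\beta$-contributing in $\widehat V$; thus $V'$ contains every bucket of $\widehat W$, whence $\ell(V')\ge\ell(\widehat W)$ by monotonicity.

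It remains to pass from $\widetilde V$ to $\widehat W$ by replacing the counts on the surviving buckets one at a time. Writing a block of $b_i$ equal entries $\alpha^i$ as the average of its $\widehat b_i$-element subsets and using $\widehat b_i\ge(1-\eps')b_i$, convexity and permutation-symmetry together with the triangle inequality show that each replacement decreases the norm by at most an additive $\eps'\,\ell(V_i)\le\eps'\,\ell(V)$; summing over the at most $\log_\alpha n=\O{\eps^{-1}\log n}$ buckets and using $\eps'=\O{\eps^2/\log n}$, the total loss is $\O{\eps}\,\ell(V)$. Combining, $\ell(V')\ge\ell(\widehat W)\ge\ell(\widetilde V)-\O{\eps}\ell(V)\ge(1-\O{\eps})\ell(V)\ge(1-\O{\eps})\ell(x)$, which is the claim after adjusting the hidden constants.

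The main obstacle is the two-sided dictionary of the middle step: translating the norm quantities $\ell(V_i)$ and $\ell(V)$ into the purely combinatorial quantities $b_i$ and $b_i\alpha^{2i}$ with only a $\poly(\mmc(\ell),\log_\alpha n)$ loss, since a symmetric norm need not behave like $\ell_2$ even on flat vectors. This is exactly where $\mmc(\ell)$ enters and where the $\mmc(\ell)^2\log^5 n$ denominator in $\beta$ and the $\eps^2/\log n$ scaling of $\eps'$ are forced; the remaining work is the bookkeeping that keeps the errors from rounding, from dropping $(2\beta)$-light buckets, and from replacing counts over $\O{\log_\alpha n}$ levels each at the $\O{\eps}$ scale.
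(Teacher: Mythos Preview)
The paper does not prove this lemma; it is imported from \cite{BlasiokBCKY17}, with only the companion statements \lemref{lem:contribute:num} and \lemref{lem:contribute:weight} recorded in the appendix. Your overall architecture---monotonicity for the upper bound, the chain $V\to\widetilde V\to\widehat W\to V'$ with the convexity/averaging trick for $\ell(\widehat V_i)\ge(1-\eps')\ell(V_i)$ on the lower bound---matches the structure of the original argument.

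The gap is in the middle step. You delete the non-$(2\beta)$-contributing buckets and assert that $(2\beta)$-contributing forces $\beta$-important. But the $\mmc$ dictionary you invoke (equivalently \lemref{lem:contribute:num} and \lemref{lem:contribute:weight}) is \emph{quadratic} in the contributing threshold: a $\gamma$-contributing level only satisfies $b_i\ge\Theta\!\left(\frac{\gamma^2}{\mmc(\ell)^2\polylog n}\right)\sum_{j>i}b_j$ and the analogous $\ell_2^2$ bound. With $\gamma=2\beta$ and $\beta=\O{\eps^5/(\mmc(\ell)^2\log^5 n)}$ small, the resulting coefficient is $\Theta(\beta^2/\cdots)\ll\beta$, so $(2\beta)$-contributing yields only $\beta'$-important for some $\beta'\ll\beta$, not $\beta$-important; since good counts $\widehat b_i\ge(1-\eps')b_i$ are promised only at the $\beta$-important levels, you cannot conclude the surviving buckets are well-estimated and the chain breaks. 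The fix is to run the deletion at the \emph{larger} auxiliary threshold $\beta_c=\Theta(\eps/\log_\alpha n)=\Theta(\eps^2/\log n)$ rather than $2\beta$: removing non-$\beta_c$-contributing buckets still costs only $\O{\beta_c\log_\alpha n}=\O{\eps}$, and plugging $\gamma=\beta_c$ into the two lemmas gives exactly $\Theta\!\left(\beta_c^2/(\mmc(\ell)^2\log_\alpha n\log^2 n)\right)=\Theta\!\left(\eps^5/(\mmc(\ell)^2\log^5 n)\right)=\beta$, so every $\beta_c$-contributing bucket is $\beta$-important (hence well-estimated) and, since $\beta_c\gg\beta$, remains $\beta$-contributing in $\widehat V$ and is retained in $V'$. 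This calibration is precisely why $\beta$ carries the $\eps^5$ and $\log^5 n$ exponents; with this one change your argument goes through.
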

The following pair of lemmas provide intuition on how to identify $\beta$-contributing levels. 
\begin{lemma}
\lemlab{lem:contribute:num}
\cite{BlasiokBCKY17}
If level $i$ is $\beta$-contributing, then there exists some fixed constant $\lambda>0$ such that 
\[b_i\ge\frac{\lambda\beta^2}{\mmc(\ell)^2\log^2 n}\cdot\sum_{j>i}b_j.\] 
\end{lemma}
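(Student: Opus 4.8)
The plan is to reduce the hypothesis that level $i$ is $\beta$-contributing to a single inequality between $\ell$ evaluated on two flat (all-ones-shaped) vectors, and then to exploit $\mmc(\ell)$ to turn that inequality into the claimed bound on $b_i$. Write $M:=\sum_{j>i}b_j$ and let $\mathbf 1_k\in\mathbb R^n$ denote the $0/1$ vector supported on $k$ coordinates. We may assume $\ell(V(x))>0$ (otherwise no level is contributing and there is nothing to prove), so $b_i\ge1$; we may also assume $M\ge1$ and $b_i\le M$, since otherwise $b_i>M\ge\frac{\lambda\beta^2}{\mmc(\ell)^2\log^2 n}M$ holds trivially for any $\lambda\le1$. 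The first step uses only that symmetric norms are monotone with respect to coordinatewise absolute value (a standard consequence of sign- and permutation-invariance). Up to a relabeling of coordinates $V_i(x)=\alpha^i\mathbf 1_{b_i}$, so $\ell(V_i(x))=\alpha^i\,\ell(\mathbf 1_{b_i})$; and the part of $V(x)$ supported on the buckets above level $i$ has exactly $M$ nonzero entries, each equal to $\alpha^t$ for some $t>i$ and hence $\ge\alpha^i$, so zeroing out the buckets at levels $\le i$ can only decrease $\ell$ while the resulting vector coordinatewise dominates a permutation of $\alpha^i\mathbf 1_M$; thus $\ell(V(x))\ge\alpha^i\,\ell(\mathbf 1_M)$. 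Since level $i$ is $\beta$-contributing, $\alpha^i\ell(\mathbf 1_{b_i})\ge\beta\alpha^i\ell(\mathbf 1_M)$, i.e.\ $\ell(\mathbf 1_{b_i})\ge\beta\,\ell(\mathbf 1_M)$ --- and the factor $\alpha^i$ has cancelled, which is why the final bound does not involve the base $\alpha$.

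Next I would sandwich the two quantities $\ell(\mathbf 1_{b_i})$ and $\ell(\mathbf 1_M)$ around the \emph{same} median $\M_{\ell^{(M)}}$, so that it cancels. Upper bound: since $b_i\le M\le n$, the vector $\mathbf 1_{b_i}/\sqrt{b_i}$ is a unit vector supported on at most $M$ coordinates, so it lies on $S^{M-1}$ and therefore $\ell(\mathbf 1_{b_i}/\sqrt{b_i})\le\b_{\ell^{(M)}}=\mc(\ell^{(M)})\,\M_{\ell^{(M)}}\le\mmc(\ell)\,\M_{\ell^{(M)}}$, i.e.\ $\ell(\mathbf 1_{b_i})\le\sqrt{b_i}\cdot\mmc(\ell)\cdot\M_{\ell^{(M)}}$. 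Lower bound: I would establish (or cite, e.g.\ from the toolkit of \cite{BlasiokBCKY17}) the standard fact that for every $k$, $\ell(\mathbf 1_k)\ge\frac{\sqrt k}{c\sqrt{\log(k+2)}}\,\M_{\ell^{(k)}}$ for an absolute constant $c$. This follows from a standard measure-concentration estimate on $S^{k-1}$: a uniformly random $X\in S^{k-1}$ satisfies $\max_j|X_j|\le c'\sqrt{(\log(k+2))/k}$ with probability more than $\tfrac12$, and on that event monotonicity gives $\ell(X)\le c'\sqrt{(\log(k+2))/k}\cdot\ell(\mathbf 1_k)$, so this value upper-bounds $\ell$ on more than half of $S^{k-1}$ and hence bounds the median. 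Taking $k=M$ yields $\ell(\mathbf 1_M)\ge\frac{\sqrt M}{c\sqrt{\log(M+2)}}\,\M_{\ell^{(M)}}$.

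Finally I would combine the three inequalities. Since $M\ge1$ we have $\ell(\mathbf 1_M)>0$ and hence $\M_{\ell^{(M)}}>0$; chaining $\beta\,\ell(\mathbf 1_M)\le\ell(\mathbf 1_{b_i})\le\sqrt{b_i}\,\mmc(\ell)\,\M_{\ell^{(M)}}\le\sqrt{b_i}\,\mmc(\ell)\cdot\frac{c\sqrt{\log(M+2)}}{\sqrt M}\,\ell(\mathbf 1_M)$, dividing by $\ell(\mathbf 1_M)$, and squaring gives $\beta^2\le c^2\log(M+2)\,\mmc(\ell)^2\cdot\frac{b_i}{M}$, so $b_i\ge\frac{\beta^2}{c^2\log(M+2)\,\mmc(\ell)^2}\,M$, which (using $M\le n$) is at least $\frac{\lambda\beta^2}{\mmc(\ell)^2\log^2 n}\sum_{j>i}b_j$ for a suitable absolute constant $\lambda$ --- in fact with $\log n$ rather than $\log^2 n$ in the denominator. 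The main obstacle, and the only non-routine ingredient, is the auxiliary fact that the flat direction $\mathbf 1_k/\sqrt k$ has $\ell$-value within a $\sqrt{\log k}$ factor of the median $\M_{\ell^{(k)}}$; everything else is monotonicity of symmetric norms and the definition of $\mmc$. A secondary point requiring care is that the upper bound on $\ell(\mathbf 1_{b_i})$ must be routed through $\M_{\ell^{(M)}}$ rather than $\M_{\ell^{(b_i)}}$ (the median is not monotone in the ambient dimension), which is exactly why the argument is kept inside the $M$-dimensional sphere throughout.
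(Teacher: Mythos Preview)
The paper does not supply its own proof of this lemma: it is imported verbatim from \cite{BlasiokBCKY17} and only sketched at the level of intuition in \appref{app:estimate}. So there is no in-paper argument to compare against, and the question is simply whether your route is sound. It is. Your reduction to the inequality $\ell(\mathbf 1_{b_i})\ge\beta\,\ell(\mathbf 1_M)$ via monotonicity of symmetric norms is clean and correct (you are right that $\alpha>1$, so $\alpha^t\ge\alpha^i$ for $t>i$), and your two-sided comparison of $\ell(\mathbf 1_k)$ to the median $\M_{\ell^{(k)}}$ is exactly the mechanism used in \cite{BlasiokBCKY17}: the upper bound through $\b_{\ell^{(M)}}$ is immediate, and the lower bound is their ``flat vector versus median'' lemma, which follows from the sphere-concentration estimate on $\max_j|X_j|$ you describe. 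Your remark that the median must be taken in the ambient dimension $M$ (not $b_i$) is the one subtlety worth flagging, and you handle it correctly.

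Two minor notes. First, your argument actually yields a single $\log n$ in the denominator rather than $\log^2 n$; the statement as written in the paper is a slightly loosened version of what the argument gives, so your sharper constant is fine. Second, your handling of the edge cases ($b_i>M$, $M=0$) is adequate, though you could also simply absorb them by noting that $b_i\ge1$ already implies the claim whenever the right-hand side is below $1$.
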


\begin{lemma}
\lemlab{lem:contribute:weight}
\cite{BlasiokBCKY17}
If level $i$ is $\beta$-contributing, then there exists some fixed constant $\lambda>0$ such that 
\[b_i\alpha^{2i}\ge\frac{\lambda\beta^2}{\mmc(\ell)^2(\log_{\alpha} n)\log^2 n}\cdot\sum_{j\le i}b_j\alpha^{2j}.\] 
\end{lemma}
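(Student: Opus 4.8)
The plan is to reduce the claimed $L_2$-mass inequality to a comparison between level $i$ and a \emph{single} heaviest lower level, and then to convert the hypothesis $\ell(V_i)\ge\beta\ell(V)$ into that comparison by using the fact that on a flat vector a symmetric norm is pinned down, up to $\mmc(\ell)$ and $\polylog n$ factors, by its $L_2$ norm. Write $\phi(k):=\ell(\mathbb{1}_{[k]})$ for the norm of the all-ones vector on $[k]$ and $q(k):=\phi(k)/\sqrt k=\ell^{(k)}\!\left(\mathbb{1}_{[k]}/\sqrt k\right)$; by permutation invariance $\ell(V_i)=\alpha^i\phi(b_i)=\|V_i\|_2\,q(b_i)$, and likewise for any level. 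Since the nonzero entries of $x$ span only a polynomial range (in context $x$ is a frequency vector with integer entries bounded by $\poly(n)$), only $L=O(\log_\alpha n)$ levels are nonempty, so $\sum_{j\le i}b_j\alpha^{2j}\le L\cdot\max_{j\le i}b_j\alpha^{2j}$. Fix $j^{\star}\in\argmax_{j\le i}b_j\alpha^{2j}$. If $j^{\star}=i$ the bound is immediate (with room to spare, since $\beta\le1$ and $\mmc(\ell)\ge1$), so assume $j^{\star}<i$; as $\alpha>1$ this gives $\alpha^{2j^{\star}}<\alpha^{2i}$, and then $b_{j^{\star}}\alpha^{2j^{\star}}\ge b_i\alpha^{2i}$ forces $b_i<b_{j^{\star}}$ — exactly the direction needed below.

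The technical core is the estimate $q(k)\le O\!\left(\mmc(\ell)\sqrt{\log n}\right)q(k')$ for all $1\le k\le k'\le n$. For the upper half, $\mathbb{1}_{[k]}/\sqrt k$ padded with zeros is a point of $S^{k'-1}$, hence $q(k)\le\b_{\ell^{(k')}}=\mc(\ell^{(k')})\cdot\M_{\ell^{(k')}}\le\mmc(\ell)\cdot\M_{\ell^{(k')}}$. For the other half, standard concentration of a uniform $X\in S^{k'-1}$ (write $X=g/\|g\|_2$ with $g$ Gaussian, control $\|g\|_2$ and $\max_t|g_t|$ by a union bound over $k'$ coordinates) shows that with probability $\ge\frac12$ all coordinates of $X$ are bounded by $O(\sqrt{\log k'}/\sqrt{k'})$; by monotonicity of the symmetric norm such an $X$ is dominated coordinatewise by $O(\sqrt{\log k'})\cdot\mathbb{1}_{[k']}/\sqrt{k'}$, so $\ell(X)\le O(\sqrt{\log k'})\,q(k')$, whence $\M_{\ell^{(k')}}\le O(\sqrt{\log n})\,q(k')$. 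Composing the two displays gives the estimate.

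To finish: zeroing out every level except $j^{\star}$ cannot increase the symmetric norm of the (nonnegative) vector $V$, so $\ell(V)\ge\ell(V_{j^{\star}})$, and $\beta$-contribution then gives $\ell(V_i)\ge\beta\ell(V_{j^{\star}})$, i.e.\ $\|V_i\|_2\,q(b_i)\ge\beta\,\|V_{j^{\star}}\|_2\,q(b_{j^{\star}})$. Applying the core estimate with $k=b_i\le k'=b_{j^{\star}}$ and squaring,
\[
b_i\alpha^{2i}=\|V_i\|_2^2\ \ge\ \frac{\beta^2}{O\!\left(\mmc(\ell)^2\log n\right)}\,\|V_{j^{\star}}\|_2^2\ =\ \frac{\beta^2}{O\!\left(\mmc(\ell)^2\log n\right)}\,b_{j^{\star}}\alpha^{2j^{\star}}\ \ge\ \frac{\beta^2}{O\!\left(\mmc(\ell)^2\,L\,\log n\right)}\sum_{j\le i}b_j\alpha^{2j},
\]
and since $L=O(\log_\alpha n)$ this is the asserted inequality for a suitable constant $\lambda>0$ (in fact with a spare $\log n$).

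I expect the main obstacle to be the core estimate, and within it the step $\M_{\ell^{(k')}}\le O(\sqrt{\log n})\,q(k')$ obtained from spherical concentration — this is where measure concentration genuinely enters, and where one must be careful that the inequality points the right way relative to the case analysis ($b_i\le b_{j^{\star}}$). The remaining ingredients are bookkeeping: permutation invariance and monotonicity of symmetric norms, and the $O(\log_\alpha n)$ bound on the number of levels. One should also fix a tie-breaking convention in the definition of $B_i$ so that the $b_j$ partition the support of $x$. The argument parallels the one I would give for \lemref{lem:contribute:num}, the difference being that comparing $L_2$ masses (rather than counts) across levels is what introduces the extra $\log_\alpha n$ factor in the statement.
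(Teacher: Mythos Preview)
The paper does not include a proof of this lemma; it is quoted from \cite{BlasiokBCKY17} without argument, so there is no in-paper proof to compare against. Your argument is correct. The reduction to a single dominant lower level $j^{\star}$ (losing one $\log_{\alpha} n$ factor), the observation that $b_{j^{\star}}\ge b_i$ in the nontrivial case, and the transfer of $\ell(V_i)\ge\beta\,\ell(V_{j^{\star}})$ into an $L_2$ inequality via the flat-vector estimate $q(k)\le O\big(\mmc(\ell)\sqrt{\log n}\big)\,q(k')$ for $k\le k'$ are all sound. Your derivation of that estimate---bounding $q(k)\le\b_{\ell^{(k')}}\le\mmc(\ell)\cdot\M_{\ell^{(k')}}$ from the definition of $\mmc$, and $\M_{\ell^{(k')}}\le O(\sqrt{\log k'})\,q(k')$ from the fact that a uniform point on $S^{k'-1}$ has all coordinates $O(\sqrt{\log k'}/\sqrt{k'})$ with probability at least $1/2$, hence is dominated by a multiple of the flat unit vector---is exactly the mechanism underlying the analysis in \cite{BlasiokBCKY17}, so your approach matches that of the source. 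You are also right that the bound you obtain carries only one factor of $\log n$ rather than two; the statement as recorded here is simply not sharp.
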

Namely, \lemref{lem:contribute:num} and \lemref{lem:contribute:weight} imply that a level $i$ that is $\beta$-contributing must be an important level. 
Moreover, the problem of approximating the size of each important level can be reduced to the task of finding the $\nu$-approximate $\eta$-heavy hitters. 
\begin{lemma}
\lemlab{lem:iw:sketch}
\cite{BlasiokBCKY17}
For level base $\alpha>0$, importance parameter $\beta>0$ and precision $\eps'>0$, there exist parameters $\eta,\nu>0$ as defined in \lemref{lem:hh:to:out}, such that a $\nu$-approximate $\eta$-heavy hitters algorithm can be used to output a $(1+\eps')$-approximation to the size $b_i$ of all $\beta$-important levels.  
\end{lemma}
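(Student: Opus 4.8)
\ The plan is to carry out the classical Indyk--Woodruff ``subsample and find heavy hitters'' recipe for estimating level-set sizes on top of the $\nu$-approximate $\eta$-heavy hitter primitive. Recall the subsampling fixed in \lemref{lem:hh:to:out}: for each rate $s\in[\log n]$ we keep every coordinate of $[n]$ independently with probability $2^{-s}$, obtaining a restriction $g_{s}$ of $f$, and each rate is repeated $R=\Theta(\log^{10}n/\eps^5)$ times independently. Fix a $\beta$-important level $t$, with size $b_t$ and representative magnitude $\alpha^{t}$. The first step is to identify the ``right'' rate $s(t)\in[\log n]$ for level $t$: the unique $s$ for which the expected number of surviving level-$t$ coordinates $2^{-s}b_t$ lies in a target window $[\,c_1\cdot\poly(1/\eta,\log n),\,c_2\cdot\poly(1/\eta,\log n)\,]$. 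Since $1\le b_t\le n$ and $\beta$ is chosen (as in \lemref{lem:hh:to:out}) so that this window is nonempty and contained in $[1,n]$, such an $s(t)$ exists.

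The second step is to show that at rate $s(t)$ every surviving coordinate of level $t$ is an $\eta$-heavy hitter of $g_{s(t),r}$ with high probability, so that the $\nu$-approximate $\eta$-heavy hitter algorithm reports all of them. This is where both defining inequalities of a $\beta$-important level are used: the mass inequality $b_t\alpha^{2t}\ge\beta\sum_{j\le t}b_j\alpha^{2j}$ bounds the $L_2^2$-mass that levels $\le t$ contribute to $g_{s(t),r}$ by $O\!\left(\tfrac1\beta 2^{-s(t)}b_t\alpha^{2t}\right)=O\!\left(\alpha^{2t}\poly(1/\eta,\log n)/\beta\right)$, which is at most $\tfrac12\eta^{-2}\alpha^{2t}$ for the chosen relation between $\beta$ and $\eta$; the count inequality $b_t>\beta\sum_{j>t}b_j$, together with a Chernoff/Bernstein estimate on how many of the few higher-level coordinates survive, controls the remaining $L_2^2$-mass. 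Combining these bounds gives $\|g_{s(t),r}\|_2^2\le\eta^{-2}\alpha^{2(t-1)}$, so each surviving level-$t$ coordinate $c$ satisfies $f^{(s(t),r)}_c{}^{2}\ge\alpha^{2(t-1)}\ge\eta^2\|g_{s(t),r}\|_2^2$ and is an $\eta$-heavy hitter. I expect this step to be the main obstacle: the interaction between level $t$ and higher levels must be handled carefully, and it is exactly here that the $\mmc(\ell)$ dependence enters, through the need to take $\beta$ (hence $\eta$) as small as $\Theta(\eps^5/(\mmc(\ell)^2\log^5 n))$.

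The third step converts reported heavy hitters into a size estimate. Since the algorithm also returns $(1\pm\nu)$-accurate frequency estimates for every reported item and $\nu=O(\eps^2/\log n)$ is far smaller than the geometric gap $\alpha-1=\Theta(\eps)$ between consecutive levels, each reported coordinate can be assigned to its true level from its estimated frequency; since a coordinate's integer frequency lies strictly inside a band, the $(1\pm\nu)$ error is smaller than the distance to a band boundary except for a negligible fraction of coordinates, whose effect is absorbed into the $\eps'$ error budget. Let $\widehat{N}_{t,r}$ be the number of reported coordinates assigned to level $t$ in repetition $r$ at rate $s(t)$; by the previous step it is exactly the number of level-$t$ survivors, and no coordinate of another level is misassigned to it. Set $\widehat{b_t}$ to be $2^{s(t)}$ times the median (over $r\in[R]$) of $\widehat{N}_{t,r}$, truncated so that it never exceeds the trivial upper bound on $b_t$ so that $\widehat{b_t}\le b_t$ always. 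Each $\widehat{N}_{t,r}$ is a sum of $b_t$ independent $\mathrm{Bernoulli}(2^{-s(t)})$ variables with mean $\Theta(\poly(1/\eta,\log n))$, so a Chernoff bound, amplified by the median over $R=\Theta(\log^{10}n/\eps^5)$ repetitions, yields $(1-\eps')b_t\le\widehat{b_t}\le b_t$ with probability $1-1/\poly(n)$.

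Finally, I would union bound over the at most $O(\log_\alpha n)=O(\tfrac1\eps\log n)$ levels, the $\log n$ rates, and the $R$ repetitions, together with the high-probability guarantees of the heavy hitter subroutine, to conclude that simultaneously $\widehat{b_i}\le b_i$ for all $i$ and $\widehat{b_i}\ge(1-\eps')b_i$ for every $\beta$-important level $i$ --- precisely the input required by \lemref{lem:reconstruction}. The values of $\eta$ and $\nu$ named in \lemref{lem:hh:to:out} are exactly those for which the estimates of the second and third steps go through.
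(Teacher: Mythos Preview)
The paper does not give its own proof of this lemma; it is quoted from \cite{BlasiokBCKY17}, and the surrounding discussion in \appref{app:estimate} offers only the informal Indyk--Woodruff outline. Your plan follows that outline faithfully --- subsample at geometric rates, show that at the right rate the important level's coordinates become $\eta$-heavy, classify reported items by level using the $(1\pm\nu)$ frequency estimates, then rescale and median over the $R$ repetitions --- so at the level of strategy you are aligned with both the paper's sketch and its source.

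Two steps need more than what you wrote. In step three, your assertion that the $(1\pm\nu)$ error is smaller than the distance to a band boundary ``except for a negligible fraction of coordinates'' is unsupported: for a fixed base $\alpha$, an adversary can place arbitrarily many frequencies within a $(1\pm\nu)$ factor of some $\alpha^i$, causing systematic misclassification. The paper's intuition paragraph explicitly names the fix --- ``randomizing the boundaries of the level sets'' --- i.e., replacing the bands $[\alpha^{i-1},\alpha^i)$ by $[\alpha^{i-1+\rho},\alpha^{i+\rho})$ for a uniformly random $\rho\in[0,1)$, which makes the fraction of boundary-adjacent coordinates $O(\nu/\log\alpha)$ in expectation. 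In step two, the count inequality $b_t>\beta\sum_{j>t}b_j$ together with Chernoff bounds only the \emph{number} of surviving higher-level coordinates, not their $L_2^2$ mass: a single surviving coordinate from a level $j\gg t$ contributes $\alpha^{2j}$, which can by itself exceed the budget $\eta^{-2}\alpha^{2(t-1)}$. The argument in \cite{BlasiokBCKY17} does not treat the count inequality as a direct mass bound; you should expect to refine this step, for instance by arguing that in a constant fraction of the $R$ repetitions no such outlier survives (so the median is unaffected), or by an inductive subtraction of higher levels already estimated at coarser rates.
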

The subroutine $\estimate$ of \lemref{lem:hh:to:out} reconstructs an estimate of the level vector by removing all the levels that are not $\beta$-contributing and using a $(1+\eps')$-approximation to the sizes of all $\beta$-important levels. 
It follows by \lemref{lem:reconstruction} that this procedure suffices to obtain a $(1+\eps)$-approximation to $\ell(x)$, thus (informally) justifying the correctness of $\estimate$. 

The $(1+\eps')$-approximation to the size $b_i$ of all $\beta$-important levels guaranteed by \lemref{lem:iw:sketch} is not immediate from the $\nu$-approximate $\eta$-heavy hitters algorithm. 
Rather, the algorithm to approximately recover the size $b_i$ of all $\beta$-important levels uses the same intuition as the Indyk-Woodruff sketch~\cite{IndykW05}. 
The main observation is that each $\beta$-important level must have either large size or large contribution (or both). 
If the $\beta$-important level has large contribution but small size, then its elements will immediately be recognized as a heavy-hitter. 
Otherwise, if the $\beta$-important level has large size, then a large number of these coordinates will be subsampled and ultimately become heavy-hitters at some level $i$ in which $\Theta\left(\frac{1}{\eps^2}\right)$ of these coordinates are subsampled. 
The size $b_i$ can then approximately recovered by rescaling by the sampling probability, though additional care must be used to formalize this argument, e.g., by randomizing the boundaries of the level sets.

\section{Empirical Evaluations}
\applab{app:exp}
In this section, we evaluate the performance of our algorithm on both synthetic and real-world dataset.

\paragraph{Synthetic data.} 
We construct a synthetic stream as follows. 
We first generate an ordered list $s_1$ of $\frac{m}{4}$ numbers starting from $2$, i.e., $s_1=\left\{2, 3, 4,\ldots\frac{m}{4} + 1\right\}$. 
We then generate a random stream $s_2$ of size $\frac{m}{2}-\frac{m}{1000}$ from a universe of $n - \frac{m}{2} -1$ by drawing each element uniformly at random. 
That is, $x\sim U\left(\frac{m}{2} + 2, \frac{m}{2} + 3, ..., n\right)$ for each $x\in s_2$. 
We combine these three streams $S = s_1\circ s_1\circ s_2$, where $\circ$ denotes the concatenation of the streams. 
Finally, we fix the last $\frac{m}{1000}$ fraction of the stream to be $1$. 
Thus, we have a stream $S$ of length $m$ on a universe of size $n$ and for sufficiently large $m$, the stream contains a single $L_2$ heavy hitter (the element $1$). 
We run experiments on $m\in\{2^{10},2^{11},2^{12},2^{13},2^{14},2^{15}\}$. 
Moreover, we run experiments on both $W=m$ so that the window consists of the entire stream and $W=\frac{m}{2}$ so that the active elements are the latter half of the stream. 

\paragraph{CAIDA Anonymized Internet Traces 2019 Dataset.} 
For real-world data we use the ``Equinix-nyc-2019'' dataset from the Center for Applied Internet Data Analysis (CAIDA), which is collected by a monitor in New York City that is connected to an OC192 backbone link (9953 Mbps) of a Tier 1 Internet Service Provider (ISP) between New York, NY and Sao Paulo, Brazil. 
The infrastructure consists of 2 physical machines that each have a single Endace 6.2 DAG network monitoring card that is connected to a single direction of the bi-directional backbone link. 
The source IP addresses \texttt{src} are used as the input to our experiments.

\paragraph{Implementation.} 
All algorithms are implemented in Python 3.8.3 and are carried out on Intel Xeon Gold 6226 CPU and Tesla V100 16GB GPU. 
We test our algorithm on various normalization functions such as $L_p$ or top-$k$. 
We focus on the relative error of each algorithm, comparing the performance of our algorithm to both uniform sampling the stream with $0.1$ sampling rate and uniform sampling the universe with $0.1$ sampling rate when possible. 
The results are averaged over the number of rows in the sketch and we do not consider the time performance of the algorithms, which we consider beyond the scope of our paper. 

\paragraph{Results.} 
In \figref{fig:synth} and \figref{fig:synth:half}, we show how the various norm estimation errors perform on our synthetic dataset as the stream length changes, both for $W=m$ and $W=\frac{m}{2}$, across $m\in\{2^{10},2^{11},2^{12},2^{13},2^{14},2^{15}\}$. 
We observe that our algorithm consistently performs the best and although uniformly sampling from the universe performs poorly, uniformly sampling from the stream performs surprisingly well for smaller stream lengths. 
This is because with such a large sampling rate, the uniform sampling algorithms essentially use linear space. 
Nevertheless, our algorithm demonstrates superior performance compared with these baselines. 

In \figref{fig:caida}, we compare the various norm estimation errors perform on the CAIDA dataset for $W=m$, across $m\in\{2^{10},2^{11},2^{12},2^{13},2^{14},2^{15}\}$. 
Because the universe consists of all possible source IP addresses, it is not feasible to perform uniform sampling from the universe for the CAIDA dataset. 
However, our algorithm again exhibits superior performance compared with uniform sampling from the stream. 

\begin{figure*}[!htb]
\centering
    \begin{subfigure}[t]{0.3\textwidth}
		\centering
		\includegraphics[width = 0.8\textwidth]{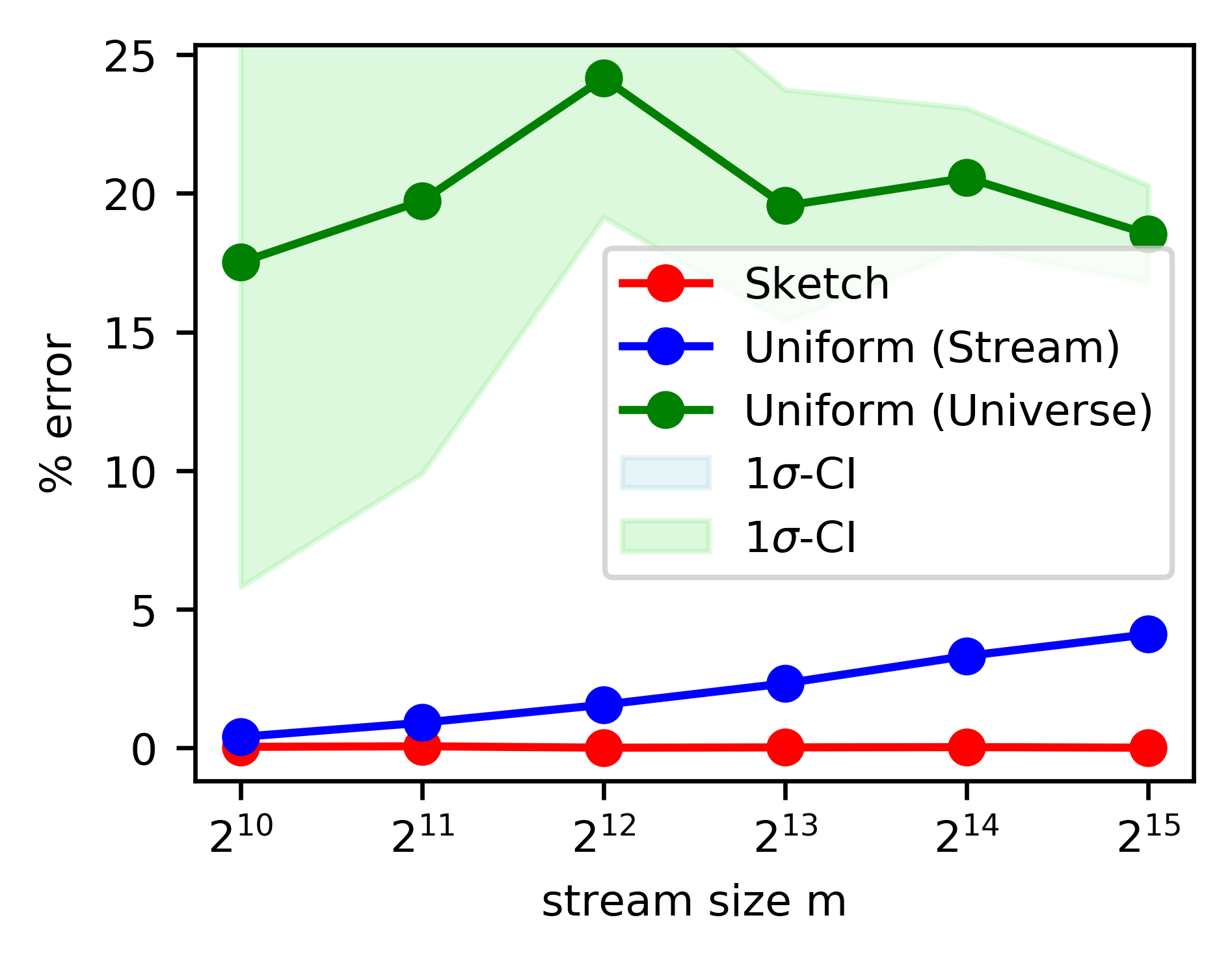}
    \caption{Synthetic data, $W=m$}
    \figlab{fig:synth}
    \end{subfigure}
    \begin{subfigure}[t]{0.3\textwidth}
		\centering
		\includegraphics[width = 0.8\textwidth]{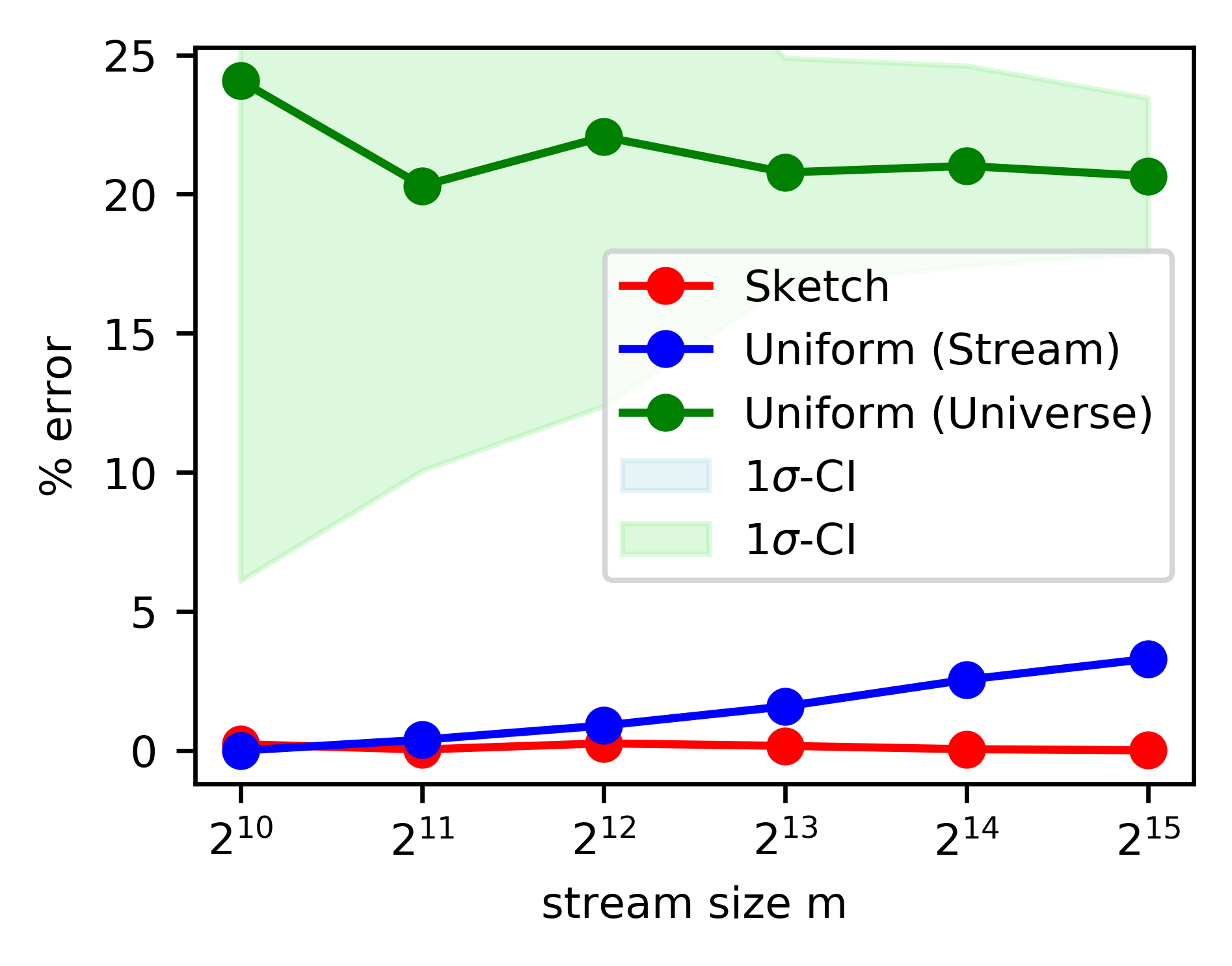}
    \caption{Synthetic data, $W=m/2$}
    \figlab{fig:synth:half}
    \end{subfigure}
		\begin{subfigure}[t]{0.3\textwidth}
		\centering
		\includegraphics[width = 0.8\textwidth]{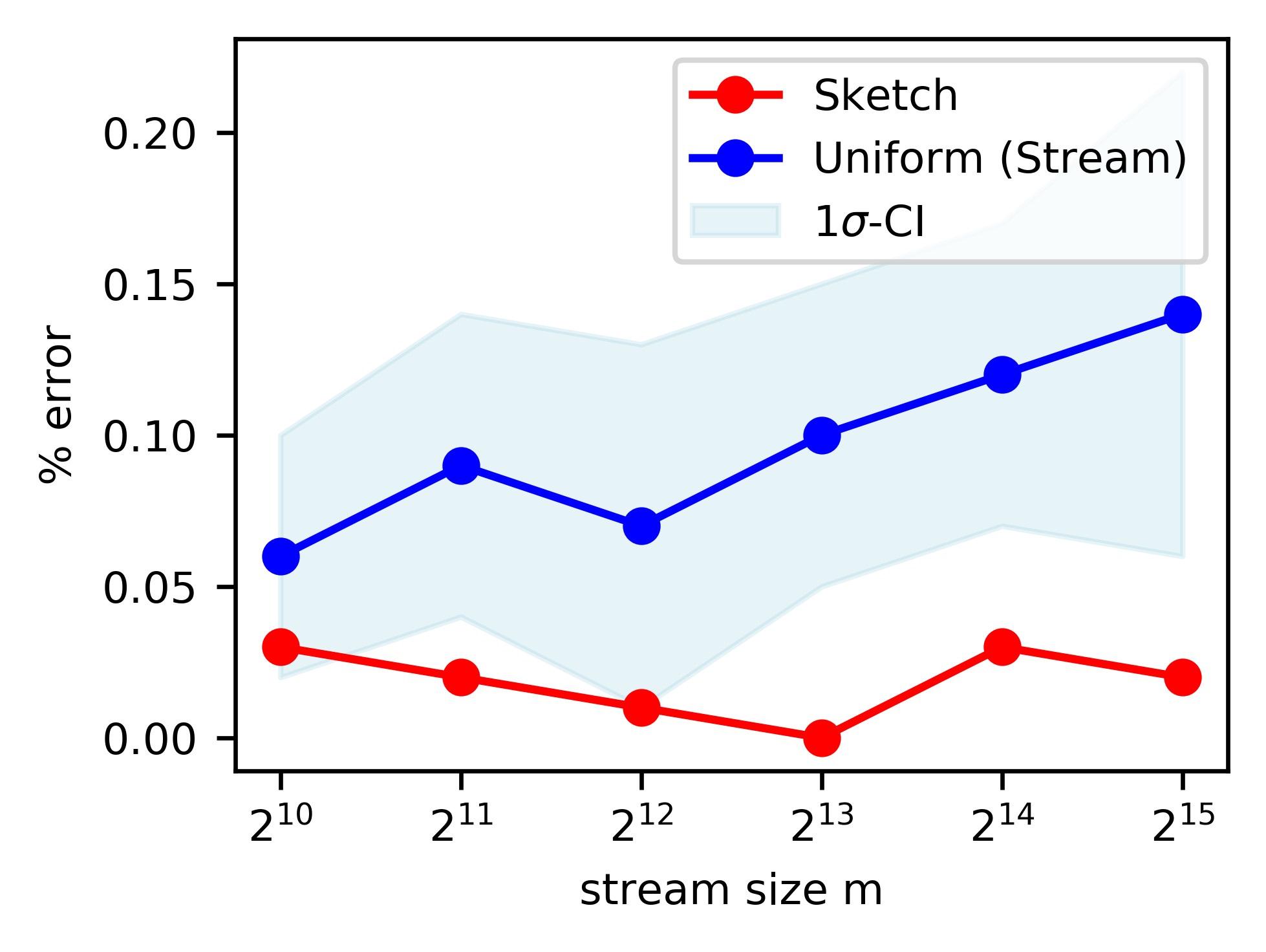}
    \caption{CAIDA}
		\figlab{fig:caida}
    \end{subfigure}
\caption{Relative error as a function of stream length}
\figlab{fig:exp}
\end{figure*}

\end{document}